\newtheorem{theorem}{Theorem}
\newtheorem{definition}{Definition}
\newtheorem{lemma}{Lemma}
\newtheorem{proposition}{Proposition}
\newtheorem{conjecture}{Conjecture}
\newtheorem{example}{Example}
\newtheorem{corollary}{Corollary}
\def\bcj{\begin{conjecture}}
	\def\ecj{\end{conjecture}}
\def\bcr{\begin{corollary}}
	\def\ecr{\end{corollary}}
\def\bd{\begin{definition}}
	\def\ed{\end{definition}}
\def\bea{\begin{eqnarray}}
	\def\eea{\end{eqnarray}}
\def\bem{\begin{enumerate}}
	\def\eem{\end{enumerate}}
\def\bex{\begin{example}}
	\def\eex{\end{example}}
\def\bim{\begin{itemize}}
	\def\eim{\end{itemize}}
\def\bl{\begin{lemma}}
	\def\el{\end{lemma}}
\def\bma{\begin{bmatrix}}
	\def\ema{\end{bmatrix}}
\def\bpf{\begin{proof}}
	\def\epf{\end{proof}}
\def\bpp{\begin{proposition}}
	\def\epp{\end{proposition}}
\def\bqu{\begin{question}}
	\def\equ{\end{question}}
\def\br{\begin{remark}}
	\def\er{\end{remark}}
\def\bt{\begin{theorem}}
	\def\et{\end{theorem}}
\def\squareforqed{\hbox{\rlap{$\sqcap$}$\sqcup$}}
\def\qed{\ifmmode\squareforqed\else{\unskip\nobreak\hfil
		\penalty50\hskip1em\null\nobreak\hfil\squareforqed
		\parfillskip=0pt\finalhyphendemerits=0\endgraf}\fi}
\def\endenv{\ifmmode\;\else{\unskip\nobreak\hfil
		\penalty50\hskip1em\null\nobreak\hfil\;
		\parfillskip=0pt\finalhyphendemerits=0\endgraf}\fi}
\newenvironment{proof}{\noindent \textbf{{Proof.~} }}{\qed}
\def\Dbar{\leavevmode\lower.6ex\hbox to 0pt
	{\hskip-.23ex\accent"16\hss}D}
\def\url@leostyle{%
	\@ifundefined{selectfont}{\def\UrlFont{\sf}}{\def\UrlFont{\small\ttfamily}}}
\def\bcj{\begin{conjecture}}
	\def\ecj{\end{conjecture}}
\def\bcr{\begin{corollary}}
	\def\ecr{\end{corollary}}
\def\bd{\begin{definition}}
	\def\ed{\end{definition}}
\def\bea{\begin{eqnarray}}
	\def\eea{\end{eqnarray}}
\def\bem{\begin{enumerate}}
	\def\eem{\end{enumerate}}
\def\bex{\begin{example}}
	\def\eex{\end{example}}
\def\bim{\begin{itemize}}
	\def\eim{\end{itemize}}
\def\bl{\begin{lemma}}
	\def\el{\end{lemma}}
\def\bpf{\begin{proof}}
	\def\epf{\end{proof}}
\def\bpp{\begin{proposition}}
	\def\epp{\end{proposition}}
\def\bqu{\begin{question}}
	\def\equ{\end{question}}
\def\br{\begin{remark}}
	\def\er{\end{remark}}
\def\bt{\begin{theorem}}
	\def\et{\end{theorem}}
\def\btb{\begin{tabular}}
	\def\etb{\end{tabular}}
	\newcommand{\nc}{\newcommand}
	\nc{\bbA}{\mathbb{A}} \nc{\bbB}{\mathbb{B}} \nc{\bbC}{\mathbb{C}}
	\nc{\bbD}{\mathbb{D}} \nc{\bbE}{\mathbb{E}} \nc{\bbF}{\mathbb{F}}
	\nc{\bbG}{\mathbb{G}} \nc{\bbH}{\mathbb{H}} \nc{\bbI}{\mathbb{I}}
	\nc{\bbJ}{\mathbb{J}} \nc{\bbK}{\mathbb{K}} \nc{\bbL}{\mathbb{L}}
	\nc{\bbM}{\mathbb{M}} \nc{\bbN}{\mathbb{N}} \nc{\bbO}{\mathbb{O}}
	\nc{\bbP}{\mathbb{P}} \nc{\bbQ}{\mathbb{Q}} \nc{\bbR}{\mathbb{R}}
	\nc{\bbS}{\mathbb{S}} \nc{\bbT}{\mathbb{T}} \nc{\bbU}{\mathbb{U}}
	\nc{\bbV}{\mathbb{V}} \nc{\bbW}{\mathbb{W}} \nc{\bbX}{\mathbb{X}}
	\nc{\bbZ}{\mathbb{Z}}
	\nc{\bA}{{\bf A}} \nc{\bB}{{\bf B}} \nc{\bC}{{\bf C}}
	\nc{\bD}{{\bf D}} \nc{\bE}{{\bf E}} \nc{\bF}{{\bf F}}
	\nc{\bG}{{\bf G}} \nc{\bH}{{\bf H}} \nc{\bI}{{\bf I}}
	\nc{\bJ}{{\bf J}} \nc{\bK}{{\bf K}} \nc{\bL}{{\bf L}}
	\nc{\bM}{{\bf M}} \nc{\bN}{{\bf N}} \nc{\bO}{{\bf O}}
	\nc{\bP}{{\bf P}} \nc{\bQ}{{\bf Q}} \nc{\bR}{{\bf R}}
	\nc{\bS}{{\bf S}} \nc{\bT}{{\bf T}} \nc{\bU}{{\bf U}}
	\nc{\bV}{{\bf V}} \nc{\bW}{{\bf W}} \nc{\bX}{{\bf X}}
	\nc{\ba}{{\bf a}} \nc{\be}{{\bf e}} \nc{\bu}{{\bf u}}
	\nc{\brr}{{\bf r}}
	\nc{\cA}{{\cal A}} \nc{\cB}{{\cal B}} \nc{\cC}{{\cal C}}
	\nc{\cD}{{\cal D}} \nc{\cE}{{\cal E}} \nc{\cF}{{\cal F}}
	\nc{\cG}{{\cal G}} \nc{\cH}{{\cal H}} \nc{\cI}{{\cal I}}
	\nc{\cJ}{{\cal J}} \nc{\cK}{{\cal K}} \nc{\cL}{{\cal L}}
	\nc{\cM}{{\cal M}} \nc{\cN}{{\cal N}} \nc{\cO}{{\cal O}}
	\nc{\cP}{{\cal P}} \nc{\cQ}{{\cal Q}} \nc{\cR}{{\cal R}}
	\nc{\cS}{{\cal S}} \nc{\cT}{{\cal T}} \nc{\cU}{{\cal U}}
	\nc{\cV}{{\cal V}} \nc{\cW}{{\cal W}} \nc{\cX}{{\cal X}}
	\nc{\cZ}{{\cal Z}}
	\nc{\hA}{{\hat{A}}} \nc{\hB}{{\hat{B}}} \nc{\hC}{{\hat{C}}}
	\nc{\hD}{{\hat{D}}} \nc{\hE}{{\hat{E}}} \nc{\hF}{{\hat{F}}}
	\nc{\hG}{{\hat{G}}} \nc{\hH}{{\hat{H}}} \nc{\hI}{{\hat{I}}}
	\nc{\hJ}{{\hat{J}}} \nc{\hK}{{\hat{K}}} \nc{\hL}{{\hat{L}}}
	\nc{\hM}{{\hat{M}}} \nc{\hN}{{\hat{N}}} \nc{\hO}{{\hat{O}}}
	\nc{\hP}{{\hat{P}}} \nc{\hR}{{\hat{R}}} \nc{\hS}{{\hat{S}}}
	\nc{\hT}{{\hat{T}}} \nc{\hU}{{\hat{U}}} \nc{\hV}{{\hat{V}}}
	\nc{\hW}{{\hat{W}}} \nc{\hX}{{\hat{X}}} \nc{\hZ}{{\hat{Z}}}
	\nc{\hn}{{\hat{n}}}
	\def\dim{\mathop{\rm Dim}}
	\def\rank{\mathop{\rm rank}}
	\def\tr{\mathop{\rm Tr}}
	\newcommand{\ket}[1]{|#1\rangle}
	\newcommand{\braket}[2]{\langle#1|#2\rangle}
	\def \qed {\hfill \vrule height7pt width 7pt depth 0pt}
\def\@email#1#2{%
 \endgroup
 \patchcmd{\titleblock@produce}
  {\frontmatter@RRAPformat}
  {\frontmatter@RRAPformat{\produce@RRAP{*#1\href{mailto:#2}{#2}}}\frontmatter@RRAPformat}
  {}{}
}%
\begin{document}

\preprint{AIP/123-QED}

\title[Unextendible and strongly uncompletable product bases]{Unextendible and Strongly Uncompletable Product Bases\\}

\author{Xiao-Fan Zhen}

 \affiliation{
School of Mathematical Sciences, Hebei Normal University, Shijiazhuang, 050024, China}
\affiliation{
State Key Laboratory of Networking and Switching Technology, Beijing University of Posts and Telecommunications, Beijing, 100876, China}

\author{Hui-Juan Zuo}%
 \email{Author to whom correspondence should be addressed: huijuanzuo@163.com}
 
\affiliation{
School of Mathematical Sciences, Hebei Normal University, Shijiazhuang, 050024, China}
\affiliation{
Hebei Key Laboratory of Computational Mathematics and Applications, Shijiazhuang, 050024, China}
\affiliation{
Hebei International Joint Research Center for Mathematics and Interdisciplinary Science, Shijiazhuang, 050024, China}
%

\author{Fei Shi}
\affiliation{%
Department of Computer Science, School of Computing and Data Science, University of Hong Kong, Hong Kong, 999077, China
}%

\author{Shao-Ming Fei}
\affiliation{School of Mathematical Sciences, Capital Normal University, Beijing, 10048, China}

\date{\today}

\begin{abstract}
In 2003, DiVincenzo {\it et al}. put forward the question that whether there exists an unextendible product basis (UPB) which is an uncompletable product basis (UCPB) in every bipartition
[\href{https://link.springer.com/article/10.1007/s00220-003-0877-6}{DiVincenzo {\it et al}. Commun. Math. Phys. \textbf{238}, 379-410(2003)}]. Recently, Shi {\it et al}. presented a UPB in tripartite systems that is also a strongly uncompletable product basis (SUCPB) in every bipartition [\href{https://iopscience.iop.org/article/10.1088/1367-2630/ac9e14}{Shi {\it et al}. New J. Phys. \textbf{24}, 113-025 (2022)}]. However, whether there exist UPBs that are SUCPBs in only one or two bipartitions remains unknown. We provide a sufficient condition for the existence of SUCPBs based on a quasi U-tile structure. We analyze all possible cases about the relationship between UPBs and SUCPBs in tripartite systems. In particular, we construct a UPB with smaller size $d^3-3d^2+1$ in $\mathbb{C}^{d}\otimes \mathbb{C}^{d}\otimes \mathbb{C}^{d}$, which is an SUCPB in every bipartition and has a smaller cardinality than the existing one.
\end{abstract}

\maketitle

\section{INTRODUCTION}
Quantum nonlocality without entanglement has attracted much attention recently, \cite{  Bennett1999,Walgate2002,Wangyl2015,Zhangzc2015,Xu2016,Zhangzc2016,
Wangyl2017,Zhangzc2017,Halder2018,Jiang2020,Zuo2022,Zhu2022,Zhen2022,Cao2023,Li2023} It has been verified that a set of orthogonal multipartite product (separable) states may be not perfectly distinguished by local operations and classical communication (LOCC).
Such local distinguishability and the smallest number of a set of the locally indistinguishable orthogonal product states have significant applications in
quantum key distributions \cite{Guo2001} and quantum secret sharing. \cite{Rahaman2015,Wang2017,Yang2015}

The construction of locally indistinguishable orthogonal product states is tightly related to the study on unextendible product basis (UPB). A UPB is an incomplete orthogonal product basis whose complementary subspace contains no product states. The entangled state on the complementary subspace to a UPB gives rise a bound entanglement (BE) state. \cite{Augusiak,Chen2011,Johnston2013PRA,Chen,Chen2016,Demianowicz,Nawareg}

In 2003, DiVincenzo {\it et al}. put forward the concept of a strongly uncompletable product basis (SUCPB). \cite{DiVincenzo2003CPM} When the dimension $d$ of the space is even, they constructed a GenTiles$1$ UPB in $\mathbb{C}^d\otimes \mathbb{C}^d$ and a general construction called a GenTiles$2$ UPB in $\mathbb{C}^{d_1}\otimes \mathbb{C}^{d_2}$. They also proposed two open problems: (1) Is there a UPB that is also an uncompletable product basis (UCPB) in every bipartition? (2) Is there a UPB that is still a UPB in every bipartition?
The structure and feature of UPBs have attracted great attention over the past two decades, but less is known for SUCPBs. In 2020, Shi {\it et al}. \cite{ShiPRAUPB} proposed a necessary and sufficient condition for the existence of UPBs in $\mathbb{C}^{d_1}\otimes \mathbb{C}^{d_2}$, that is, a tile structure corresponds to a UPB if and only if the tile structure is a U-tile structure.

Recently, Shi {\it et al}. \cite{Shi2022NJP} showed that there are some UPBs that are SUCPBs in every bipartition for tripartite systems. For multipartite systems containing qubit subsystems, the existence of UPBs with different sizes is of interest. \cite{Bennett1999UPB,DiVincenzo2003CPM,Alon2001,Bravyi,Feng2006,Johnston2013,
Johnston2014,Chen2013,Chen2015,Chen2018JPA,Chen2019QIP,Wang2020QIP,Wang2019} Bennett {\it et al}. \cite{Bennett1999UPB} constructed a Shifts UPB containing four product states in $\mathbb{C}^{2}\otimes \mathbb{C}^{2}\otimes \mathbb{C}^{2}$. Then, a GenShifts UPB with $n+1$ members was proposed by DiVincenzo {\it et al}. in $(\mathbb{C}^{2})^{\otimes n}$ \cite{DiVincenzo2003CPM} when $n$ is odd. By using 1-factorization of complete graphs, Feng presented the minimum size of $4$-qubit UPB containing $6$ product states. \cite{Feng2006} Next, Johnston \cite{Johnston2013} proved that the smallest UPB consists of $11$ states in $(\mathbb{C}^{2})^{\otimes 8}$ and $4k+4$ states in $(\mathbb{C}^{2})^{\otimes 4k}$ for $k\geq 3$. Furthermore, Johnston \cite{Johnston2014} analyzed a complete characterization of all four-qubit UPBs, including the minimal $6$-state UPB and the maximal $12$-state UPB. Wang and Chen \cite{Wang2020QIP} constructed a $7$-qubit UPB of size $10$ and an $8$-qubit UPB of size $18$. In 2021, Wang {\it et al}. \cite{Wang2019} discussed all possible UPBs of size $6$ and $9$ in $\mathbb{C}^{2}\otimes \mathbb{C}^{2}\otimes \mathbb{C}^{4}$. Agrawal {\it et al}. \cite{AgrawalPRA} provided a three-qutrit UPB of size $19$. In Ref.~\cite{ShiJPA}, Shi {\it et al}. generalized the structure with different large sizes in $\mathbb{C}^{d_A}\otimes \mathbb{C}^{d_B}\otimes \mathbb{C}^{d_C}$ and $\mathbb{C}^{d_A}\otimes \mathbb{C}^{d_B}\otimes \mathbb{C}^{d_C}\otimes \mathbb{C}^{d_D}$. In Ref.~\cite{Shi2022NJP}, Shi {\it et al}. proved that these UPBs with strong nonlocality are SUCPBs in every bipartitions. In 2022, Che {\it et al}. \cite{Che2022} provided a strongly nonlocal UPB of size $(d-1)^3+2d+5$ in $\mathbb{C}^d\otimes\mathbb{C}^d\otimes\mathbb{C}^d$ and generalized it to arbitrary tripartite systems.

In the manuscript, we put forward a geometric structure to illustrate SUCPBs. The sufficient condition on the existence of SUCPBs is given by a quasi U-tile structure. We investigate an important problem on whether there exist UPBs that are SUCPBs associated with only one or two bipartitions. We provide two types of UPBs in $\mathbb{C}^d\otimes \mathbb{C}^2\otimes \mathbb{C}^2$ that are SUCPBs in at most one bipartition. Then, we construct a general UPB in $\mathbb{C}^{d}\otimes \mathbb{C}^{d}\otimes \mathbb{C}^{2}$ based on a TILES UPB that is an SUCPB in two bipartitions. For three-qudit systems, we consider UPBs with fewer cardinality that are SUCPBs in all bipartions by optimizing the GenTiles$1$ UPB to construct a UPB with smaller size $d^3-3d^2+1$, which is an SUCPB in every bipartition. We exhibit a systematic analysis on UPBs and SUCPBs in any bipartition.

\section{Quasi U-TILE STRUCTURE}
We first introduce some concepts, facts and notations. For simplicity all the states are assumed to be not normalized. Denote $\{\ket{i}\}_{i\in\bbZ_{d_1}}$ ($\{\ket{j}\}_{j\in\bbZ_{d_2}}$) the computational basis of $\bbC^{d_1}$ ($\bbC^{d_2}$). Then any bipartite state $\ket{\psi}\in\bbC^{d_1}\otimes \bbC^{d_2}$ can be expressed as $\ket{\psi}=\sum_{i\in\bbZ_{d_1}}\sum_{j\in\bbZ_{d_2}}a_{i,j}\ket{i}\ket{j}$, where $\bbZ_d=\{0, 1,\cdots, d-1\}$. A given state $\ket{\psi}$ corresponds to a matrix $M_{d_1\times d_2}=(a_{i,j})_{i\in\bbZ_{d_1},j\in\bbZ_{d_2}}$. $\ket{\psi}$ is a product state if and only if $\rank(M)=1$. For any two states $\ket{\psi_1}$ and $\ket{\psi_2}$ corresponding to matrices $M_1$ and $M_2$, respectively, one has $\braket{\psi_1}{\psi_2}=\tr(M_1^{\dagger}M_2)$. In the following we denote $\text{Sum}(M)$ the sum of all the elements of a matrix $M$.

Let $\cH= \otimes_{i=1}^{n} \mathcal H_i$ be the Hilbert space of an $n$-partite quantum system. An \emph{orthogonal product set (OPS)} $\mathcal S$ is a set of orthogonal product states spanning a proper subspace $\mathcal H_{\mathcal S}$ of $\cH$. \cite{DiVincenzo2003CPM} An \emph{uncompletable product basis (UCPB)} is an OPS whose complementary subspace $\mathcal H_{\mathcal S}^{\bot}$ contains fewer mutually orthogonal product states than the dimension. An \emph{unextendible product basis (UPB)} is a UCPB for which $\mathcal H_{\mathcal S}^{\bot}$ contains no product states. A \emph{strongly uncompletable product basis (SUCPB)} is an OPS spanning a subspace $\mathcal H_{\mathcal S}$ in a locally extended Hilbert space $(\mathcal H_{ext}= \mathcal H_{\mathcal S} \oplus \mathcal H_{\mathcal S}^{\bot})$ such that for all $\mathcal H_{ext}$ the subspace $\mathcal H_{\mathcal S}^{\bot}$ contains fewer mutually orthogonal product states than the dimension. The inclusive relationship among UPBs, SUCPBs and UCPBs is illustrated in Fig.~\ref{fig:rela}.
\begin{figure}[htbp]
 \centering
 \includegraphics[width=5.5cm]{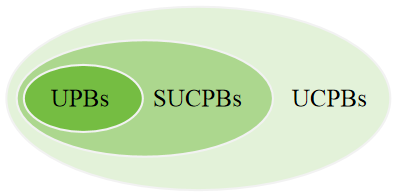}
 \caption{ The Venn diagram of the inclusive relationship among the three concepts: UPBs, SUCPBs and UCPBs.}  \label{fig:rela}
\end{figure}

The tile structure provides an elegant tool for the construction of UPBs. An $m\times n$ tile structure $\cT=\cup_{i=1}^n{t_i}$ is an $m\times n$ rectangle consisting of $n$ tiles, with each tile a subrectangle having row indices $\{r_0, r_1,\cdots,r_{p_i-1}\}$ and column indices $\{c_0, c_1,\cdots , c_{q_i-1}\}$. The row indices or column indices may not be continuous. If $T$ is a subrectangle of $\cT$ consisting of $k$ tiles, $2\leq k\leq n$, then $T$ is called a \emph{special rectangle}. For example, two tile structures $\cT_\mathcal S=\cup_{i=1}^6t_i$ and $\cT_\mathcal U=\cup_{j=1}^5l_j $ are showed in Fig.~\ref{fig:quasi}. The tile $t_3$ of $\cT_\mathcal S$ has row indices $\{0\}$ and column indices $\{0,1,2,3\}$. $T=t_5\cup t_6$ is a special rectangle of $\cT_\mathcal S$.
\begin{figure}[htbp]
 \centering
 \includegraphics[width=7cm]{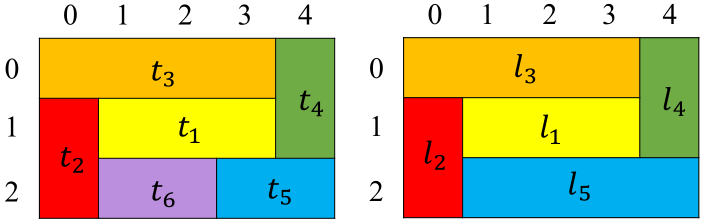}
 \caption{The quasi U-tile structure  $\mathcal T_{\mathcal S}=\cup _{i=1}^{6} t_{i}$ and the U-tile structure $\mathcal T_{\mathcal U}=\cup _{j=1}^{5} l_{j}$.}  \label{fig:quasi}
\end{figure}

Next, we introduce a U-tile structure and recall the relationship between a UPB and a U-tile structure. Given a tile structure $\mathcal T$, if any special rectangle $T$ of $\cT$ cannot be partitioned into two smaller special rectangles or tiles of $\cT$, then $\mathcal T$ is said to be a \emph{U-tile structure}. \cite{ShiPRAUPB} For example, the tile structure $\cT_\mathcal U$ in Fig.~\ref{fig:quasi} is a U-tile structure, because it has only one special rectangle $T=\cT$. However, $\cT_\mathcal S$ in Fig.~\ref{fig:quasi} is not a U-tile structure due to the special rectangle $t_5\cup t_6$.

A tile structure with $s$-tiles corresponds to a UPB of size $(mn-s+1)$ in $\mathbb{C}^m\otimes \mathbb{C}^n$ if and only if this tile structure is a U-tile structure \cite{ShiPRAUPB}.
Let $\mathcal S$ be an OPS in $\mathcal H= \otimes_{i=1}^{n} \mathcal H_i$. If all product states in $\mathcal H_{\mathcal S}^{\bot}$ cannot span $\mathcal H_{\mathcal S}^{\bot}$, then $\mathcal S$ is an SUCPB. \cite{Shi2022NJP}

Based on the tile structure, we consider a sufficient condition for an OPS to be an SUCPB. In order to illustrate the geometric characterization of the tile structure corresponding to an SUCPB, we introduce a quasi U-tile structure. We say that a tile structure $\mathcal T_{\mathcal S}=\cup _{i=1}^{n} t_{i}$ is called a \emph{quasi U-tile structure} if there exists a partition $\{t_i\}_{i=1}^n$ into $m$ subsets $\{l_j\}_{j=1}^m$ ($m\geq 5$), where $l_j=\cup_{s=1}^p t_{k_{s}}$, such that
\begin{enumerate}[(i)]
\item \label{item1} $l_j$ is a new tile which cannot be extended to a larger tile by adding other tiles except for $\cT_{\cS}$;
\item \label{item2}  The tiles $\{t_{k_{s}}\}_{s=1}^p$ have the same row indices or column indices;
\item \label{item3}  The new tile structure $\mathcal T_{\mathcal U}=\cup _{j=1}^{m} l_{j}$ is a U-tile structure.
\end{enumerate}

As an example, consider the tile structure $\mathcal T_{\mathcal S}=\cup _{i=1}^{6} t_{i}$ in Fig.~\ref{fig:quasi}. Let the new tiles $l_i=t_i$ for $1\leq i\leq 4$ and $l_5=t_5\cup t_6$. Then each $l_i$ satisfies the conditions (\ref{item1}) and (\ref{item2}). The new tile structure $\mathcal T_{\mathcal U}=\cup _{j=1}^{5} l_{j}$ is a U-tile structure by definition. Therefore, the tile structure $\mathcal T_{\mathcal S}=\cup _{i=1}^{6} t_{i}$ is a quasi U-tile structure.

\begin{lemma}\label{lem:quasi_sucpb} If a $d_1\times d_2$ tile structure with $n$-tiles is a quasi U-tile structure, then the quasi U-tile structure corresponds to an SUCPB of size $d_{1}d_{2}-n+1$ in $\mathbb{C}^{d_1}\otimes \mathbb{C}^{d_2}$.
\end{lemma}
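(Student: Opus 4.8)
The plan is to attach to $\mathcal T_{\mathcal S}$ the usual OPS $\mathcal S$: for a tile $t_i$ of size $p_i\times q_i$ one takes the $p_iq_i-1$ product states $\ket{u^{(i)}_a}\otimes\ket{v^{(i)}_b}$ with $(a,b)\neq(0,0)$, built from the discrete Fourier bases of the row span and the column span of $t_i$, and one adjoins the stopper $\ket{S}=\sum_{i,j}\ket{i}\ket{j}$. Identifying a vector of $\bbC^{d_1}\otimes\bbC^{d_2}$ with its $d_1\times d_2$ coefficient matrix, these states are mutually orthogonal (tiles are disjoint sub-rectangles, and inside a tile the Fourier basis is orthogonal), linearly independent, and span a space of dimension $d_1d_2-n+1$; hence $\mathcal S$ is an OPS of size $d_1d_2-n+1$. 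A matrix $M$ lies in $\mathcal H^{\bot}_{\mathcal S}$ iff $M$ is constant on each tile $t_i$ (orthogonality to the Fourier family of $t_i$ removes every direction of the $i$-th block except the uniform one, call it $\ket{w_i}$) and $\text{Sum}(M)=0$ (orthogonality to $\ket S$); thus $\dim\mathcal H^{\bot}_{\mathcal S}=n-1$. By the sufficient condition for SUCPBs recalled above, it then suffices to prove that the product matrices in $\mathcal H^{\bot}_{\mathcal S}$ do not span $\mathcal H^{\bot}_{\mathcal S}$.

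Next I would bring in the merged structure. By (iii), $\mathcal T_{\mathcal U}=\cup_{j=1}^m l_j$ is a U-tile structure, so by the theorem of Ref.~\cite{ShiPRAUPB} its associated OPS is a UPB; equivalently $\mathcal H^{\bot}_{\mathcal U}=\{M:\ M\ \text{constant on each }l_j,\ \text{Sum}(M)=0\}$, a space of dimension $m-1$, contains no nonzero product matrix. Since a $\mathcal T_{\mathcal U}$-block-constant matrix is a fortiori $\mathcal T_{\mathcal S}$-block-constant, we have $\mathcal H^{\bot}_{\mathcal U}\subseteq\mathcal H^{\bot}_{\mathcal S}$, and one may write the orthogonal splitting $\mathcal H^{\bot}_{\mathcal S}=\mathcal H^{\bot}_{\mathcal U}\oplus V$ with $\dim V=(n-1)-(m-1)=n-m$. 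A direct check shows $V=\bigoplus_{j}V_j$, where $V_j$ is the space of $\mathcal T_{\mathcal S}$-block-constant matrices supported inside the single super-tile $l_j$ with vanishing total sum; a super-tile built from $p_j$ tiles contributes $\dim V_j=p_j-1$, consistently with $\sum_j(p_j-1)=n-m$. The key claim is then: \emph{every product matrix in $\mathcal H^{\bot}_{\mathcal S}$ lies in $V$}, i.e. is supported inside one super-tile.

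To prove the key claim I would argue as follows. Let $M=\ket{\alpha}\bra{\beta}\neq0$ be a product matrix in $\mathcal H^{\bot}_{\mathcal S}$; its support is a combinatorial rectangle $A\times B$ with $A=\supp\alpha$, $B=\supp\beta$, and block-constancy forces $A\times B$ to be a union of whole tiles of $\mathcal T_{\mathcal S}$, with $\alpha$ constant on the rows of each such tile, $\beta$ constant on its columns, and all these tile values nonzero. Since $\text{Sum}(M)=\text{Sum}(\alpha)\,\text{Sum}(\beta)=0$, one factor has vanishing sum; say $\text{Sum}(\beta)=0$ (the other case is symmetric after transposing the whole picture, which sends a quasi U-tile structure to a quasi U-tile structure). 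Then $\beta$ is a nonzero, piecewise-constant, zero-sum function on the column-blocks covering $B$, so these blocks are not all equal and $A\times B$ is a special rectangle of $\mathcal T_{\mathcal S}$ that splits, along the column direction, into two smaller special rectangles or tiles. Now I would invoke (i) and (ii): such a partitionable special rectangle cannot straddle two or more super-tiles — if it did, the maximality of super-tiles in (i) together with the common-row/column structure of their sub-tiles in (ii) would force $A\times B$ to align with the super-tile boundaries, exhibiting a partitionable special rectangle of $\mathcal T_{\mathcal U}$ and contradicting that $\mathcal T_{\mathcal U}$ is a U-tile structure. Hence $A\times B\subseteq l_j$ for a single (necessarily multi-tile) super-tile $l_j$, so $M\in V_j\subseteq V$. (The degenerate case $A\times B=$ a single tile cannot occur, since a rank-one matrix constant and supported on one tile has nonzero sum unless it is $0$.) Granting the key claim, the product matrices in $\mathcal H^{\bot}_{\mathcal S}$ span a subspace of $V$, of dimension at most $n-m\le n-5<n-1=\dim\mathcal H^{\bot}_{\mathcal S}$ (here $m\ge5$ is used), so they cannot span $\mathcal H^{\bot}_{\mathcal S}$; by the criterion of Ref.~\cite{Shi2022NJP}, $\mathcal S$ is an SUCPB, of size $d_1d_2-n+1$ in $\bbC^{d_1}\otimes\bbC^{d_2}$. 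The main obstacle is exactly the combinatorial claim that partitionable special rectangles of $\mathcal T_{\mathcal S}$ are trapped inside single super-tiles; I expect to settle it by a careful case analysis of how a special rectangle can meet the tiles of a super-tile, using (i)--(iii), and, if needed, by extracting from the proof in Ref.~\cite{ShiPRAUPB} the precise description of the product states in the complement of a tile-structure OPS.
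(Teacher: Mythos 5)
Your proposal follows essentially the same route as the paper's proof: the same per-tile Fourier states plus the stopper, the same identification of the orthocomplement with zero-sum tile-constant matrices of dimension $n-1$, the same reduction to showing that every rank-one zero-sum matrix in it is supported inside a single merged tile $l_j$ (the paper's subspaces $O_j$ are exactly your $V_j$), and the same count $n-m\le n-5<n-1$ combined with the spanning criterion of Shi \emph{et al}. The combinatorial ``key claim'' you single out as the remaining obstacle is precisely the step the paper asserts directly from conditions (i)--(iii) without further argument, so your sketch (support is a union of whole tiles, one factor has zero sum, the rectangle splits along the level sets of that factor) is, if anything, more detailed than the published proof.
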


\begin{proof}
Consider a $d_1\times d_2$ quasi U-tile structure $\mathcal T_{\mathcal S}=\cup _{i=1}^{n} t_{i}$, where each tile $t_i$ has row indices $\{r_0,$ ~$r_1,\cdots , r_{p_i-1}\}$ and column indices $\{c_0, c_1,\cdots , c_{q_i-1}\}$. An OPS from the tile $t_i$ is
\begin{equation}
\begin{aligned}
\mathcal{A}_{i}=\{|\phi_{i}^{(k,l)}\rangle= (\sum_{e\in \mathbb Z_{p_i}}w_{p_i}^{ke}|r_e\rangle)(\sum_{e\in \mathbb Z_{q_i}}w_{q_i}^{le}|c_e\rangle)\mid (k,l)\in \bbZ_{p_i}\times \bbZ_{q_i}\},
\end{aligned}
\end{equation}
where $w_x=e^{\frac{2\pi i}{x}}$ for any $x$.
The set $\cB=\cup _{i=1}^{n} \cA_{i}$ is an OPB in $\bbC^{d_1}\otimes \bbC^{d_2}$. The ``stopper state" is defined as
\begin{equation}
|S\rangle = (\sum_{i\in \mathbb Z_{d_{1}}}|i\rangle)\otimes(\sum_{j\in \mathbb Z_{d_{2}}}|j\rangle).
\end{equation}
We claim that $\mathcal U_{\mathcal S}= \cup _{i=1}^{n}(\mathcal{A}_{i}\backslash \{|\phi_{i}^{(0,0)}\rangle\})\cup \{|S\rangle\}$ is an SUCPB of size $d_{1}d_{2}-n+1$.

Let $\cC_1=\cup _{i=1}^{n}(\mathcal{A}_{i}\backslash \{|\phi_{i}^{(0,0)}\rangle\})$ and $\cC_2=\cup_{i=1}^{n} \{|\phi_{i}^{(0,0)}\rangle\}$. Then $\cH_{\cC_1}\cup \cH_{\cC_2}=\cB$. Since $\cH_{\cC_1}\subset \cU_{\cS}$, $\cH_{\mathcal U_{\mathcal S}}^{\bot}\subset \cH_{\cC_1}^{\bot}=\cH_{\cC_2}$. Let $|\phi\rangle\in \mathcal H_{\mathcal U_{\mathcal S}}^{\bot}$ be any product state, $\ket{\phi}=\sum_{i=1}^na_i\ket{\phi_i(0,0)}$, where
$a_{i}\in \mathbb C$, $1\leq i\leq n$. The state $\ket{\phi}$ corresponds to a matrix $M_{d_1\times d_2}$, which has a similar structure to the quasi U-tile structure $\mathcal T_{\mathcal S}$, i.e., the position of the tile $t_{i}$ in a quasi U-tile structure $\mathcal T_{\mathcal S}$ corresponds exactly to the same entry of the matrix $M_{d_1\times d_2}$. For example, $\mathcal T_{\mathcal S}=\cup _{i=1}^{6} t_{i}$ is a quasi U-tile structure in Fig.~\ref{fig:quasi}. The state $\ket{\phi}$ corresponds to the following matrix
\begin{equation}
M_{3\times 5}=\begin{pmatrix}
a_3  &a_3  &a_3  &a_3  &a_4\\
a_2  &a_1  &a_1  &a_1  &a_4\\
a_2  &a_6  &a_6  &a_5  &a_5\\
\end{pmatrix},
\end{equation}
which has a similar structure to $\mathcal T_{\mathcal S}$.

Note that the stopper state $\ket{S}$ corresponds to a all-ones matrix $J$. Since $\braket{\phi}{S}=0$, we have the equation $\tr(M_{d_1\times d_2}^{\dagger}J)=0$. Hence, $\text{Sum}(M_{d_1\times d_2})=0$. Moreover, since $|\phi\rangle$ is a product state, we have $\rank(M_{d_1\times d_2})=1$.

By the definition  of quasi U-tile, there is a partition of $\{t_i\}_{i=1}^{n}$ into $m$ ($m\geq 5$) subsets $\{l_j\}_{j=1}^m$. Without loss of generality, we assume that
 \begin{equation}
\begin{aligned}
l_1&=t_{{k_0}+1}\cup t_{{k_0}+2}\cup \cdots \cup t_{k_1},\\
l_2&=t_{{k_1}+1}\cup t_{{k_1}+2}\cup \cdots \cup t_{k_2}, \\
&~~~~~~~~\cdots~\cdots~\cdots\\
l_m&=t_{{k_{m-1}}+1}\cup t_{{k_m-1}+2}\cup \cdots \cup t_{k_m},
\end{aligned}
\end{equation}
where $k_0=0$ and $k_m=n$.

From the conditions (\ref{item1})-(\ref{item3}), if $\text{Sum}(M_{d_1\times d_2})=0$ and $\rank(M_{d_1\times d_2})=1$, there are only
$m$ cases for the matrix $M_{d_1\times d_2}$, that is,
\begin{equation}
\begin{aligned}
\sum \limits_{s=k_{0}+1}^{k_1}p_{s}q_{s}\cdot a_{s}&=0~\text{and} \ a_s=0  \ \text{for} \ s\notin\{k_{0}+1, \ldots , k_1\},\\
\sum \limits_{s=k_{1}+1}^{k_2}p_{s}q_{s}\cdot a_{s}&=0~\text{and} \ a_s=0  \ \text{for} \ s\notin\{k_{1}+1, \ldots , k_2\},\\
&~~~~~~~~\cdots~\cdots~\cdots\\
\sum \limits_{s=k_{m-1}+1}^{k_m}p_{s}q_{s}\cdot a_{s}&=0~\text{and} \ a_s=0  \ \text{for} \ s\notin\{k_{m-1}+1, \ldots , k_m\}.
\end{aligned}
\end{equation}
It means that the product state $|\phi\rangle$ must belong to one of the $m$  subspaces
\begin{equation}
\begin{aligned}
O_{1}=&\left\{\sum \limits_{s=k_{0}+1}^{k_1}a_{s}|\phi_{s}^{(0,0)}\rangle\mid \sum \limits_{s=k_{0}+1}^{k_1}p_{s}q_{s}\cdot a_{s}=0\right\},\\
O_{2}=&\left\{\sum \limits_{s=k_{1}+1}^{k_2}a_{s}|\phi_{s}^{(0,0)}\rangle\mid \sum \limits_{s=k_{1}+1}^{k_2}p_{s}q_{s}\cdot a_{s}=0\right\},\\
&~~~~~~~~~~~~~~~~\cdots~\cdots~\cdots\\
O_{m}=&\left\{\sum \limits_{s=k_{m-1}+1}^{k_m}a_{s}|\phi_{s}^{(0,0)}\rangle\mid \sum \limits_{s=k_{m-1}+1}^{k_m}p_{s}q_{s}\cdot a_{s}=0\right\},
\end{aligned}
\end{equation}
where $\dim (O_{1})=k_{1}-k_{0}-1$, $\dim (O_{2})=k_{2}-k_{1}-1$, $\cdots$, $\dim (O_{m})=k_{m}-k_{m-1}-1$. Since the subspaces $\{O_{i}\}_{i=1}^{m}$ are mutually orthogonal, $\dim (O_{1}+O_{2}+\cdots +O_{m})=k_{m}-k_{0}-m=n-m$. Moreover, since $\dim (\mathcal H_{\mathcal U_{\mathcal S}}^{\bot})=n-1$, one has $n-m\leq n-5<n-1$. Hence, all product states in $\cH_{\mathcal U_{\mathcal S}}^{\bot}$ cannot span $\mathcal H_{\mathcal U_{\mathcal S}}^{\bot}$. Therefore, $\mathcal U_{\mathcal S}$ is an SUCPB of size $d_{1}d_{2}-n+1$ by using conclusion in Ref. \cite{Shi2022NJP}. This completes the proof.
\end{proof}

\begin{example}
Since $\mathcal T_{\mathcal S}=\cup _{i=1}^{6} t_{i}$ in Fig.~\ref{fig:quasi} is a $3\times 5$ quasi U-tile structure, we have an SUCPB of size $10$ in $\mathbb{C}^{3}\otimes \mathbb{C}^{5}$ by Lemma~\ref{lem:quasi_sucpb} as follows:
\begin{equation}\label{eq:35}
\begin{aligned}
&|\phi_{1}^{(1)}\rangle=|1\rangle(|1\rangle+\omega_{3}^{1}|2\rangle +\omega_{3}^{2}|3\rangle),~~~~~|\phi_{2}^{(1)}\rangle=|1-2\rangle|0\rangle,\\
&|\phi_{1}^{(2)}\rangle=|1\rangle(|1\rangle+\omega_{3}^{2}|2\rangle +\omega_{3}^{1}|3\rangle),~~~~~|\phi_{4}^{(1)}\rangle=|0-1\rangle|4\rangle,\\
&|\phi_{3}^{(1)}\rangle=|0\rangle(|0\rangle+|1\rangle -|2\rangle-|3\rangle),~~~~|\phi_{5}^{(1)}\rangle=|2\rangle|3-4\rangle,\\
&|\phi_{3}^{(2)}\rangle=|0\rangle(|0\rangle-|1\rangle +|2\rangle-|3\rangle),~~~~|\phi_{6}^{(1)}\rangle=|2\rangle|1-2\rangle,\\
&|\phi_{3}^{(3)}\rangle=|0\rangle(|0\rangle-|1\rangle -|2\rangle+|3\rangle),~~~~|S\rangle=|0+1+2\rangle|0+1+2+3+4\rangle.
\end{aligned}
\end{equation}
There is only one product state $|\phi \rangle=|2\rangle|1+2-3-4\rangle$ in the complementary subspace of the SUCPB.
\end{example}

To prove that an orthogonal product set ${\mathcal S}$ is an SUCPB, we only need to show that the tile structure corresponding to ${\mathcal S}$ is a quasi U-tile structure by Lemma~\ref{lem:quasi_sucpb}.

A U-tile structure in bipartite systems can be generalized to multipartite systems Ref.~\cite{ShiPRAUPB}. A $d_1\times d_2\times d_3$ tile structure $\cC=\cup_{i=1}^n{t_i}$  is a $d_1\times d_2\times d_3$ cube consisting of $n$ tiles, where each tile $t_i$ is a subcube with length indices $\mathcal L_i$, width indices $\mathcal W_i$ and height indices $\mathcal H_i$. A \emph{special cube} $C$ of $\cC$ consists of at least two tiles. Similarly, if for any special cube $C$ of $\cC=\cup_{i=1}^n{t_i}$ ($n\geq 5$), $C\neq R\cup S$, where $R$ ($S$) is a special cube or a tile, then $\mathcal C$ is a U-tile structure. A $d_1\times d_2\times d_3$ U-tile structure with $n$-tiles corresponds to a UPB of size $d_{1}d_{2}d_{3}-n+1$ in $\mathbb{C}^{d_{1}}\otimes \mathbb{C}^{d_{2}}\otimes \mathbb{C}^{d_{3}}$.

\section{UPBS THAT ARE SUCPBS IN TWO
BIPARTITIONS}\label{sec:sucpbtwo}

We first present a UPB in $\mathbb{C}^{3}\otimes \mathbb{C}^{3}\otimes \mathbb{C}^{2}$ from the U-tile structure in Fig.~\ref{fig:332} and generalize the structure to $\mathbb{C}^{d}\otimes \mathbb{C}^{d}\otimes \mathbb{C}^{2}$. Denote $A$, $B$, and $C$ the first, second and third subsystems of a tripartite system. Since an OPS in $\mathbb{C}^{2}\otimes \mathbb{C}^{n}$ can be extended to an OPB, \cite{Bennett1999UPB,DiVincenzo2003CPM} the UPB in bipartition $C$ and $AB$, $C|AB$, is a completable orthogonal product basis.

Before we prove that the UPB is an SUCPB in $A|BC$ and $B|CA$ bipartitions, we first present a UPB in $\mathbb{C}^{3}\otimes \mathbb{C}^{3}\otimes \mathbb{C}^{2}$ based on the TILES UPB given by Bennett {\it et al}. \cite{Bennett1999UPB}

\begin{proposition}
In $\mathbb{C}^{3}\otimes \mathbb{C}^{3}\otimes \mathbb{C}^{2}$, the following UPB of size $10$ is an SUCPB in A$|$BC and B$|$CA bipartitions:
\begin{equation}\label{eq:332}
\begin{aligned}
|\phi_{0}\rangle=&|1\rangle_{A}|1\rangle_{B}|0-1\rangle_{C},~~~~~~~~~
|\phi_{5}\rangle=|0-1\rangle_{A}|0\rangle_{B}|1\rangle_{C},\\
|\phi_{1}\rangle=&|0\rangle_{A}|0-1\rangle_{B}|0\rangle_{C},~~~~~~~~~
|\phi_{6}\rangle=|0\rangle_{A}|1-2\rangle_{B}|1\rangle_{C},\\
|\phi_{2}\rangle=&|0-1\rangle_{A}|2\rangle_{B}|0\rangle_{C},~~~~~~~~~
|\phi_{7}\rangle=|1-2\rangle_{A}|2\rangle_{B}|1\rangle_{C},\\
|\phi_{3}\rangle=&|2\rangle_{A}|1-2\rangle_{B}|0\rangle_{C},~~~~~~~~~
|\phi_{8}\rangle=|2\rangle_{A}|0-1\rangle_{B}|1\rangle_{C},\\
|\phi_{4}\rangle=&|1-2\rangle_{A}|0\rangle_{B}|0\rangle_{C},~~~~~~~~~
|\phi_{9}\rangle=|0+1+2\rangle_{A}|0+1+2\rangle_{B}|0+1\rangle_{C}.
\end{aligned}
\end{equation}
\end{proposition}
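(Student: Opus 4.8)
The plan is to reduce all three assertions of the proposition — that \eqref{eq:332} is a UPB, that it is an SUCPB across $A|BC$, and that it is an SUCPB across $B|CA$ — to the tile-structure machinery of the preceding section, and to handle them in turn. For the UPB claim, observe that the ten states are precisely those produced from the $3\times3\times2$ tile structure of Fig.~\ref{fig:332} by the standard recipe (take on each tile the ``Fourier'' orthogonal family, delete its all-ones member, and adjoin the single global stopper): the nine volume-two tiles are the four pinwheel dominoes in the layer $C=0$, the four in the layer $C=1$, and the central vertical domino $\{1\}_A\times\{1\}_B\times\{0,1\}_C$. I would verify that this is a $3$-partite U-tile structure by showing that its only special cube is the whole cube and that the whole cube admits no splitting into two smaller special cubes or tiles: every axis-perpendicular cut either splits the central vertical domino (the $C$-cut) or leaves a half-cube that is not a union of whole tiles (the $A$- and $B$-cuts, where the four dominoes of one layer cover only the non-rectangular ``pinwheel minus centre''). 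The multipartite U-tile/UPB correspondence recalled earlier then gives a UPB of size $3\cdot3\cdot2-9+1=10$, namely \eqref{eq:332}.

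For the SUCPB claim across $A|BC$, fuse $B$ and $C$ into a single $6$-dimensional system with product basis $\{|j\rangle_B|k\rangle_C\}$, identifying $\mathbb{C}^3\otimes\mathbb{C}^3\otimes\mathbb{C}^2$ with $\mathbb{C}^3\otimes\mathbb{C}^6$. Each $|\phi_i\rangle$ is then a product state, and because the qubit factor is two-dimensional, the nine surviving tile-states together with $|\phi_9\rangle$ are exactly the orthogonal product set attached, in the sense of Lemma~\ref{lem:quasi_sucpb}, to a $3\times6$ tile structure $\mathcal{T}_{\mathcal{S}}^{A|BC}$: each cube-domino becomes a $1\times2$ or $2\times1$ generalised rectangle of the $3\times6$ grid (with possibly non-contiguous column indices), and $|\phi_9\rangle$ is the global stopper $(\sum_i|i\rangle)\otimes(\sum_{j,k}|j\rangle_B|k\rangle_C)$. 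It therefore suffices to exhibit $\mathcal{T}_{\mathcal{S}}^{A|BC}$ as a quasi U-tile structure. The partition I would use consists of $m=5$ blocks: the central horizontal domino by itself, and the four pairs of dominoes sharing a common row set — the two lying in the top row, the two in the bottom row, the two occupying rows $\{0,1\}$, and the two occupying rows $\{1,2\}$. One then checks that each merged block is a generalised rectangle that cannot be enlarged inside the grid except to the whole grid (condition~(\ref{item1})), that the paired dominoes share their rows (condition~(\ref{item2})), and that the coarsened five-tile structure has the whole $3\times6$ grid as its only special rectangle, hence is a U-tile structure (condition~(\ref{item3})). Lemma~\ref{lem:quasi_sucpb} then yields that \eqref{eq:332}, read in $\mathbb{C}^3\otimes\mathbb{C}^6$, is an SUCPB of size $18-9+1=10$.

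The SUCPB claim across $B|CA$ requires no new computation: relabelling the systems $A\leftrightarrow B$ and applying the bit-flip $X$ on $C$ is a composition of a local relabelling with a local unitary, it permutes $\{|\phi_0\rangle,\dots,|\phi_9\rangle\}$ onto itself up to phases, and it carries the bipartition $A|BC$ to $B|CA$; since neither operation changes the SUCPB property, the previous paragraph applies verbatim. (Across $C|AB\cong\mathbb{C}^2\otimes\mathbb{C}^9$ the set is not an SUCPB but a completable orthogonal product basis, because every orthogonal product set in $\mathbb{C}^2\otimes\mathbb{C}^n$ extends to an orthogonal product basis, as already observed.)

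The step I expect to be the real obstacle is condition~(\ref{item3}) for $\mathcal{T}_{\mathcal{S}}^{A|BC}$: proving that the coarsened $3\times6$ structure has no proper special rectangle requires running through all nontrivial row-subsets and checking in each case that no sub-collection of the five merged tiles assembles into a sub-rectangle; and one must simultaneously choose the five blocks so that all four pairings respect shared rows while keeping the merged tiles maximal (conditions~(\ref{item1})--(\ref{item2})). The U-tile verification for the $3\times3\times2$ cube in the first step is of the same flavour but shorter, the central vertical domino obstructing essentially every candidate decomposition.
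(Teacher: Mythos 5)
Your proposal is correct and follows essentially the same route as the paper: view the set in the fused bipartition, exhibit the corresponding $3\times 6$ tile structure as a quasi U-tile structure via exactly the five-block partition the paper uses ($l_0=t_0$, $l_1=t_1\cup t_6$, $l_2=t_2\cup t_5$, $l_3=t_3\cup t_8$, $l_4=t_4\cup t_7$), and invoke Lemma~\ref{lem:quasi_sucpb}. The only deviation is that you dispose of the $B|CA$ case by the symmetry swap $A\leftrightarrow B$ composed with $X$ on $C$ (which indeed permutes the ten states up to a sign on $|\phi_0\rangle$), whereas the paper simply writes down a second quasi U-tile partition ($l_1=t_1\cup t_8$, $l_2=t_2\cup t_7$, $l_3=t_3\cup t_6$, $l_4=t_4\cup t_5$); both are valid.
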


\begin{proof} With respect to Fig.~\ref{fig:332} we denote $\mathcal V^1= \cup _{i=0}^{9}|\phi_{i}\rangle$ the UPB given by Eq. \eqref{eq:332}, referred to as $\mathcal V_{A|BC}^{1}$ or $\mathcal V_{B|CA}^{1}$ in $A|BC$ or $B|CA$ bipartition. We construct the tile structure with $9$-tiles corresponding to $\mathcal V_{A|BC}^{1}$ and $\mathcal V_{B|CA}^{1}$ in Fig.~\ref{fig:36}, denoted as $\mathcal T_{\mathcal V_{A|BC}^{1}}$ and $\mathcal T_{\mathcal V_{B|CA}^{1}}$, respectively. In order to prove that the UPB $\mathcal V^1$ is an SUCPB in two bipartitions, we demonstrate that the tile structures $\mathcal T_{\mathcal V_{A|BC}^{1}}$ and $\mathcal T_{\mathcal V_{B|CA}^{1}}$ are quasi U-tile ones.
\begin{figure}[htbp]
 \centering
 \begin{minipage}[t]{0.48\textwidth}
 \centering
 \includegraphics[width=6.5cm]{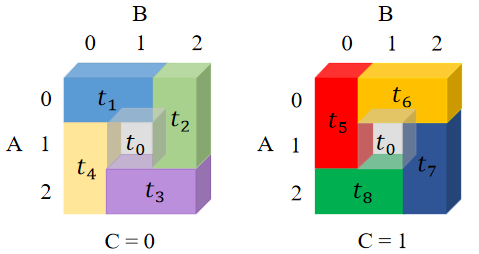}
 \caption{The $3\times 3\times 2$ U-tile structure $\mathcal T_{\mathcal V^{1}}$ with $9$-tiles. }  \label{fig:332}
 \end{minipage}
 \begin{minipage}[t]{0.48\textwidth}
 \centering
 \includegraphics[width=9.0cm]{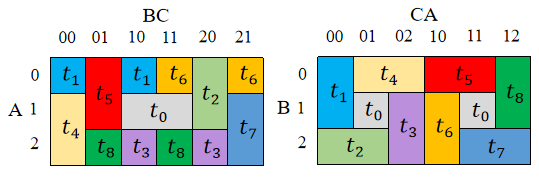}
 \caption{The quasi U-tile structures of $\mathcal T_{\mathcal V_{A|BC}^{1}}$ and $\mathcal T_{\mathcal V_{B|CA}^{1}}$.}  \label{fig:36}
 \end{minipage}
\end{figure}

The tile structure $\mathcal T_{\mathcal V_{A|BC}^{1}}$ has a partition of $\{t_i\}_{i=0}^{8}$ into $5$ subsets $\{l_j\}_{j=0}^4$, that is, $l_0=t_{0}$, $l_1=t_{1}\cup t_{6}$, $l_2=t_{2}\cup t_{5}$, $l_3=t_{3}\cup t_{8}$ and $l_4=t_{4}\cup t_{7}$. Under this partition, every new tile $l_j$ satisfies the conditions (\ref{item1}) and (\ref{item2}). The new tile structure $\mathcal T_{\mathcal V_{A|BC}^{1'}}=\cup _{j=1}^{5} l_{j}$ is a U-tile structure, which means that the condition (\ref{item3}) is satisfied. Thus, the tile structure $\mathcal T_{\mathcal V_{A|BC}^{1}}$ is a quasi U-tile structure according to definition.

Similarly, the tile structure $\mathcal T_{\mathcal V_{B|CA}^{1}}$ has a partition of $\{t_i\}_{i=0}^{8}$ into $5$ subsets $\{l_j\}_{j=0}^4$, i.e., $l_0=t_{0}$, $l_1=t_{1}\cup t_{8}$, $l_2=t_{2}\cup t_{7}$, $l_3=t_{3}\cup t_{6}$ and $l_4=t_{4}\cup t_{5}$. Obviously, the new tile structure $\mathcal T_{\mathcal V_{B|CA}^{1'}}=\cup _{j=1}^{5} l_{j}$ is a U-tile structure. Further, the tile structure $\mathcal T_{\mathcal V_{B|CA}^{1}}$ is a quasi U-tile structure. The sets $\mathcal V_{A|BC}^{1}$ and $\mathcal V_{B|CA}^{1}$ are SUCPBs.
\end{proof}

Next, we propose the decomposition of $d\times d \times 2$ tile structure for $d\geq 3$. Consider a $d\times d \times 2$ tile structure with $k$ layers, each of which is partitioned into $8$ tiles, where $1\leq k\leq \lceil \frac{d-2}{2}\rceil$. Note that the outermost layer is labeled by $k=1$ and the innermost layer is labeled by $k=\lceil \frac{d-2}{2}\rceil$. Adding the center tile, we obtain the decomposition of the $d\times d \times 2$ tile structure with $(8\lceil \frac{d-2}{2}\rceil+1)$-tiles. For example, when $d$ is odd, the tile structure in Fig.~\ref{fig:dd2} is a $d\times d \times 2$ U-tile structure.
\begin{figure*}
 \centering
 \includegraphics[width=11.5cm]{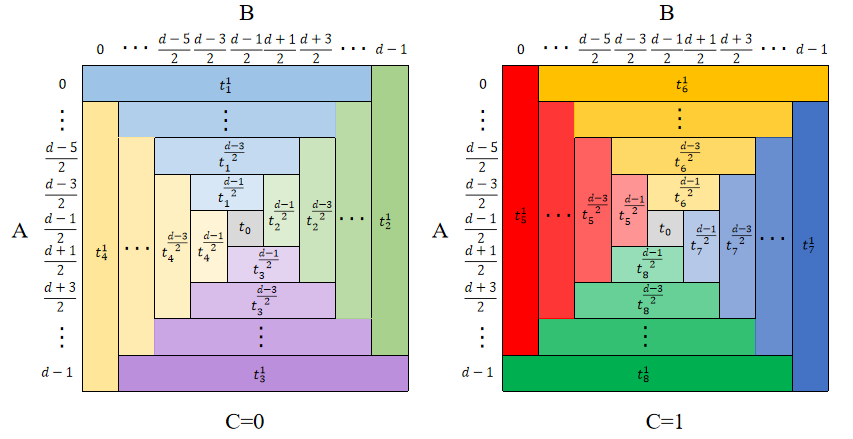}
 \caption{For odd $d$, the $d\times d\times 2$ U-tile structure $\mathcal T_{\mathcal U^1}$ with $4d-3$-tiles is inscribed with two $d\times d$ plane structures.}  \label{fig:dd2}
\end{figure*}

For the case of $\mathbb{C}^{d}\otimes \mathbb{C}^{d}\otimes \mathbb{C}^{2}$, we give the explicit forms of the sets $\mathcal{A}_{i}^{k}$ corresponding to the tiles $t_{i}^{k}=\mathcal L_{i}^{k}\times \mathcal W_{i}^{k}\times \mathcal H_{i}^{k}$, taking from each layer with $1\leq i\leq 8$ and $1\leq k\leq \lceil \frac{d-2}{2}\rceil$:
\begin{equation}
\begin{aligned}
\mathcal{A}_{1}^{k}&=\{|l-1\rangle_{A}|\alpha _{i}\rangle _{B}|0\rangle_{C}\mid i\in \mathbb Z_{d-2l}\backslash \{0\}\},~~~
\mathcal{A}_{2}^{k}=\{|\alpha _{i}\rangle _{A}|d-l\rangle_{B}|0\rangle_{C}\mid i\in \mathbb Z_{d-2l}\backslash \{0\}\},\\
\mathcal{A}_{3}^{k}&=\{|d-l\rangle_{A}|\beta _{i}\rangle _{B}|0\rangle_{C}\mid i\in \mathbb Z_{d-2l}\backslash \{0\}\},~~~
\mathcal{A}_{4}^{k}=\{|\beta _{i}\rangle _{A}|l-1\rangle_{B}|0\rangle_{C}\mid i\in \mathbb Z_{d-2l}\backslash \{0\}\},\\
\mathcal{A}_{5}^{k}&=\{|\alpha _{i}\rangle _{A}|l-1\rangle_{B}|1\rangle_{C}\mid i\in \mathbb Z_{d-2l}\backslash \{0\}\},~~~
\mathcal{A}_{6}^{k}=\{|l-1\rangle_{A}|\beta _{i}\rangle _{B}|1\rangle_{C}\mid i\in \mathbb Z_{d-2l}\backslash \{0\}\},\\
\mathcal{A}_{7}^{k}&=\{|\beta _{i}\rangle _{A}|d-l\rangle_{B}|1\rangle_{C}\mid i\in \mathbb Z_{d-2l}\backslash \{0\}\},~~~
\mathcal{A}_{8}^{k}=\{|d-l\rangle_{A}|\alpha _{i}\rangle _{B}|1\rangle_{C}\mid i\in \mathbb Z_{d-2l}\backslash \{0\}\},\\
\end{aligned}
\end{equation}
where $|\alpha _{i}\rangle _{X}=\sum \limits_{s=0}^{d-2k}w_{d-2k+1}^{is}|s+k-1\rangle_{X}$ and $|\beta _{i}\rangle _{X}=\sum \limits_{s=0}^{d-2k}w_{d-2k+1}^{is}|s+k\rangle_{X}$ for $X\in \{A, B, C\}$. In particular, the center tile $t_{0}$ ($t_{0'}$) corresponds to the set $\mathcal{A}_{0}$ ($\mathcal{A}_{0'}$) when $d$ is odd (even):
\begin{equation}
\begin{aligned}
\mathcal{A}_{0}=&\{|\frac{d-1}{2}\rangle_{A}|\frac{d-1}{2}\rangle _{B}|\eta _{k}\rangle_{C}\mid k\in \mathbb{Z}_{2}\backslash \{0\}\},\\
\mathcal{A}_{0'}=&\{|\xi _{i}\rangle_{A}|\xi _{j}\rangle_{B}|\eta _{k}\rangle_{C}\mid (i,j,k)\in \mathbb Z_{2}\times \mathbb Z_{2}\times \mathbb Z_{2}\backslash \{(0,0,0)\}\},\\
\end{aligned}
\end{equation}
where $|\eta _{s}\rangle_{X}=|0\rangle_{X}+(-1)^{s}|1\rangle_{X}$, $|\xi _{s}\rangle_{X}=|\frac{d}{2}-1\rangle_{X}+(-1)^{s}|\frac{d}{2}\rangle_{X}$ for $X\in \{A, B, C\}$, $s\in \mathbb Z_{2}$.

We put forward a UPB of size $2d^2-4d+4$ corresponding to a $d\times d\times 2$ U-tile structure with $4d-3$-tiles in Fig.~\ref{fig:dd2} when $d$ is odd.

\begin{theorem} The set $\cup _{i=1}^{8}\mathcal{A}_{i}^{k}\cup \mathcal{A}_{0}\cup \{|S\rangle\}$ in $\mathbb{C}^{d}\otimes \mathbb{C}^{d}\otimes \mathbb{C}^{2}$, denoted as $\mathcal U^1$, is a UPB of size $2d^2-4d+4$ for odd $d$, $d\geq 3$ and $1\leq k\leq \frac{d-1}{2}$. It is an SUCPB in A$|$BC and B$|$CA bipartitions.
\end{theorem}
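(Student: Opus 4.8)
The plan is to prove the two assertions separately: that $\mathcal{U}^1$ is a UPB of size $2d^2-4d+4$, and that, viewed under the $A|BC$ and the $B|CA$ bipartitions, it is an SUCPB.

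\emph{The UPB assertion.} By construction the cube tile structure $\mathcal{T}_{\mathcal{U}^1}$ of Fig.~\ref{fig:dd2} has $8\lceil\frac{d-2}{2}\rceil+1=4d-3$ tiles when $d$ is odd, and $\mathcal{U}^1=\bigl(\bigcup_{k=1}^{(d-1)/2}\bigcup_{i=1}^{8}\mathcal{A}_i^k\bigr)\cup\mathcal{A}_0\cup\{|S\rangle\}$ is exactly the orthogonal product set attached to it in the standard way — delete from each of the $4d-3$ tiles its ``all-ones'' member and adjoin the stopper $|S\rangle$ — so its cardinality is $d\cdot d\cdot 2-(4d-3)+1=2d^2-4d+4$. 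By the cube version of the U-tile criterion it then suffices to check that $\mathcal{T}_{\mathcal{U}^1}$ is a U-tile structure, i.e.\ that no special cube of it can be written as $R\cup S$ with $R,S$ proper special cubes or tiles. I would prove this by the standard nested-pinwheel argument: the eight tiles of layer $k$ form two pinwheels, one in the plane $C=0$ and one in $C=1$, joined through the central axis, and in a pinwheel each arm is an ``edge minus one corner'', so any union of tiles that is a sub-box and meets the frame of layer $k$ is forced to contain that whole frame; peeling inward by descending induction on $k$, such a box must be all of $\mathcal{T}_{\mathcal{U}^1}$.

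\emph{The SUCPB assertion.} Fix the cut $A|BC$ and flatten the cube to a $d\times 2d$ rectangle, coding $(b,c)\in\mathbb{Z}_d\times\mathbb{Z}_2$ as a single column label. The $4d-3$ cube tiles become $4d-3$ plane tiles, and $\mathcal{U}^1$ becomes precisely the bipartite orthogonal product set that Lemma~\ref{lem:quasi_sucpb} associates with this flattened tile structure (one ``all-ones'' state removed from each tile, stopper adjoined). It is then enough to exhibit the data that make this structure a quasi U-tile structure: in each layer $k$ merge the four pairs of tiles sharing an $A$-index, namely $l^k_1=t^k_1\cup t^k_6$, $l^k_2=t^k_2\cup t^k_5$, $l^k_3=t^k_3\cup t^k_8$, $l^k_4=t^k_4\cup t^k_7$ (common $A$-indices $\{k-1\}$, $\alpha_k$, $\{d-k\}$, $\beta_k$ respectively), and take the centre tile as the final piece, yielding $m=4\lceil\frac{d-2}{2}\rceil+1=2d-1\ge 5$ new tiles. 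Condition (ii) is immediate from the choice of pairs; condition (i) follows from a short check that no $l^k_j$, nor the centre tile, extends to a union-of-tiles subrectangle other than the whole rectangle; and condition (iii), that the $(2d-1)$-tile merged structure is a U-tile structure, follows by the same peeling argument, its layers now being four-tile rings around the centre. Lemma~\ref{lem:quasi_sucpb} then gives that $\mathcal{U}^1$ is an SUCPB in $A|BC$. The $B|CA$ case is identical after flattening $CA$ and merging instead the pairs sharing a $B$-index, $t^k_1\cup t^k_8$, $t^k_2\cup t^k_7$, $t^k_3\cup t^k_6$, $t^k_4\cup t^k_5$ — the layered analogue of the grouping used for the $3\times 3\times 2$ proposition — again with the centre tile appended.

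\emph{Main obstacle.} The argument rests on the two ``no splitting'' verifications — that $\mathcal{T}_{\mathcal{U}^1}$ as a $d\times d\times 2$ cube, and the merged $d\times 2d$ rectangle, are U-tile structures — together with the maximality clause (i) for the merged tiles. These are finite but genuinely combinatorial; the reliable route is an induction that peels the outermost layer and shows that any union-of-tiles box either misses that layer's frame completely or contains all of it. Once those two pictures are pinned down, the rest — pairwise orthogonality of the states, the size count, and the two reductions (to the cube U-tile criterion and to Lemma~\ref{lem:quasi_sucpb}) — is routine.
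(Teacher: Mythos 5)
Your proposal is correct and follows essentially the same route as the paper: the size count via the $4d-3$-tile cube U-tile structure, and, for each of the $A|BC$ and $B|CA$ cuts, flattening to a bipartite tile picture and merging exactly the same pairs ($t_1^k\cup t_6^k$, $t_2^k\cup t_5^k$, $t_3^k\cup t_8^k$, $t_4^k\cup t_7^k$, respectively $t_1^k\cup t_8^k$, $t_2^k\cup t_7^k$, $t_3^k\cup t_6^k$, $t_4^k\cup t_5^k$, plus the centre tile) into a $(2d-1)$-tile U-tile structure so that Lemma~\ref{lem:quasi_sucpb} applies. Your added peeling-by-layers sketch merely fills in the U-tile verifications that the paper asserts by reference to its figures.
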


\begin{proof} Denote the UPB $\mathcal U^1$ in $A|BC$ and $B|CA$ bipartitions as $\mathcal U_{A|BC}^{1}$ and $\mathcal U_{B|CA}^{1}$, respectively. To show that the sets $\mathcal U_{A|BC}^{1}$ and $\mathcal U_{B|CA}^{1}$ are SUCPBs, we show that the tile structures corresponding to $\mathcal U_{A|BC}^{1}$ and $\mathcal U_{B|CA}^{1}$ (denoted by $\mathcal T_{\mathcal U_{A|BC}^{1}}$ and $\mathcal T_{\mathcal U_{B|CA}^{1}}$) are quasi U-tile structures.

For the tile structure $T_{\mathcal U_{A|BC}^{1}}$ in Fig.~\ref{fig:d2d}, since the tiles $t_{1}^{k}$ and $t_{6}^{k}$ have the same row indices $k-1$, they can be combined to construct the new tiles $l_{1}^{k}=t_{1}^{k}\cup t_{6}^{k}$. Similarly, the other new tiles $l_{j}^{k}$ can be obtained as follows:
$l_{0}=t_{0}$, $l_{2}^{k}=t_{2}^{k}\cup t_{5}^{k}$, $l_{3}^{k}=t_{3}^{k}\cup t_{8}^{k}$ and $l_{4}^{k}=t_{4}^{k}\cup t_{7}^{k}$, where $1\leq k\leq \frac{d-1}{2}$. The partition to form a new tile structure with $(2d-1)$-tiles, denoted by $\mathcal T_{\mathcal U_{A|BC}^{1'}}=\cup_{j,k}\{l_{j}^{k}\}$, is a U-tile structure. By definition $\mathcal T_{\mathcal U_{A|BC}^{1}}$ is a quasi U-tile structure.

According to the tile structure $\mathcal T_{\mathcal U_{B|CA}^{1}}$, see Fig.~\ref{fig:d2d}, we have a new tile structure with $(2d-1)$-tiles, denoted as $\mathcal T_{\mathcal U_{B|CA}^{1'}}=\cup_{j,k}\{l_{j}^{k}\}$, by combining the old tiles $t_{1}^{k}$ with the same row index. In detail, the new tiles $\{l_{j}^{k}\}$ are given by: $l_{0}=t_{0}$, $l_{1}^{k}=t_{1}^{k}\cup t_{8}^{k}$, $l_{2}^{k}=t_{2}^{k}\cup t_{7}^{k}$, $l_{3}^{k}=t_{3}^{k}\cup t_{6}^{k}$ and $l_{4}^{k}=t_{4}^{k}\cup t_{5}^{k}$, where $1\leq k\leq \frac{d-1}{2}$. The new tile structure $\mathcal T_{\mathcal U_{B|CA}^{1'}}=\cup_{j,k}\{l_{j}^{k}\}$ is a U-tile structure. Therefore, the tile structure $\mathcal T_{\mathcal U_{B|CA}^{1}}$ is also a quasi U-tile structure like $\mathcal T_{\mathcal U_{A|BC}^{1}}$. In short, we assert that the UPB $\mathcal U^1$ in $A|BC$ and $B|CA$ bipartitions are SUCPBs by Lemma~\ref{lem:quasi_sucpb}. This completes the proof.
\begin{figure}[htbp]
		\centering
		\includegraphics[width=8cm]{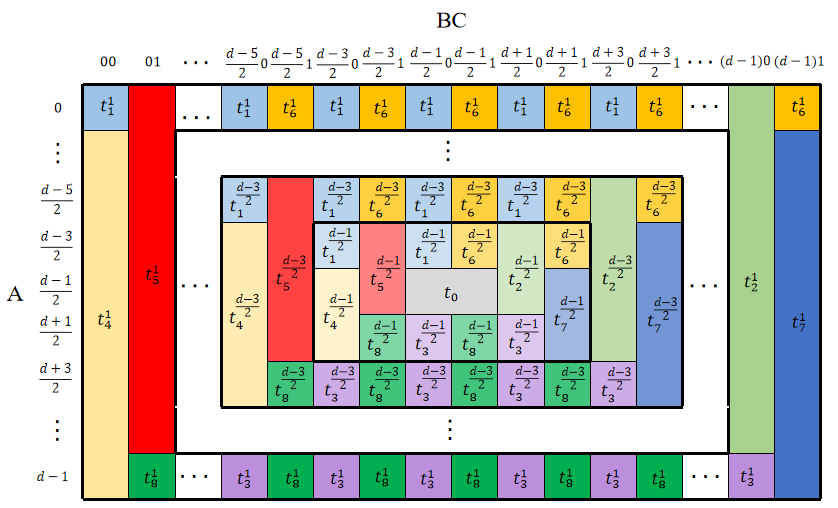}
		\vspace{0.3cm}
		\includegraphics[width=9cm]{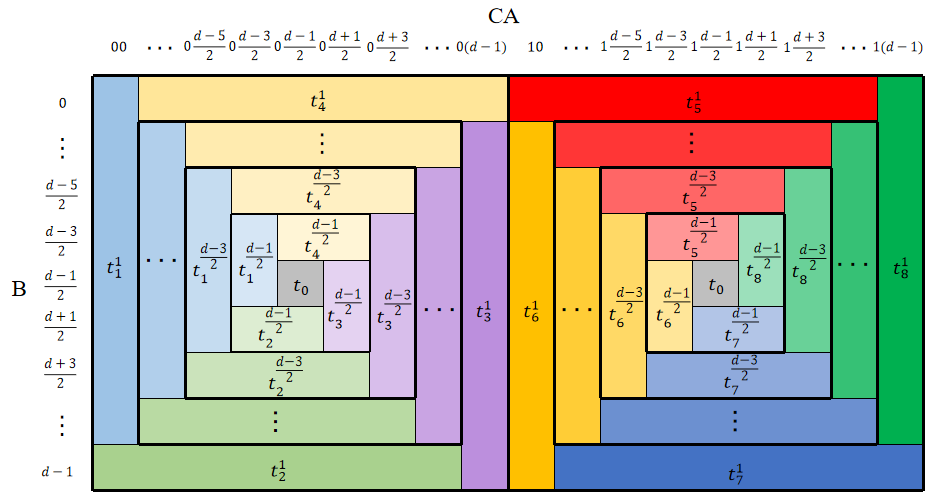}
		\caption{The quasi U-tile structures of $\mathcal T_{\mathcal U_{A|BC}^{1}}$ and $\mathcal T_{\mathcal U_{B|CA}^{1}}$. } \label{fig:d2d}
	\end{figure}
\end{proof}

When $d$ is even, we put forward the following construction of a UPB that has the same structure as $\mathcal U^{1}$ except for the set $\mathcal{A}_{0'}$. Further, we assert that the UPB is an SUCPB in two bipartitions.

\begin{corollary} The set $\cup _{i=1}^{8}\mathcal{A}_{i}^{k}\cup \mathcal{A}_{0'}\cup \{|S\rangle\}$ in $\mathbb{C}^{d}\otimes \mathbb{C}^{d}\otimes \mathbb{C}^{2}$, denoted as $\mathcal U^{1'}$, is a UPB of size $2d^2-4d+8$ for even $d$, $d\geq 4$ and $1\leq k\leq \frac{d-2}{2}$. It is an SUCPB in A$|$BC and B$|$CA bipartitions.
\end{corollary}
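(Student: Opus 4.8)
The plan is to follow the proof of the preceding Theorem essentially line for line, because $\mathcal{U}^{1'}$ is obtained from $\mathcal{U}^{1}$ by replacing the central $1\times 1\times 2$ tile (with set $\mathcal{A}_{0}$) by the central $2\times 2\times 2$ tile (with set $\mathcal{A}_{0'}$), shrinking the layer range to $1\le k\le \frac{d-2}{2}$, and changing nothing else. First I would confirm that $\mathcal{U}^{1'}$ is an OPS. Every listed state is visibly a product state, and the frame states $\mathcal{A}_i^{k}$ ($1\le i\le 8$) are pairwise orthogonal and orthogonal to $|S\rangle$ by the very computation used in the Theorem (the frame tiles have disjoint cell sets, the character vectors inside a single tile are Fourier-orthogonal, and every nonconstant character vanishes against the all-ones vector restricted to its tile). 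The only genuinely new checks concern $\mathcal{A}_{0'}$: the vectors $|\xi_i\rangle_A|\xi_j\rangle_B|\eta_k\rangle_C$ form the character basis of the central $2\times 2\times 2$ block minus its constant vector, hence are mutually orthogonal; each is orthogonal to every frame state because $\mathcal{A}_{0'}$ lives entirely on the cell set $\{\frac{d}{2}-1,\frac{d}{2}\}\times\{\frac{d}{2}-1,\frac{d}{2}\}\times\{0,1\}$, which is disjoint from all frame tiles; and each is orthogonal to $|S\rangle$ because the restriction of $|S\rangle$ to that block is exactly the constant vector $|\xi_0\rangle_A|\xi_0\rangle_B|\eta_0\rangle_C$, which is annihilated by every $|\xi_i\rangle_A|\xi_j\rangle_B|\eta_k\rangle_C$ with $(i,j,k)\neq(0,0,0)$.

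Second, I would show $\mathcal{U}^{1'}$ is a UPB. Its underlying $d\times d\times 2$ tile structure is the pinwheel decomposition analogous to Fig.~\ref{fig:dd2}, consisting of $\frac{d-2}{2}$ layers of $8$ frame tiles together with the single central $2\times 2\times 2$ tile, i.e. $n = 8\cdot\frac{d-2}{2}+1 = 4d-7$ tiles. I would verify that this is a $3$-partite U-tile structure: no special cube (a box that is a union of at least two of these tiles) can be split into two special cubes or tiles. For sub-unions of frame tiles this is inherited verbatim from the Theorem; the one new point is that the larger central tile still cannot be absorbed, together with its surrounding innermost frame tiles, into any box that is a union of tiles, because each innermost frame arm protrudes past the $2\times 2$ core of each $C$-layer so no such box closes up. The multipartite generalization of the U-tile criterion recalled above then yields that $\mathcal{U}^{1'}$ is a UPB of size $2d^{2}-(4d-7)+1 = 2d^{2}-4d+8$.

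Third, for the SUCPB claim I would unfold the $C$-coordinate to regard the cube as a $d\times 2d$ bipartite tile structure for $A|BC$ (and a $2d\times d$ one for $B|CA$) and apply the same coarsening as in the Theorem, analogous to Fig.~\ref{fig:d2d}: keep $l_0 = t_{0'}$ and merge each frame tile with the partner that shares its $A$-index (respectively $B$-index), $l_j^{k}=t_j^{k}\cup t_{j'}^{k}$ for $1\le k\le\frac{d-2}{2}$. Conditions (i) and (ii) of a quasi U-tile structure are immediate — the paired frame tiles share a common row (column) index and $t_{0'}$ is a tile that cannot be enlarged within the structure — and condition (iii), that the coarsened structure $\mathcal{T}_{\mathcal{U}_{A|BC}^{1'}}$ (respectively $\mathcal{T}_{\mathcal{U}_{B|CA}^{1'}}$) with $4\cdot\frac{d-2}{2}+1 = 2d-3$ tiles is a U-tile structure, is checked exactly as in the Theorem; crucially $2d-3\ge 5$ since $d\ge 4$, so the hypothesis $m\ge 5$ of the definition of a quasi U-tile structure holds. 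Lemma~\ref{lem:quasi_sucpb} then applies in each bipartition and produces an SUCPB of size $2d^{2}-4d+8$.

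The step I expect to be the genuine obstacle is the indecomposability verification in these three settings, namely confirming that enlarging the centre from $1\times 1\times 2$ to $2\times 2\times 2$ does not create, together with the innermost frame, a special cube (or, after unfolding, a special rectangle) admitting a partition into two. This is the only place where the even-$d$ geometry departs from the odd-$d$ picture of the Theorem; once it is settled, the remainder is bookkeeping — the shrunken layer range $1\le k\le\frac{d-2}{2}$ and the recomputed cardinality $2d^{2}-4d+8$.
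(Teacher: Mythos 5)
Your proposal is correct and is essentially the paper's own route: the paper states this corollary without a separate proof, asserting only that $\mathcal U^{1'}$ has the same structure as $\mathcal U^{1}$ with $\mathcal A_{0'}$ in place of $\mathcal A_{0}$, and your adaptation of the Theorem's argument (central $2\times2\times2$ tile, $\tfrac{d-2}{2}$ layers giving $4d-7$ tiles and size $2d^2-4d+8$, pairing frame tiles that share an $A$- resp.\ $B$-index into $2d-3\ge 5$ new tiles, then Lemma~\ref{lem:quasi_sucpb}) is exactly the intended even-$d$ analogue. Your explicit checks of the $\mathcal A_{0'}$ orthogonality and of the $m\ge5$ hypothesis supply detail the paper leaves implicit, so no substantive deviation or gap.
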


\section{UPBs that are SUCPBs in at most one bipartition}\label{sec:sucpbone}

In this section, we construct two types of UPBs with different sizes in $\mathbb{C}^{d}\otimes \mathbb{C}^{2}\otimes \mathbb{C}^{2}$ which are SUCPBs in at most one bipartition for $d\geq 4$. Since any OPS in $\mathbb{C}^{2}\otimes \mathbb{C}^{n}$ can be extended to an OPB, we only need to consider the UPB in A$|$BC bipartition. We denote the basis of $\mathbb{C}^{2}\otimes \mathbb{C}^{2}$ as $|00\rangle \rightarrow |{\bf 0}\rangle$, $|01\rangle \rightarrow |{\bf 1}\rangle$, $|10\rangle \rightarrow |{\bf 2}\rangle$ and $|11\rangle \rightarrow |{\bf 3}\rangle$.

First, we propose the $d\times 2\times 2$ tile structures with 5-tiles, which corresponds to a UPB in $\mathbb{C}^{d}\otimes \mathbb{C}^{2}\otimes \mathbb{C}^{2}$. Note that the UPB is not a strongly uncompletable product basis in any bipartition. It means that we provide a tile structure that is neither a U-tile structure nor a quasi-U-tile structure, such as $\mathcal T_{\mathcal V_{A|BC}^{2}}$ in Fig.~\ref{fig:44}.
\begin{figure}[htbp]
 \centering
 \begin{minipage}[t]{0.48\textwidth}
 \centering
 \includegraphics[width=6.8cm]{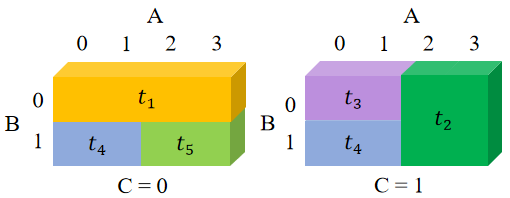}
 \caption{The  $4\times 2 \times 2$ U-tile structure with $5$-tiles.}  \label{fig:422}
 \end{minipage}
 \begin{minipage}[t]{0.48\textwidth}
 \centering
 \includegraphics[width=3.5cm]{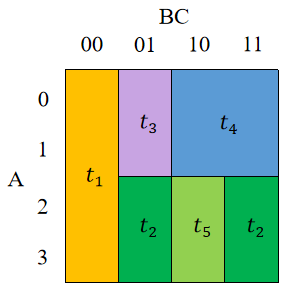}
 \caption{The $4\times 4$ tile structure of $\mathcal T_{\mathcal V_{A|BC}^{2}}$.}  \label{fig:44}
 \end{minipage}
\end{figure}

We give an example of UPB in $\mathbb{C}^{4}\otimes \mathbb{C}^{2}\otimes \mathbb{C}^{2}$ in Fig.~\ref{fig:422}.
\begin{example} Consider the following UPB of size $12$,
\begin{equation} \label{eq:422}
\begin{aligned}
|\phi_{1}\rangle=&|0+1-2-3\rangle_{A}|0\rangle_{B}|0\rangle_{C},~~~~~
|\phi_{2}\rangle=|0-1-2+3\rangle_{A}|0\rangle_{B}|0\rangle_{C},\\
|\phi_{3}\rangle=&|0-1+2-3\rangle_{A}|0\rangle_{B}|0\rangle_{C},~~~~~
|\phi_{4}\rangle=|2+3\rangle_{A}|0-1\rangle_{B}|1\rangle_{C},\\
|\phi_{5}\rangle=&|2-3\rangle_{A}|0+1\rangle_{B}|1\rangle_{C},~~~~~~~~~~
|\phi_{6}\rangle=|2-3\rangle_{A}|0-1\rangle_{B}|1\rangle_{C},\\
|\phi_{7}\rangle=&|0-1\rangle_{A}|0\rangle_{B}|1\rangle_{C},~~~~~~~~~~~~~~~
|\phi_{8}\rangle=|0+1\rangle_{A}|1\rangle_{B}|0-1\rangle_{C},\\
|\phi_{9}\rangle=&|0-1\rangle_{A}|1\rangle_{B}|0+1\rangle_{C},~~~~~~~~~~
|\phi_{10}\rangle=|0-1\rangle_{A}|1\rangle_{B}|0-1\rangle_{C},\\
|\phi_{11}\rangle=&|2-3\rangle_{A}|1\rangle_{B}|0\rangle_{C},~~~~~~~~~~~~~~~
|\phi_{12}\rangle=|0+1+2+3\rangle_{A}|0+1\rangle_{B}|0+1\rangle_{C}.
\end{aligned}
\end{equation}
Denote UPB $\mathcal V^2= \{\cup _{i=1}^{12}|\phi_{i}\rangle\}$ in A$|$BC bipartition as $\mathcal V_{A|BC}^{2}$. The $4\times 4$ tile structure known as $\mathcal T_{\mathcal V_{A|BC}^{2}}$ corresponding to $\mathcal V_{A|BC}^{2}$ is shown in Fig.~\ref{fig:422}. For the tile structure $\mathcal T_{\mathcal V_{A|BC}^{2}}$, there exists only one new partition, that is, the tile $l_0=\cup_{i=1}^{5} t_{i}$. The fact reflects that the impossibility of finding a partition that satisfies the conditions (\ref{item1}) and (\ref{item2}) such that it forms a U-tile structure. In other words, we can find four orthogonal product states in $\mathcal H_{\mathcal V_{A|BC}^{2}}^{\bot}$:
$|\psi_{1}\rangle=|2+3\rangle_{A}(|{\bf 1}\rangle+|{\bf 3}\rangle-2|{\bf 2}\rangle)_{BC}$, $|\psi_{2}\rangle=|0+1\rangle_{A}(|{\bf 2}\rangle+|{\bf 3}\rangle-2|{\bf 1}\rangle)_{BC}$, $|\psi_{3}\rangle=|0+1-2-3\rangle_{A}|{\bf 1+2+3}\rangle_{BC}$ and $|\psi_{4}\rangle=|0+1+2+3\rangle_{A}(|{\bf 1+2+3}\rangle-3|{\bf 0}\rangle)_{BC}$.
Since $\dim (\mathcal H_{\mathcal V_{A|BC}^{2}}^{\bot})=4$, the set $\mathcal V_{A|BC}^{2}$ is a completable product basis. So $\mathcal T_{\mathcal V_{A|BC}^{2}}$ is not a quasi U-tile structure. The UPB $\mathcal V^2$ is not an SUCPB in any bipartition.
\end{example}

Furthermore, generalizing the results of the previous UPB in $\mathbb{C}^{d}\otimes \mathbb{C}^{2}\otimes \mathbb{C}^{2}$:
\begin{equation}
\begin{aligned}
\mathcal{A}_{1}=&\{|\alpha _{i}\rangle_{A}|0\rangle_{B}|0\rangle_{C}\mid i\in \mathbb{Z}_{d}\backslash \{0\}\},~~~~~~~~~
\mathcal{A}_{2}=\{|\xi _{s}\rangle_{A}|\eta _{k}\rangle_{B}|1\rangle_{C}\mid (s,k)\in \mathbb{Z}_{2}\times \mathbb{Z}_{2}\backslash \{(0,0)\}\} ,\\
\mathcal{A}_{3}=&\{|\beta _{j}\rangle_{A}|0\rangle_{B}|1\rangle_{C}\mid j\in \mathbb{Z}_{d-2}\backslash \{0\}\},~~~~~
\mathcal{A}_{4}=\{|\beta _{j}\rangle_{A}|1\rangle_{B}|\eta _{s}\rangle_{C}\mid (j,s)\in \mathbb{Z}_{d-2}\times \mathbb{Z}_{2}\backslash \{(0,0)\}\} ,\\
\mathcal{A}_{5}=&\{|\xi _{s}\rangle_{A}|1\rangle_{B}|0\rangle_{C}\mid s\in \mathbb{Z}_{2}\backslash \{0\}\},
\end{aligned}
\end{equation}
where $|\eta _{s}\rangle _{X}=|0\rangle_{X}+(-1)^{s}|1\rangle_{X}$ and $|\xi _{s}\rangle _{X}=|d-2\rangle_{X}+(-1)^{s}|d-1\rangle_{X}$ for $s\in \mathbb{Z}_{2}$, $|\alpha _{i}\rangle_{X}=\sum \nolimits_{t=0}^{d-1}w_{d}^{it}|t\rangle_{X}$ for $i\in \mathbb Z_{d}$, $|\beta _{j}\rangle_{X}=\sum \nolimits_{t=0}^{d-3}w_{d-2}^{jt}|t\rangle_{X}$ for $j\in \mathbb Z_{d-2}$, $X\in \{A, B, C\}$, we have the following conclusion.

\begin{theorem} The set $\cup _{i=1}^{5}\mathcal{A}_{i}\cup\{|S\rangle\}$ in $\mathbb{C}^{d}\otimes \mathbb{C}^{2}\otimes \mathbb{C}^{2}$, denoted as $\mathcal U^2$, is a UPB of size $4d-4$ that is not an SUCPB in any bipartition for $d\geq 4$.
\end{theorem}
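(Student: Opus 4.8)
The statement has two parts: that $\mathcal U^2$ is a UPB of size $4d-4$, and that $\mathcal U^2$ is not an SUCPB in each of the three bipartitions. For the first part, the plan is to write down the $d\times 2\times 2$ tile structure $\mathcal T_{\mathcal U^2}=\cup_{i=1}^{5}t_i$ carrying the index data of $\mathcal{A}_1,\dots,\mathcal{A}_5$ (generalizing Fig.~\ref{fig:422}): $t_1$ has length $\{0,\dots,d-1\}$, width $\{0\}$, height $\{0\}$; $t_2$ has length $\{d-2,d-1\}$, width $\{0,1\}$, height $\{1\}$; $t_3$ has length $\{0,\dots,d-3\}$, width $\{0\}$, height $\{1\}$; $t_4$ has length $\{0,\dots,d-3\}$, width $\{1\}$, height $\{0,1\}$; and $t_5$ has length $\{d-2,d-1\}$, width $\{1\}$, height $\{0\}$ (the hypothesis $d\ge 4$ makes all five $\mathcal{A}_i$ nonempty, so this is a genuine five-tile structure). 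The cardinalities of $\mathcal{A}_1,\dots,\mathcal{A}_5,\{\ket{S}\}$ add to $4d-4=d\cdot 2\cdot 2-5+1$, so it is enough to check that $\mathcal T_{\mathcal U^2}$ is a U-tile structure; the tripartite U-tile/UPB correspondence recalled above \cite{ShiPRAUPB} then gives that $\mathcal U^2$ is a UPB of size $4d-4$. I would verify the U-tile property by examining the four $(\mathrm{width},\mathrm{height})$-slices of the cube: a proper rectangular sub-region that is a union of tiles must, inside each slice it meets, be a union of whole tiles of that slice and hence carry a common length set; since $t_1$ fills a whole slice while each of $t_2,t_4$ occupies two slices, any region containing one of $t_1,t_2,t_4$ must either equal that tile or reach a third slice, and three of the four slices never form a rectangle of slices. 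Hence the only special cube is $\mathcal T_{\mathcal U^2}$ itself, and it does not split into two tiles or special cubes (no two of the $t_i$ have a rectangular union, so no $2{+}3$ split is possible, and deleting any single $t_i$ leaves a non-rectangular region).

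For the second part, $\mathcal U^2$ has $4d-4$ pairwise orthogonal members, so $\dim\cH_{\mathcal U^2}^{\perp}=4$ in every bipartition. In $B|CA$ and $C|AB$ the ambient space is $\bbC^{2}\otimes\bbC^{2d}$, and since any OPS in $\bbC^{2}\otimes\bbC^{n}$ extends to an OPB \cite{Bennett1999UPB,DiVincenzo2003CPM}, $\cH_{\mathcal U^2}^{\perp}$ is spanned by four pairwise orthogonal product states; hence $\mathcal U^2$ is a completable product basis there and in particular not an SUCPB. For $A|BC$ the ambient space is $\bbC^{d}\otimes\bbC^{4}$, with $\ket{bc}_{BC}$ identified with $\ket{{\bf 2b+c}}$; here the plan is to exhibit four pairwise orthogonal product states spanning $\cH_{\mathcal U^2}^{\perp}$, generalizing the $d=4$ states behind Fig.~\ref{fig:44}:
\begin{align*}
\ket{\psi_1}&=(\ket{d-2}+\ket{d-1})_A(\ket{{\bf 1}}+\ket{{\bf 3}}-2\ket{{\bf 2}})_{BC},\\
\ket{\psi_2}&=\Big(\sum_{t=0}^{d-3}\ket{t}\Big)_A(\ket{{\bf 2}}+\ket{{\bf 3}}-2\ket{{\bf 1}})_{BC},\\
\ket{\psi_3}&=\Big(2\sum_{t=0}^{d-3}\ket{t}-(d-2)(\ket{d-2}+\ket{d-1})\Big)_A(\ket{{\bf 1}}+\ket{{\bf 2}}+\ket{{\bf 3}})_{BC},\\
\ket{\psi_4}&=\Big(\sum_{a=0}^{d-1}\ket{a}\Big)_A(\ket{{\bf 1}}+\ket{{\bf 2}}+\ket{{\bf 3}}-3\ket{{\bf 0}})_{BC}.
\end{align*}
To see these lie in $\cH_{\mathcal U^2}^{\perp}$, observe that $\cup_{i=1}^{5}\mathcal{A}_i$ together with the five product states $\ket{g_i}$ deleted from the tile bases (the ``all-ones'' state of each $t_i$) forms an OPB of $\bbC^{d}\otimes\bbC^{4}$; hence $\cH_{\mathcal U^2}^{\perp}$ is precisely the hyperplane of $\mathrm{span}\{\ket{g_1},\dots,\ket{g_5}\}$ orthogonal to $\ket{S}$, and one checks directly that each $\ket{\psi_i}$ is rank one, lies in that span, and has $\braket{S}{\psi_i}=0$. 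Pairwise orthogonality is short: $\ket{\psi_1}$ and $\ket{\psi_2}$ have disjoint $A$-supports, each of $\ket{{\bf 1}}+\ket{{\bf 3}}-2\ket{{\bf 2}}$, $\ket{{\bf 2}}+\ket{{\bf 3}}-2\ket{{\bf 1}}$ and $\ket{{\bf 1}}+\ket{{\bf 2}}+\ket{{\bf 3}}-3\ket{{\bf 0}}$ is orthogonal to $\ket{{\bf 1}}+\ket{{\bf 2}}+\ket{{\bf 3}}$, and $\braket{\psi_1}{\psi_3}$, $\braket{\psi_3}{\psi_4}$ vanish by one-line $A$-part computations. Four pairwise orthogonal, hence linearly independent, vectors in the $4$-dimensional $\cH_{\mathcal U^2}^{\perp}$ span it, so $\mathcal U^2$ is a completable product basis in $A|BC$ as well; together with the other two bipartitions, $\mathcal U^2$ is not an SUCPB in any bipartition.

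The main obstacle is the $A|BC$ case: one has to pin down the correct four product states for general $d$ --- the $d$-dependent $A$-part of $\ket{\psi_3}$ and the rescaled $A$-part of $\ket{\psi_2}$ being the only non-obvious ingredients --- and then verify they are \emph{pairwise orthogonal}, since it is the presence of $\dim\cH^{\perp}$ mutually orthogonal product states in $\cH^{\perp}$, rather than mere spanning, that is incompatible with the SUCPB property. By contrast the size count, the orthogonality and unextendibility of $\mathcal U^2$, and the two ``$\bbC^{2}$-factor'' bipartitions follow immediately from the tripartite U-tile correspondence and the $\bbC^{2}\otimes\bbC^{n}$ extension fact, and the combinatorial U-tile verification, though it needs care, is routine.
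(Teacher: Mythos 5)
Your proposal is correct and follows essentially the same route as the paper: the two $\bbC^2$-factor bipartitions are dismissed by the $\bbC^{2}\otimes\bbC^{n}$ completability fact, and for $A|BC$ you exhibit exactly the paper's four orthogonal product states $\ket{\psi_1},\dots,\ket{\psi_4}$ filling the four-dimensional complement; you additionally spell out the U-tile/size verification that the paper leaves implicit in its figure. The only thing to repair is the orthogonality bookkeeping: $\ket{{\bf 1}}+\ket{{\bf 2}}+\ket{{\bf 3}}-3\ket{{\bf 0}}$ is \emph{not} orthogonal to $\ket{{\bf 1}}+\ket{{\bf 2}}+\ket{{\bf 3}}$ (inner product $3$), and $\braket{\psi_1}{\psi_3}$ vanishes through the $BC$ factor rather than the $A$ factor (since $\langle\xi_0|2\beta_0-(d-2)\xi_0\rangle=-2(d-2)\neq 0$), while the pairs $(\psi_1,\psi_4)$ and $(\psi_2,\psi_4)$, which your stated reasons leave uncovered, are orthogonal via the $BC$ inner products $1-2+1=0$ and $-2+1+1=0$; with these trivial corrections all six pairs check out and your argument goes through unchanged.
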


\begin{proof} Similarly, we only need to consider the UPB $\mathcal U^{2}$ in A$|$BC bipartition, denoted as $\mathcal U_{A|BC}^{2}$. In $\mathcal H_{\mathcal U_{A|BC}^{2}}^{\bot}$, the following four orthogonal product states can be obtained: $|\psi_{1}\rangle=|\xi _{0}\rangle_{A}(|{\bf 1}\rangle+|{\bf 3}\rangle-2|{\bf 2}\rangle)_{BC}$, $|\psi_{2}\rangle=|\beta _{0}\rangle_{A}(|{\bf 2}\rangle+|{\bf 3}\rangle-2|{\bf 1}\rangle)_{BC}$, $|\psi_{3}\rangle=[2|\beta _{0}\rangle-(d-2)|\xi _{0}\rangle]_{A}|{\bf 1+2+3}\rangle_{BC}$ and $|\psi_{4}\rangle=|\alpha _{0}\rangle_{A}(|{\bf 1+2+3}\rangle-3|{\bf 0}\rangle)_{BC}$.
Since $\dim (\mathcal H_{\mathcal U_{A|BC}^{2}}^{\bot})=4$, the set $\mathcal U_{A|BC}^{2}$ can be extended to an OPB. Therefore, the UPB $\mathcal U^2$ is not an SUCPB in any bipartition. This completes the proof.
\end{proof}

Next, we present a new decomposition of the $d\times 2\times 2$ tile structure with $8$-tiles, which corresponds to a UPB with different size in $\mathbb{C}^{d}\otimes \mathbb{C}^{2}\otimes \mathbb{C}^{2}$. The difference is that the UPB in $A|BC$ bipartition is an SUCPB. First of all, we construct a UPB in $\mathbb{C}^{4}\otimes \mathbb{C}^{2}\otimes \mathbb{C}^{2}$ in Fig.~\ref{fig:422UPB}.
\begin{figure}[htbp]
\centering
\begin{minipage}[t]{0.48\textwidth}
\centering
\includegraphics[width=7cm]{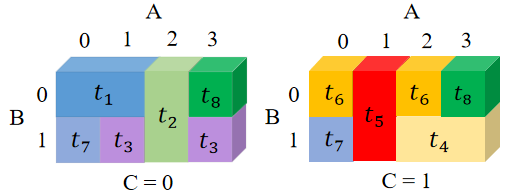}
\caption{The $4\times 2 \times 2$ U-tile structure with $8$-tiles.}  \label{fig:422UPB}
\end{minipage}
\begin{minipage}[t]{0.48\textwidth}
\centering
\includegraphics[width=4cm]{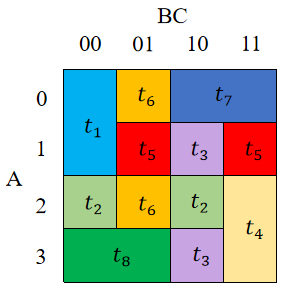}
\caption{The $4\times 4$ U-tile structure of $\mathcal T_{\mathcal V_{A|BC}^{3}}$.}  \label{fig:44UPB}
\end{minipage}
\end{figure}

\begin{proposition} In $\mathbb{C}^{4}\otimes \mathbb{C}^{2}\otimes \mathbb{C}^{2}$, the UPB of size $9$ given below is an SUCPB in A$|$BC bipartition:
\begin{equation}\label{eq:422UPB}
\begin{aligned}
|\phi_{1}\rangle=&|0-1\rangle_{A}|0\rangle_{B}|0\rangle_{C},~~~~~ |\phi_{5}\rangle=|1\rangle_{A}|0-1\rangle_{B}|1\rangle_{C},\\
|\phi_{2}\rangle=&|2\rangle_{A}|0-1\rangle_{B}|0\rangle_{C},~~~~~ |\phi_{6}\rangle=|0-2\rangle_{A}|0\rangle_{B}|1\rangle_{C},\\
|\phi_{3}\rangle=&|1-3\rangle_{A}|1\rangle_{B}|0\rangle_{C},~~~~~ |\phi_{7}\rangle=|0\rangle_{A}|1\rangle_{B}|0-1\rangle_{C},\\
|\phi_{4}\rangle=&|2-3\rangle_{A}|1\rangle_{B}|1\rangle_{C},~~~~~ |\phi_{8}\rangle=|3\rangle_{A}|0\rangle_{B}|0-1\rangle_{C},\\
|\phi_{9}\rangle=&|0+1+2+3\rangle_{A}|0+1\rangle_{B}|0+1\rangle_{C}. \\
\end{aligned}
\end{equation}
\end{proposition}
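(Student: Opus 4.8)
The plan is to push everything through the tile-structure formalism and Lemma~\ref{lem:quasi_sucpb}. Write $\mathcal V^3=\cup_{i=1}^{9}|\phi_i\rangle$ for the displayed set and $\mathcal V^3_{A|BC}$ for the same set regarded in the $A|BC$ bipartition. Two things have to be established: that $\mathcal V^3$ is a UPB of size $9$ in $\mathbb{C}^{4}\otimes \mathbb{C}^{2}\otimes \mathbb{C}^{2}$, and that $\mathcal V^3_{A|BC}$ is an SUCPB.

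For the first I would read off from each $|\phi_i\rangle$ ($1\le i\le 8$) its length/width/height supports to recover the eight subcubes $t_i=\mathcal L_i\times\mathcal W_i\times\mathcal H_i$ of the $4\times2\times2$ tile structure $\mathcal T_{\mathcal V^3}$ of Fig.~\ref{fig:422UPB}, check that they partition the $4\times2\times2$ cube, and verify that $|\phi_i\rangle$ is the member of the discrete-Fourier orthogonal product set of $t_i$ left after deleting the uniform state, with $|\phi_9\rangle=|S\rangle$ the stopper. A short inspection --- no proper sub-cube of the $4\times2\times2$ cube is a union of whole tiles, so the only special cube is the whole cube, which cannot be cut into two smaller special cubes or tiles --- shows that $\mathcal T_{\mathcal V^3}$ is a U-tile structure, and the multipartite U-tile/UPB correspondence recalled earlier then gives that $\mathcal V^3$ is a UPB of size $4\cdot2\cdot2-8+1=9$.

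For the SUCPB claim I would merge $B$ and $C$ into a single $\mathbb{C}^{4}$ via $|00\rangle\rightarrow|{\bf 0}\rangle$, $|01\rangle\rightarrow|{\bf 1}\rangle$, $|10\rangle\rightarrow|{\bf 2}\rangle$, $|11\rangle\rightarrow|{\bf 3}\rangle$. Since $(w,h)\mapsto 2w+h$ is a bijection on $\mathbb{Z}_2\times\mathbb{Z}_2$, each tripartite tile $t_i$ becomes a genuine (possibly non-contiguous) rectangle of the resulting $4\times4$ tile structure, which is exactly $\mathcal T_{\mathcal V^3_{A|BC}}$ of Fig.~\ref{fig:44UPB}, and $\mathcal V^3_{A|BC}$ is precisely the canonical bipartite construction on it --- for instance $|\phi_7\rangle$ is $|0\rangle_A(|{\bf 2}\rangle-|{\bf 3}\rangle)_{BC}$, the nonuniform Fourier state of the tile on row $\{0\}$, columns $\{2,3\}$. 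The key step is to check that $\mathcal T_{\mathcal V^3_{A|BC}}$ is a U-tile structure: one runs through the proper row subsets $R\subsetneq\{0,1,2,3\}$ and column subsets $C\subsetneq\{0,1,2,3\}$ and confirms that no $R\times C$ is a union of whole tiles --- this is where the deliberately non-contiguous index sets ($t_2$ on columns $\{0,2\}$, $t_5$ on columns $\{1,3\}$, etc.) are used --- so that the only special rectangle of $\mathcal T_{\mathcal V^3_{A|BC}}$ is the whole grid, which plainly does not split into two smaller special rectangles or tiles. Since this U-tile structure has $8\ge 5$ tiles, the partition of $\{t_1,\dots,t_8\}$ into singletons already satisfies conditions (\ref{item1})--(\ref{item3}) of the definition, so $\mathcal T_{\mathcal V^3_{A|BC}}$ is in particular a quasi U-tile structure; Lemma~\ref{lem:quasi_sucpb} then yields that $\mathcal V^3_{A|BC}$ is an SUCPB of size $4\cdot4-8+1=9$, as claimed. (In the remaining bipartitions $B|CA$ and $C|AB$ there is a qubit factor, so the set extends there to an OPB and is not an SUCPB, consistent with the section heading.)

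The only step that needs genuine care is the U-tile verification of $\mathcal T_{\mathcal V^3_{A|BC}}$, where one must be sure not to overlook a proper special rectangle; the underlying reason none exists is that all eight tiles have pairwise distinct row-index sets and pairwise distinct column-index sets and several of them sit on non-contiguous index sets, so no collection of them can tile a proper sub-rectangle. Everything else is routine: the two partition checks, and the bookkeeping that matches each $|\phi_i\rangle$ to the Fourier-minus-uniform state of the appropriate tile (in particular that the state deleted from each merged tile is the one missing from $\mathcal V^3$). If an explicit quasi-U-tile witness is preferred over the singleton partition, any grouping of the eight tiles into at least five new tiles with common row or column indices whose union is again a U-tile structure would serve; but since $\mathcal T_{\mathcal V^3_{A|BC}}$ is already a U-tile structure, the singleton partition is the cleanest choice.
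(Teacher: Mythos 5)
Your proposal is correct and follows essentially the same route as the paper: both arguments reduce the claim to checking that the merged $4\times 4$ tile structure $\mathcal T_{\mathcal V_{A|BC}^{3}}$ (Fig.~\ref{fig:44UPB}) is a U-tile structure, the paper then concluding via ``U-tile $\Rightarrow$ UPB in the $A|BC$ cut $\Rightarrow$ SUCPB'' while you reach the same endpoint through the singleton partition and Lemma~\ref{lem:quasi_sucpb}, a cosmetic difference only.
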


\begin{proof} Denote the UPB as $\mathcal V^3=\{\cup _{i=1}^{9}|\phi_{i}\rangle\}$, which is a completable orthogonal product basis in B$|$CA or C$|$AB bipartitions. We only need to consider the UPB $\mathcal V^3$ in A$|$BC bipartition, denoted as $\mathcal V_{A|BC}^{3}$. By Definition the set $\mathcal V_{A|BC}^{3}$ corresponds to the tile structure with $8$-tiles, denoted as $\cT_{\mathcal V_{A|BC}^{3}}$, which is a U-tile structure in Fig.~\ref{fig:44UPB}. Thus, the set $\mathcal V_{A|BC}^{3}$ is a UPB and also an SUCPB.
\end{proof}

Subsequently, we propose the general construction of UPB in $\mathbb{C}^{d}\otimes \mathbb{C}^{2}\otimes \mathbb{C}^{2}$:
\begin{equation}
\begin{aligned}
\mathcal{A}_{1}&=\{|\eta _{s}\rangle_{A}|0\rangle_{B}|0\rangle_{C}\mid s\in \mathbb{Z}_{2}\backslash \{0\}\},~~~~~~~~
\mathcal{A}_{2}=\{|\alpha _{i}\rangle_{A}|\eta _{s}\rangle_{B}|0\rangle_{C}\mid (i,s)\in \mathbb{Z}_{d-3}\times \mathbb{Z}_{2}\backslash \{(0,0)\}\},\\
\mathcal{A}_{3}&=\{|\varepsilon _{s}\rangle_{A}|1\rangle_{B}|0\rangle_{C}\mid s\in \mathbb{Z}_{2}\backslash \{0\}\},~~~~~~~~
\mathcal{A}_{4}=\{|\xi _{s}\rangle_{A}|1\rangle_{B}|1\rangle_{C}\}\mid s\in \mathbb{Z}_{2}\backslash \{0\}\},\\
\mathcal{A}_{5}&=\{|\beta _{j}\rangle_{A}|\eta _{s}\rangle_{V}|1\rangle_{C}\}\mid (j,s)\in \mathbb{Z}_{d-3}\times \mathbb{Z}_{2}\backslash \{(0,0)\}\},~~~~~~~~
\mathcal{A}_{6}=\{|\zeta _{s}\rangle_{A}|0\rangle_{B}|1\rangle_{C}\mid s\in \mathbb{Z}_{2}\backslash \{0\}\},\\
\mathcal{A}_{7}&=\{|0\rangle_{A}|1\rangle_{B}|\eta _{s}\rangle_{C}\mid s\in \mathbb{Z}_{2}\backslash \{0\}\},~~~~~~~~
\mathcal{A}_{8}=\{|d-1\rangle_{A}|0\rangle_{B}|\eta _{s}\rangle_{C}\mid s\in \mathbb{Z}_{2}\backslash \{0\}\},\\
\end{aligned}
\end{equation}
where $|\eta _{s}\rangle_{X}=|0\rangle_{X}+(-1)^{s}|1\rangle_{X}$,
$|\xi _{s}\rangle_{X}=|d-2\rangle_{X}+(-1)^{s}|d-1\rangle_{X}$, $|\zeta _{s}\rangle_{X}=|0\rangle_{X}+(-1)^{s}|d-2\rangle_{X}$ and $|\varepsilon_{s}\rangle_{X}=|1\rangle_{X}+(-1)^{s}|d-1\rangle_{X}$ for $s\in \mathbb{Z}_{2}$, $|\alpha _{i}\rangle_{X}=\sum_{t\in \mathbb Z_{d-3}}w_{d-3}^{it}|t+2\rangle_{X}$ for $i\in \mathbb Z_{d-3}$, $|\beta _{j}\rangle_{X}=\sum_{t\in \mathbb Z_{d-3}}w_{d-3}^{jt}|t+1\rangle_{X}$ for $j\in \mathbb Z_{d-3}$, $X\in \{A, B, C\}$.
Therefore, we obtain the following result:

\begin{theorem}
In $\mathbb{C}^{d}\otimes \mathbb{C}^{2}\otimes \mathbb{C}^{2}$, the set $\cup _{i=1}^{8}\mathcal{A}_{i}\cup \{|S\rangle\}$ denoted as $\mathcal U^3$ is a UPB of size $4d-7$, which is an SUCPB in A$|$BC bipartition for $d\geq 4$.
\end{theorem}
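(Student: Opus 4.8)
The plan is to reduce the whole statement to one combinatorial claim: the $d\times 4$ tile structure $\mathcal{T}_{\mathcal{U}^{3}_{A|BC}}$ carried by $\mathcal U^3$ in the $A|BC$ bipartition — the evident $d$-parameter analogue of the $4\times 4$ structure of Fig.~\ref{fig:44UPB} — is a U-tile structure with exactly $8$ tiles. Once this is established everything else is automatic. By the characterization of \cite{ShiPRAUPB}, a $d\times 4$ U-tile structure with $8$ tiles corresponds to a UPB of size $4d-8+1=4d-7$ in $\mathbb C^{d}\otimes\mathbb C^{4}$; since each member of $\mathcal U^3$ is in fact a tripartite product state $|\cdot\rangle_A|\cdot\rangle_B|\cdot\rangle_C$, the members are pairwise orthogonal (their tiles are disjoint), and each of the $\mathcal{A}_i$-states is orthogonal to $|S\rangle$ (it corresponds to a rank-one matrix supported on its tile whose entries sum to $0$, because the label $(k,l)\ne(0,0)$), it follows that $\mathcal U^3$ is a UPB of size $4d-7$ in $\mathbb C^{d}\otimes\mathbb C^{2}\otimes\mathbb C^{2}$ as well: any fully product vector in $\mathcal{H}_{\mathcal{U}^{3}}^{\bot}$ would in particular be an $A|BC$-product vector in the complement of the bipartite UPB. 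Moreover $8\ge 5$, so the trivial partition $l_j=t_j$ exhibits $\mathcal{T}_{\mathcal{U}^{3}_{A|BC}}$ as a quasi U-tile structure, and Lemma~\ref{lem:quasi_sucpb} then yields that $\mathcal U^3$ is an SUCPB in the $A|BC$ bipartition. (In the $B|CA$ and $C|AB$ bipartitions the relevant space is $\mathbb C^{2}\otimes\mathbb C^{2d}$, where $\mathcal U^3$ extends to an OPB by \cite{Bennett1999UPB,DiVincenzo2003CPM}, so it is not an SUCPB there — this is the sense in which it is an SUCPB in \emph{at most} one bipartition.)

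The first concrete step is bookkeeping: read the eight tiles off $\mathcal A_1,\dots,\mathcal A_8$ under $|00\rangle\!\to\!|\mathbf{0}\rangle$, $|01\rangle\!\to\!|\mathbf{1}\rangle$, $|10\rangle\!\to\!|\mathbf{2}\rangle$, $|11\rangle\!\to\!|\mathbf{3}\rangle$, namely $t_1=\{0,1\}\times\{\mathbf{0}\}$, $t_2=\{2,\dots,d-2\}\times\{\mathbf{0},\mathbf{2}\}$, $t_3=\{1,d-1\}\times\{\mathbf{2}\}$, $t_4=\{d-2,d-1\}\times\{\mathbf{3}\}$, $t_5=\{1,\dots,d-3\}\times\{\mathbf{1},\mathbf{3}\}$, $t_6=\{0,d-2\}\times\{\mathbf{1}\}$, $t_7=\{0\}\times\{\mathbf{2},\mathbf{3}\}$, $t_8=\{d-1\}\times\{\mathbf{0},\mathbf{1}\}$; then check column by column that the rows of each column are split with no overlap among its three tiles (so the eight tiles partition the $d\times 4$ grid), that $\sum_{i=1}^{8}|\mathcal{A}_i|+1=(4d-8)+1=4d-7$, and that each $\mathcal A_i$ is exactly the OPB attached to $t_i$ with the all-ones member $|\phi_i^{(0,0)}\rangle$ removed (this is what underlies the orthogonality and zero-sum remarks above). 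All of this is routine.

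The heart of the proof, and the step I expect to be the main obstacle, is checking that $\mathcal{T}_{\mathcal{U}^{3}_{A|BC}}$ is a U-tile structure uniformly in $d\ge 4$: for each fixed $d$ it is a finite verification, so the real task is to organize the case analysis so it covers all $d$ at once. I would prove the stronger statement that the only subrectangle of the $d\times 4$ grid that is a union of two or more whole tiles is the whole grid, and that the whole grid is not a union of two tiles (it needs at least three tiles even to fill column $\mathbf{0}$); together these give the U-tile property, since then the only special rectangle is the whole grid and it admits no horizontal or vertical cut into two unions of whole tiles. The case analysis is indexed by the column set $S$ of a hypothetical such subrectangle $R\times S$. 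If $S$ is a proper nonempty subset of the four columns, use that the multi-column tiles $t_2,t_5,t_7,t_8$ have column pairs $\{\mathbf{0},\mathbf{2}\},\{\mathbf{1},\mathbf{3}\},\{\mathbf{2},\mathbf{3}\},\{\mathbf{0},\mathbf{1}\}$, which form a $4$-cycle on $\{\mathbf{0},\mathbf{1},\mathbf{2},\mathbf{3}\}$: one of these tiles then has a column inside $S$ and a column outside, hence cannot be contained in $R\times S$ although it meets it — impossible. If $S$ is all four columns, then $R$ cannot be proper, because the tiles with at least two rows chain $0,1,\dots,d-1$ into a single connected block (for $d\ge 5$ the chain runs $0\!\sim\!1$ through $t_1$, $1,\dots,d-3$ through $t_5$, $2,\dots,d-2$ through $t_2$, $d-2\!\sim\!d-1$ through $t_4$; for $d=4$ it runs $0\!\sim\!1$ through $t_1$, $1\!\sim\!d-1$ through $t_3$, $d-1\!\sim\!d-2$ through $t_4$), so any proper $R$ is straddled by one of them. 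The mixed situations, with $R$ and $S$ both proper, follow by combining these two observations with the explicit row sets. The genuinely delicate point is the bookkeeping of the boundary rows $0,1,d-2,d-1$ and of the middle block $\{1,\dots,d-3\}$ for the smallest dimensions: already at $d=4$ one has $\{2,\dots,d-2\}=\{2\}$ and $\{1,\dots,d-3\}=\{1\}$, so $t_2,t_5$ degenerate to single-row tiles and the row-chain must be routed through $t_1,t_3,t_4,t_6$ instead; at $d=3$ the index sets $\mathbb{Z}_{d-3}$ collapse and $t_2,t_5$ vanish altogether, which is why the hypothesis $d\ge 4$ is essential. This is the part of the argument where I would be most careful.
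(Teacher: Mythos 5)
Your proposal is correct and follows essentially the route the paper itself takes: the paper proves only the $d=4$ instance (Proposition~3) by exhibiting the $d\times 4$ tile structure of Fig.~\ref{fig:44UPB} in the $A|BC$ bipartition as a U-tile structure and states the general theorem without further argument, while you read off the same eight tiles for general $d$, verify the U-tile (hence trivially quasi U-tile, $8\geq 5$) property, and conclude via Lemma~\ref{lem:quasi_sucpb} together with the $\mathbb{C}^2\otimes\mathbb{C}^n$ completability remark for the other bipartitions. The one step to tighten is your column argument: a two-column tile straddling $S$ need not meet $R\times S$, so the proper-$S$ cases must indeed be closed using the explicit row sets (as you acknowledge) plus the observation that no column of the grid contains two single-column tiles, after which your row-chaining disposes of proper $R$ and the whole-grid case.
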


\section{UPBs that are SUCPBs in every bipartition}\label{sec:sucpbevery}

We employ a UPB formed by the stopper state and the``vertical tile'' states $|V_{ki}^{-}\rangle$, ``horizontal tile'' states $|H_{ki}^{-}\rangle$, and ``crossed tile'' states $|C_{ki}^{-}\rangle$ in $\mathbb{C}^{6}\otimes \mathbb{C}^{6}\otimes \mathbb{C}^{6}$ from Table~\ref{tab:666} to construct SUCPBs, where $k,i\in \mathbb Z_{6}$.
\begin{table}[htbp]
    \newcommand{\tabincell}[2]{\begin{tabular}{@{}#1@{}}#2\end{tabular}}
    \centering
    \caption{\label{tab:666}The orthogonal product states of size $108$ in $\mathbb{C}^{6}\otimes \mathbb{C}^{6}\otimes \mathbb{C}^{6}$.}
    \begin{tabular}{cccccc}
        \hline
        \hline
        \specialrule{0em}{1.5pt}{1.5pt}
        ``Vertical tile''~&``Horizontal tile'' &~``Crossed tile'' ~~&``Vertical tile'' ~&``Horizontal tile'' &~``Crossed tile'' \\
        \specialrule{0em}{1.5pt}{1.5pt}
        \hline
        \specialrule{0em}{1.5pt}{1.5pt}
        \tabincell{c}{$|V_{00}^{-}\rangle=|0\rangle|2-3\rangle|0\rangle$}&
        \tabincell{c}{$|H_{00}^{-}\rangle=|1-2\rangle|0\rangle|0\rangle$}&
        \tabincell{c}{$|C_{00}^{-}\rangle=|0\rangle|0\rangle|5-0\rangle$}&
        \tabincell{c}{$|V_{03}^{-}\rangle=|0\rangle|5-0\rangle|3\rangle$}&
        \tabincell{c}{$|H_{03}^{-}\rangle=|4-5\rangle|0\rangle|3\rangle$}&
        \tabincell{c}{$|C_{03}^{-}\rangle=|0\rangle|3\rangle|2-3\rangle$} \\
        \specialrule{0em}{1.5pt}{1.5pt}
        \tabincell{c}{$|V_{10}^{-}\rangle=|1\rangle|3-4\rangle|0\rangle$}&
        \tabincell{c}{$|H_{10}^{-}\rangle=|2-3\rangle|1\rangle|0\rangle$}&
        \tabincell{c}{$|C_{10}^{-}\rangle=|1\rangle|0\rangle|4-5\rangle$}&
        \tabincell{c}{$|V_{13}^{-}\rangle=|1\rangle|0-1\rangle|3\rangle$}&
        \tabincell{c}{$|H_{13}^{-}\rangle=|5-0\rangle|1\rangle|3\rangle$}&
        \tabincell{c}{$|C_{13}^{-}\rangle=|1\rangle|3\rangle|1-2\rangle$} \\
        \specialrule{0em}{1.5pt}{1.5pt}
        \tabincell{c}{$|V_{20}^{-}\rangle=|2\rangle|4-5\rangle|0\rangle$}&
        \tabincell{c}{$|H_{20}^{-}\rangle=|3-4\rangle|2\rangle|0\rangle$}&
        \tabincell{c}{$|C_{20}^{-}\rangle=|2\rangle|0\rangle|3-4\rangle$}&
        \tabincell{c}{$|V_{23}^{-}\rangle=|2\rangle|1-2\rangle|3\rangle$}&
        \tabincell{c}{$|H_{23}^{-}\rangle=|0-1\rangle|2\rangle|3\rangle$}&
        \tabincell{c}{$|C_{23}^{-}\rangle=|2\rangle|3\rangle|0-1\rangle$} \\
        \specialrule{0em}{1.5pt}{1.5pt}
        \tabincell{c}{$|V_{30}^{-}\rangle=|3\rangle|5-0\rangle|0\rangle$}&
        \tabincell{c}{$|H_{30}^{-}\rangle=|4-5\rangle|3\rangle|0\rangle$}&
        \tabincell{c}{$|C_{30}^{-}\rangle=|3\rangle|0\rangle|2-3\rangle$}&
        \tabincell{c}{$|V_{33}^{-}\rangle=|3\rangle|2-3\rangle|3\rangle$}&
        \tabincell{c}{$|H_{33}^{-}\rangle=|1-2\rangle|3\rangle|3\rangle$}&
        \tabincell{c}{$|C_{33}^{-}\rangle=|3\rangle|3\rangle|5-0\rangle$} \\
        \specialrule{0em}{1.5pt}{1.5pt}
        \tabincell{c}{$|V_{40}^{-}\rangle=|4\rangle|0-1\rangle|0\rangle$}&
        \tabincell{c}{$|H_{40}^{-}\rangle=|5-0\rangle|4\rangle|0\rangle$}&
        \tabincell{c}{$|C_{40}^{-}\rangle=|4\rangle|0\rangle|1-2\rangle$}&
        \tabincell{c}{$|V_{43}^{-}\rangle=|4\rangle|3-4\rangle|3\rangle$}&
        \tabincell{c}{$|H_{43}^{-}\rangle=|2-3\rangle|4\rangle|3\rangle$}&
        \tabincell{c}{$|C_{43}^{-}\rangle=|4\rangle|3\rangle|4-5\rangle$} \\
        \specialrule{0em}{1.5pt}{1.5pt}
        \tabincell{c}{$|V_{50}^{-}\rangle=|5\rangle|1-2\rangle|0\rangle$}&
        \tabincell{c}{$|H_{50}^{-}\rangle=|0-1\rangle|5\rangle|0\rangle$}&
        \tabincell{c}{$|C_{50}^{-}\rangle=|5\rangle|0\rangle|0-1\rangle$}&
        \tabincell{c}{$|V_{53}^{-}\rangle=|5\rangle|4-5\rangle|3\rangle$}&
        \tabincell{c}{$|H_{53}^{-}\rangle=|3-4\rangle|5\rangle|3\rangle$}&
        \tabincell{c}{$|C_{53}^{-}\rangle=|5\rangle|3\rangle|3-4\rangle$} \\
        \hline
        \specialrule{0em}{3.5pt}{3.5pt}
        \tabincell{c}{$|V_{01}^{-}\rangle=|0\rangle|3-4\rangle|1\rangle$}&
        \tabincell{c}{$|H_{01}^{-}\rangle=|0-1\rangle|0\rangle|1\rangle$}&
        \tabincell{c}{$|C_{01}^{-}\rangle=|0\rangle|1\rangle|0-1\rangle$}&
        \tabincell{c}{$|V_{04}^{-}\rangle=|0\rangle|0-1\rangle|4\rangle$}&
        \tabincell{c}{$|H_{04}^{-}\rangle=|3-4\rangle|0\rangle|4\rangle$}&
        \tabincell{c}{$|C_{04}^{-}\rangle=|0\rangle|4\rangle|3-4\rangle$} \\
        \specialrule{0em}{1.5pt}{1.5pt}
        \tabincell{c}{$|V_{11}^{-}\rangle=|1\rangle|4-5\rangle|1\rangle$}&
        \tabincell{c}{$|H_{11}^{-}\rangle=|1-2\rangle|1\rangle|1\rangle$}&
        \tabincell{c}{$|C_{11}^{-}\rangle=|1\rangle|1\rangle|5-0\rangle$}&
        \tabincell{c}{$|V_{14}^{-}\rangle=|1\rangle|1-2\rangle|4\rangle$}&
        \tabincell{c}{$|H_{14}^{-}\rangle=|4-5\rangle|1\rangle|4\rangle$}&
        \tabincell{c}{$|C_{14}^{-}\rangle=|1\rangle|4\rangle|2-3\rangle$} \\
        \specialrule{0em}{1.5pt}{1.5pt}
        \tabincell{c}{$|V_{21}^{-}\rangle=|2\rangle|5-0\rangle|1\rangle$}&
        \tabincell{c}{$|H_{21}^{-}\rangle=|2-3\rangle|2\rangle|1\rangle$}&
        \tabincell{c}{$|C_{21}^{-}\rangle=|2\rangle|1\rangle|4-5\rangle$}&
        \tabincell{c}{$|V_{24}^{-}\rangle=|2\rangle|2-3\rangle|4\rangle$}&
        \tabincell{c}{$|H_{24}^{-}\rangle=|5-0\rangle|2\rangle|4\rangle$}&
        \tabincell{c}{$|C_{24}^{-}\rangle=|2\rangle|4\rangle|1-2\rangle$} \\
        \specialrule{0em}{1.5pt}{1.5pt}
        \tabincell{c}{$|V_{31}^{-}\rangle=|3\rangle|0-1\rangle|1\rangle$}&
        \tabincell{c}{$|H_{31}^{-}\rangle=|3-4\rangle|3\rangle|1\rangle$}&
        \tabincell{c}{$|C_{31}^{-}\rangle=|3\rangle|1\rangle|3-4\rangle$}&
        \tabincell{c}{$|V_{34}^{-}\rangle=|3\rangle|3-4\rangle|4\rangle$}&
        \tabincell{c}{$|H_{34}^{-}\rangle=|0-1\rangle|3\rangle|4\rangle$}&
        \tabincell{c}{$|C_{34}^{-}\rangle=|3\rangle|4\rangle|0-1\rangle$} \\
        \specialrule{0em}{1.5pt}{1.5pt}
        \tabincell{c}{$|V_{41}^{-}\rangle=|4\rangle|1-2\rangle|1\rangle$}&
        \tabincell{c}{$|H_{41}^{-}\rangle=|4-5\rangle|4\rangle|1\rangle$}&
        \tabincell{c}{$|C_{41}^{-}\rangle=|4\rangle|1\rangle|2-3\rangle$}&
        \tabincell{c}{$|V_{44}^{-}\rangle=|4\rangle|4-5\rangle|4\rangle$}&
        \tabincell{c}{$|H_{44}^{-}\rangle=|1-2\rangle|4\rangle|4\rangle$}&
        \tabincell{c}{$|C_{44}^{-}\rangle=|4\rangle|4\rangle|5-0\rangle$} \\
        \specialrule{0em}{1.5pt}{1.5pt}
        \tabincell{c}{$|V_{51}^{-}\rangle=|5\rangle|2-3\rangle|1\rangle$}&
        \tabincell{c}{$|H_{51}^{-}\rangle=|5-0\rangle|5\rangle|1\rangle$}&
        \tabincell{c}{$|C_{51}^{-}\rangle=|5\rangle|1\rangle|1-2\rangle$}&
        \tabincell{c}{$|V_{54}^{-}\rangle=|5\rangle|5-0\rangle|4\rangle$}&
        \tabincell{c}{$|H_{54}^{-}\rangle=|2-3\rangle|5\rangle|4\rangle$}&
        \tabincell{c}{$|C_{54}^{-}\rangle=|5\rangle|4\rangle|4-5\rangle$} \\
        \hline
        \specialrule{0em}{3.5pt}{3.5pt}
        \tabincell{c}{$|V_{02}^{-}\rangle=|0\rangle|4-5\rangle|2\rangle$}&
        \tabincell{c}{$|H_{02}^{-}\rangle=|5-0\rangle|0\rangle|2\rangle$}&
        \tabincell{c}{$|C_{02}^{-}\rangle=|0\rangle|2\rangle|1-2\rangle$}&
        \tabincell{c}{$|V_{05}^{-}\rangle=|0\rangle|1-2\rangle|5\rangle$}&
        \tabincell{c}{$|H_{05}^{-}\rangle=|2-3\rangle|0\rangle|5\rangle$}&
        \tabincell{c}{$|C_{05}^{-}\rangle=|0\rangle|5\rangle|4-5\rangle$} \\
        \specialrule{0em}{1.5pt}{1.5pt}
        \tabincell{c}{$|V_{12}^{-}\rangle=|1\rangle|5-0\rangle|2\rangle$}&
        \tabincell{c}{$|H_{12}^{-}\rangle=|0-1\rangle|1\rangle|2\rangle$}&
        \tabincell{c}{$|C_{12}^{-}\rangle=|1\rangle|2\rangle|0-1\rangle$}&
        \tabincell{c}{$|V_{15}^{-}\rangle=|1\rangle|2-3\rangle|5\rangle$}&
        \tabincell{c}{$|H_{15}^{-}\rangle=|3-4\rangle|1\rangle|5\rangle$}&
        \tabincell{c}{$|C_{15}^{-}\rangle=|1\rangle|5\rangle|3-4\rangle$} \\
        \specialrule{0em}{1.5pt}{1.5pt}
        \tabincell{c}{$|V_{22}^{-}\rangle=|2\rangle|0-1\rangle|2\rangle$}&
        \tabincell{c}{$|H_{22}^{-}\rangle=|1-2\rangle|2\rangle|2\rangle$}&
        \tabincell{c}{$|C_{22}^{-}\rangle=|2\rangle|2\rangle|5-0\rangle$}&
        \tabincell{c}{$|V_{25}^{-}\rangle=|2\rangle|3-4\rangle|5\rangle$}&
        \tabincell{c}{$|H_{25}^{-}\rangle=|4-5\rangle|2\rangle|5\rangle$}&
        \tabincell{c}{$|C_{25}^{-}\rangle=|2\rangle|5\rangle|2-3\rangle$} \\
        \specialrule{0em}{1.5pt}{1.5pt}
        \tabincell{c}{$|V_{32}^{-}\rangle=|3\rangle|1-2\rangle|2\rangle$}&
        \tabincell{c}{$|H_{32}^{-}\rangle=|2-3\rangle|3\rangle|2\rangle$}&
        \tabincell{c}{$|C_{32}^{-}\rangle=|3\rangle|2\rangle|4-5\rangle$}&
        \tabincell{c}{$|V_{35}^{-}\rangle=|3\rangle|4-5\rangle|5\rangle$}&
        \tabincell{c}{$|H_{35}^{-}\rangle=|5-0\rangle|3\rangle|5\rangle$}&
        \tabincell{c}{$|C_{35}^{-}\rangle=|3\rangle|5\rangle|1-2\rangle$} \\
        \specialrule{0em}{1.5pt}{1.5pt}
        \tabincell{c}{$|V_{42}^{-}\rangle=|4\rangle|2-3\rangle|2\rangle$}&
        \tabincell{c}{$|H_{42}^{-}\rangle=|3-4\rangle|4\rangle|2\rangle$}&
        \tabincell{c}{$|C_{42}^{-}\rangle=|4\rangle|2\rangle|3-4\rangle$}&
        \tabincell{c}{$|V_{45}^{-}\rangle=|4\rangle|5-0\rangle|5\rangle$}&
        \tabincell{c}{$|H_{45}^{-}\rangle=|0-1\rangle|4\rangle|5\rangle$}&
        \tabincell{c}{$|C_{45}^{-}\rangle=|4\rangle|5\rangle|0-1\rangle$} \\
        \specialrule{0em}{1.5pt}{1.5pt}
        \tabincell{c}{$|V_{52}^{-}\rangle=|5\rangle|3-4\rangle|2\rangle$}&
        \tabincell{c}{$|H_{52}^{-}\rangle=|4-5\rangle|5\rangle|2\rangle$}&
        \tabincell{c}{$|C_{52}^{-}\rangle=|5\rangle|2\rangle|2-3\rangle$}&
        \tabincell{c}{$|V_{55}^{-}\rangle=|5\rangle|0-1\rangle|5\rangle$}&
        \tabincell{c}{$|H_{55}^{-}\rangle=|1-2\rangle|5\rangle|5\rangle$}&
        \tabincell{c}{$|C_{55}^{-}\rangle=|5\rangle|5\rangle|5-0\rangle$} \\
        \specialrule{0em}{1.5pt}{1.5pt}
        \hline
        \hline
    \end{tabular}
\end{table}

\begin{proposition}
In $\mathbb{C}^{6}\otimes \mathbb{C}^{6}\otimes \mathbb{C}^{6}$, the UPB $\mathcal V^4$ of size $109$ given by Eq. \eqref{eq:666} is an SUCPB in every bipartition:
\begin{equation} \label{eq:666}
\begin{aligned}
\mathcal V^4=&\cup_{k,i=0}^{5}\{|V_{ki}^{-}\rangle\cup |H_{ki}^{-}\rangle\cup |C_{ki}^{-}\rangle\}\cup \{|S\rangle\}.
\end{aligned}
\end{equation}
\end{proposition}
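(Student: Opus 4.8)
The plan is to show, for each of the three bipartitions, that the tile structure which $\mathcal V^4$ induces on the corresponding $6\times 36$ rectangle is a quasi U-tile structure, and then to invoke Lemma~\ref{lem:quasi_sucpb}: that structure has $n=108$ tiles, so the lemma at once produces an SUCPB of size $6\cdot 36-108+1=109$, exactly the cardinality of $\mathcal V^4$. The three bipartitions succumb to the same argument, since the ``vertical'', ``horizontal'' and ``crossed'' families are distinguished only by which subsystem carries the difference $|m\rangle-|m+1\rangle$, so a cyclic relabeling of $A,B,C$ permutes the three families together with the three bipartitions; I will therefore carry out the argument only for $C|AB$.

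First I would record the combinatorics of $\mathcal T_{\mathcal V_{C|AB}^{4}}$. Writing $\mathbb C^{6}\otimes(\mathbb C^{6}\otimes\mathbb C^{6})$ as a $6\times 36$ rectangle whose rows are indexed by $c\in\mathbb Z_{6}$ (the $C$ system) and whose columns are indexed by pairs $(a,b)\in\mathbb Z_{6}\times\mathbb Z_{6}$ (the $AB$ system), Table~\ref{tab:666} shows that $|V_{ki}^{-}\rangle$ is the $1\times 2$ tile in row $i$ at columns $(k,k+i+2),(k,k+i+3)$, that $|H_{ki}^{-}\rangle$ is the $1\times 2$ tile in row $i$ at columns $(k-i+1,k),(k-i+2,k)$, and that $|C_{ki}^{-}\rangle$ is the $2\times 1$ tile in column $(k,i)$ at rows $i-k-1,i-k$ (all arithmetic mod $6$). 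These $108$ tiles partition the rectangle, and $\mathcal V^4$ is precisely the set obtained by retaining the ``$-$'' state of each tile and adjoining the all-ones stopper $|S\rangle$; that is, $\mathcal V_{C|AB}^{4}$ is exactly the set $\mathcal U_{\mathcal S}$ that Lemma~\ref{lem:quasi_sucpb} builds from $\mathcal T_{\mathcal V_{C|AB}^{4}}$.

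Next I would exhibit the partition of the $108$ tiles into $m=12$ super-tiles: for each $c\in\mathbb Z_{6}$ let $L_{c}$ be the union of the twelve tiles $\{|V_{kc}^{-}\rangle\}_{k}\cup\{|H_{kc}^{-}\rangle\}_{k}$, which all lie in row $c$; for each $v\in\mathbb Z_{6}$ let $M_{v}$ be the union of the six tiles $\{|C_{ki}^{-}\rangle : i-k=v\}$, which all share the row index set $\{v-1,v\}$. Conditions (\ref{item1}) and (\ref{item2}) of the quasi U-tile definition are the easy ones: each $L_{c}$ and $M_{v}$ is a genuine (generally non-contiguous) subrectangle whose constituent original tiles share their row indices, which is (\ref{item2}); and for (\ref{item1}) one checks that adjoining any further original tile to $L_{c}$ or $M_{v}$ while remaining a subrectangle forces, by a propagation argument through the wrap-around pattern of the $V$- and $H$-tiles in each row (and of the $C$-tiles in each column), the successive inclusion of whole rows and whole columns, hence of all of $\mathcal T_{\mathcal V_{C|AB}^{4}}$, so $L_{c}$ and $M_{v}$ are non-extendable. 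The substantive step is (\ref{item3}): that $\mathcal T_{\mathcal U}=L_{0}\cup\cdots\cup L_{5}\cup M_{0}\cup\cdots\cup M_{5}$ is a U-tile structure. Here the key observation is that every super-tile occupies only \emph{complete} ``column classes'' $D_{d}=\{(k,k+d):k\in\mathbb Z_{6}\}$: $L_{c}$ covers exactly those $D_{d}$ with $d\notin\{c,c+1\}$, and $M_{v}$ covers exactly $D_{v}$. Collapsing the $36$ columns onto the $6$ classes therefore turns $\mathcal T_{\mathcal U}$ into a $6\times 6$ tile structure $\overline{\mathcal T}_{\mathcal U}$ with tiles $\overline{L}_{c}$ (row $c$, columns $\mathbb Z_{6}\setminus\{c,c+1\}$) and $\overline{M}_{v}$ (rows $\{v-1,v\}$, column $v$), and $\mathcal T_{\mathcal U}$ is a U-tile structure iff $\overline{\mathcal T}_{\mathcal U}$ is. I would finish by checking $\overline{\mathcal T}_{\mathcal U}$ directly: a putative proper special rectangle $R\times S$ must have the property that the unique tile covering each of its cells lies wholly inside it, and a short case split — does $R\times S$ contain a cell covered by some $\overline{L}$-tile, or only cells covered by $\overline{M}$-tiles? — shows in each case that this constraint propagates cyclically around $\mathbb Z_{6}$ and forces $R\times S$ to be the whole $6\times 6$ grid, which is plainly not a union of two of its tiles. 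Hence $\overline{\mathcal T}_{\mathcal U}$, and so $\mathcal T_{\mathcal U}$, is a U-tile structure, $\mathcal T_{\mathcal V_{C|AB}^{4}}$ is a quasi U-tile structure, and Lemma~\ref{lem:quasi_sucpb} gives that $\mathcal V^4$ is an SUCPB in $C|AB$; the analogous partition, with the three tile families in rotated roles, settles $A|BC$ and $B|CA$.

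I expect the main obstacle to be step (\ref{item3}), and inside it two bookkeeping points: checking that the column-class collapse is legitimate — that each super-tile is genuinely a union of \emph{full} classes $D_{d}$, which depends on the exact arithmetic of the difference patterns in Table~\ref{tab:666} — and checking that the propagation argument for $\overline{\mathcal T}_{\mathcal U}$ really closes up modulo $6$ instead of stalling at an intermediate rectangle. A secondary point to pin down is the symmetry reducing the three bipartitions to one: if the cyclic relabeling of $A,B,C$ must be accompanied by a relabeling of computational bases to fix $\mathcal V^4$ exactly, it is cleanest to rerun the three-step argument verbatim in each bipartition, since the super-tile partition has the identical shape in all three.
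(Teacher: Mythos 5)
Your proposal is correct and follows essentially the same route as the paper: the partition you exhibit for $C|AB$ (row-tiles $L_c=\cup_k\{v_{kc}\cup h_{kc}\}$ and domino classes $M_v=\cup\{c_{k,k+v}\}$) is exactly the paper's partition into $12$ super-tiles, and the conclusion is drawn from Lemma~\ref{lem:quasi_sucpb} in the same way, with the other two bipartitions handled by the analogous rotated partitions that the paper lists explicitly. Your column-class collapse to a $6\times 6$ structure is just a clean way of making explicit the U-tile verification that the paper asserts by inspection of Fig.~\ref{fig:636}, so it is a welcome addition but not a different approach.
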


\begin{proof} The UPB $\mathcal V^4$ in every bipartition is denoted by the sets $\mathcal V_{A|BC}^4$, $\mathcal V_{B|CA}^4$ and $\mathcal V_{C|AB}^4$. The tiles $x_{ki}$ correspond to states $|X_{ki}^{-}\rangle$ of $\mathcal V^4$, where $x\in \{v, h, c\}$, $X\in \{V, H, C\}$ and $k,i\in \mathbb Z_6$. According to Lemma~\ref{lem:quasi_sucpb}, $\mathcal V_{A|BC}^4$, $\mathcal V_{B|CA}^4$ and $\mathcal V_{C|AB}^4$ are SUCPBs if the tile structures, corresponding to the three sets $\mathcal T_{\mathcal V_{A|BC}^{4}}$, $\mathcal T_{\mathcal V_{B|CA}^{4}}$ and $\mathcal T_{\mathcal V_{C|AB}^{4}}$, are quasi U-tile structures in Fig.~\ref{fig:636}.
\begin{figure}[htbp]
 \centering
 \includegraphics[width=17cm]{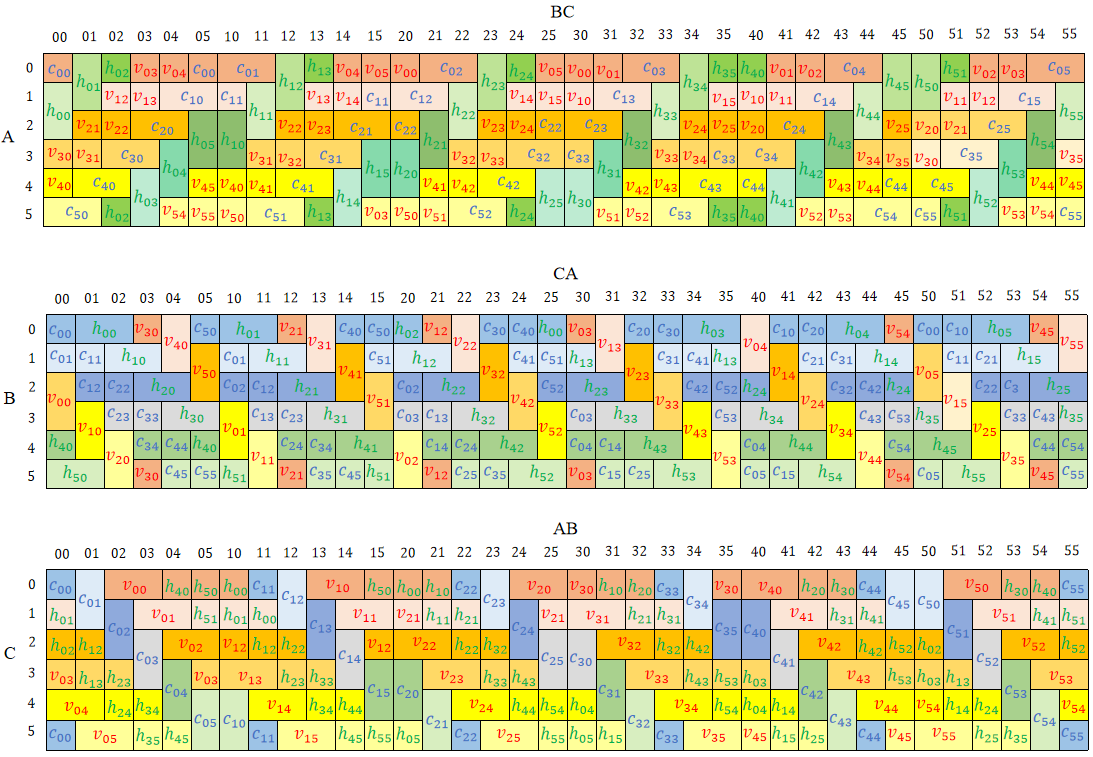}
 \caption{The quasi U-tile structures of $\mathcal T_{\mathcal V_{A|BC}^{4}}$, $\mathcal T_{\mathcal V_{B|CA}^{4}}$ and $\mathcal T_{\mathcal V_{C|AB}^{4}}$.}  \label{fig:636}
\end{figure}

For the tile structure $\mathcal T_{\mathcal V_{A|BC}^{4}}$ in Fig.~\ref{fig:636}, there exists a new partition that reorganizes the tiles $x_{ki}$ to form the new tiles $l_j$, i.e., the tiles $l_k=\cup_{i=0}^{5} \{v_{ki}\cup c_{ki}\}$ and $l_{6+k}=\cup_{i=0}^{5} \{h_{i(i+k~mod~6)}\}$, where $k\in \mathbb Z_6$ and $j\in \mathbb Z_{12}$. Since the new tiles $\{l_j\}_{j=0}^{11}$ satisfy the conditions (\ref{item1}) and (\ref{item2}), the tile structure $\cup _{j=0}^{11}l_j$ is a U-tile structure. $\mathcal T_{\mathcal V_{A|BC}^{4}}=\cup_{k,i=0}^{5}\{v_{ki}\cup h_{ki}\cup c_{ki}\}$ is a quasi U-tile structure. It means that the set $\mathcal V_{A|BC}^4$ is an SUCPB.

Consider the tile structure $\mathcal T_{\mathcal V_{B|CA}^{4}}$ in Fig.~\ref{fig:636}, we put forward a partition of the new tiles $l_j$, that is, the tiles $l_k=\cup_{i=0}^{5} \{h_{ki}\cup c_{ik}\}$ and $l_{6+k}=\cup_{i=0}^{5} \{v_{i(6-k-i~mod~6)}\}$, where $k\in
\mathbb Z_6$ and $j\in \mathbb Z_{12}$. The new tile structure $\cup_{j=0}^{11}\{l_j\}$ is a U-tile structure. According Lemma~\ref{lem:quasi_sucpb}, the tile structure $\mathcal T_{\mathcal V_{B|CA}^{4}}$ is a quasi U-tile structure and the set $\mathcal V_{B|CA}^{4}$ is an SUCPB.

Lastly, we consider the tile structure $\mathcal T_{\mathcal V_{C|AB}^{4}}$ in Fig.~\ref{fig:636}. We have the new partition consisting of the tiles $l_k=\cup_{i=0}^{5} \{v_{ik}\cup h_{ik}\}$ and $l_{6+k}=\cup_{i=0}^{5} \{c_{i(i+k~mod~6)}\}$, where $k\in \mathbb Z_6$. The new tile structure $\cup_{j=0}^{11}\{l_j\}$ is a U-tile structure. Hence, $\mathcal T_{\mathcal V_{C|AB}^{4}}$ is a quasi U-tile structure and the set $\mathcal V_{C|BA}^4$ is an SUCPB. Therefore, the UPB $\mathcal V^4$ is an SUCPB in every bipartition.
\end{proof}

In Ref.~\ref{Shi2022NJP}, Shi {\it et al}. proposed a UPB of size $200$ in $\mathbb{C}^{6}\otimes \mathbb{C}^{6}\otimes \mathbb{C}^{6}$ which is an SUCPB in every bipartition. Here, we construct a UPB of size $109$, fewer than that of Shi {\it et al}., which is also an SUCPB in every bipartition. More importantly, the UPB $\mathcal V^4$ consists of 108 $1\times 1\times 2$ tiles, illustrating that it is the UPB with the smallest cardinality by using the U-tile structure in $\mathbb{C}^{6}\otimes \mathbb{C}^{6}\otimes \mathbb{C}^{6}$.

Next, we provide the general forms of ``vertical tile'' states $|V_{ki}^{m}\rangle$, ``horizontal tile'' states $|H_{ki}^{n}\rangle$ and ``crossed tile'' states $|C_{ki}^{s}\rangle$ in $\mathbb{C}^{d}\otimes \mathbb{C}^{d}\otimes \mathbb{C}^{d}$, where $d\geq 6$, $k,i\in \mathbb{Z}_{d}$. We denote
\begin{equation}\label{eq:ddd}
\begin{aligned}
|V_{ki}^{m}\rangle&=|k\rangle_{A}|\alpha _{m}\rangle_{B}|i\rangle_{C}=|k\rangle_{A}\left(\sum \limits_{j=0}^{\lfloor \frac{d}{2}\rfloor-2}w_{\lfloor \frac{d}{2}\rfloor-1}^{jm}|j+k+2+i~mod~d\rangle\right)_{B}|i\rangle_{C},~~~m\in \mathbb Z_{\lfloor \frac{d}{2}\rfloor-1}\\
|H_{ki}^{n}\rangle&=|\beta _{n}\rangle_{A}|k\rangle_{B}|i\rangle_{C}=\left(\sum \limits_{j=0}^{\lceil \frac{d}{2}\rceil-2}w_{\lceil \frac{d}{2}\rceil-1}^{jn}|j+k+1-i~mod~d\rangle\right)_{A}|k\rangle_{B}|i\rangle_{C},~~~n\in \mathbb Z_{\lceil \frac{d}{2}\rceil-1}\\
|C_{ki}^{s}\rangle&=|k\rangle_{A}|i\rangle_{B}|\eta _{s}\rangle_{C}=|k\rangle_{A}|i\rangle_{B}(|i-k-1~mod~d\rangle+(-1)^{s}|i-k~mod~d\rangle)_{C},~~~ s\in \mathbb{Z}_{2}.
\end{aligned}
\end{equation}

\begin{theorem}\label{th:ddd}  In $\mathbb{C}^{d}\otimes \mathbb{C}^{d}\otimes \mathbb{C}^{d}$, the set $\cup_{k,i}(\cup_{m,n,s}\{|V_{ki}^{m}\rangle\cup |H_{ki}^{n}\rangle\cup |C_{ki}^{s}\rangle\}\backslash \{|V_{ki}^{0}\rangle\cup |H_{ki}^{0}\rangle\cup |C_{ki}^{0}\rangle\}) \cup \{|S\rangle\}$, denoted as $\mathcal U^4$, is a UPB of size $d^3-3d^2+1$ and an SUCPB in every bipartition for $d\geq 6$.
\end{theorem}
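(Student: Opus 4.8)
The plan is to prove three things about $\mathcal U^{4}$: that it has $d^{3}-3d^{2}+1$ elements, that it is a UPB, and that it is an SUCPB in each of the bipartitions $A|BC$, $B|CA$, $C|AB$; the last point is the substance of the theorem and will be obtained from Lemma~\ref{lem:quasi_sucpb}. Throughout, write $v_{ki}$, $h_{ki}$, $c_{ki}$ for the $3d^{2}$ subcubes supporting the vertical, horizontal and crossed states of \eqref{eq:ddd}.

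First I would check that the tiles $\{v_{ki},h_{ki},c_{ki}\}$ partition the $d\times d\times d$ cube. This reduces to verifying, for each $(a,b,c)\in\mathbb Z_{d}^{3}$, that the cyclic shifts $b\equiv j+k+2+i$, $a\equiv j+k+1-i$ and the pair $c\in\{i-k-1,i-k\}$ built into $|\alpha_{m}\rangle$, $|\beta_{n}\rangle$, $|\eta_{s}\rangle$ place $(a,b,c)$ in exactly one tile --- the same bookkeeping that is displayed for $d=6$ in Table~\ref{tab:666}, carried out with general block lengths $\lfloor d/2\rfloor-1$, $\lceil d/2\rceil-1$, $2$, which sum to $d$. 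Hence each pair $(k,i)$ contributes $d$ mutually orthogonal product states, $\cup_{k,i}(\cup_{m,n,s}\{\cdots\})$ has $d^{3}$ members, and deleting $|V_{ki}^{0}\rangle,|H_{ki}^{0}\rangle,|C_{ki}^{0}\rangle$ for each $(k,i)$ and adjoining $|S\rangle$ leaves $d^{3}-3d^{2}+1$ states; for $d\ge 6$ every tile has at least two cells, so each deletion removes a proper member. Then I would argue the cube tile structure is a U-tile structure, exactly as in the preceding $\mathbb C^{6}\otimes\mathbb C^{6}\otimes\mathbb C^{6}$ proposition: since the three families of tiles are slid cyclically with mutually incompatible offsets in the three tensor factors, no special cube $C$ decomposes as $R\cup S$. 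By the multipartite U-tile/UPB correspondence (Ref.~\cite{ShiPRAUPB}), this $d\times d\times d$ U-tile structure with $3d^{2}$ tiles yields a UPB of size $d^{3}-3d^{2}+1$, and $\mathcal U^{4}$ is precisely that UPB --- from each tile's local orthogonal basis one discards the uniform-superposition member $|V_{ki}^{0}\rangle$, $|H_{ki}^{0}\rangle$ or $|C_{ki}^{0}\rangle$, and adjoins the stopper $|S\rangle$.

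For the SUCPB property in $A|BC$ I would exhibit the partition of the $3d^{2}$ tiles into $m=2d$ new tiles $l_{k}=\bigcup_{i\in\mathbb Z_{d}}(v_{ki}\cup c_{ki})$ and $l_{d+k}=\bigcup_{i\in\mathbb Z_{d}}h_{i,\,i+k\bmod d}$ for $k\in\mathbb Z_{d}$. Condition \ref{item2} holds because every tile in $l_{k}$ lies in the single $A$-row $k$, while all the tiles $h_{i,i+k}$ share one common $A$-support (the length-$(\lceil d/2\rceil-1)$ cyclic block starting at $1-k$, independent of $i$), so $l_{d+k}$ is a subrectangle. For condition \ref{item1} one checks that no $l_{j}$ grows into a proper subrectangle by adjoining other $l_{j}$'s: the column sets of the $l_{k}$'s are pairwise distinct (they carry the crossed-tile diagonals $\{(b,b-k-1):b\in\mathbb Z_{d}\}\cup\{(b,b-k):b\in\mathbb Z_{d}\}$), and the $A$-supports of the $l_{d+k}$'s are pairwise distinct cyclic shifts, so the only sub-collection of the $l_{j}$'s that is a rectangle is the whole structure; and the full grid cannot be written as $R\cup S$ with $R,S$ tiles or special rectangles, since removing any single $l_{j}$ leaves a non-rectangular region --- this is condition \ref{item3}. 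Therefore the induced tile structure is a quasi U-tile structure, and because $2d\ge 12>5$, Lemma~\ref{lem:quasi_sucpb} shows $\mathcal U^{4}$ is an SUCPB in $A|BC$. The bipartitions $B|CA$ and $C|AB$ are handled the same way with the bundlings $l_{k}=\bigcup_{i}(h_{ki}\cup c_{ik})$, $l_{d+k}=\bigcup_{i}v_{i,\,d-k-i\bmod d}$ and $l_{k}=\bigcup_{i}(v_{ik}\cup h_{ik})$, $l_{d+k}=\bigcup_{i}c_{i,\,i+k\bmod d}$ respectively, each giving $2d$ new tiles forming a U-tile structure.

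The counting and the cube tiling are routine. The main obstacle is verifying condition \ref{item3} in each bipartition, i.e. that the $2d$-tile structure has no special rectangle other than the full grid: this is where the particular cyclic offsets and the $\lfloor\cdot\rfloor$ versus $\lceil\cdot\rceil$ asymmetry are genuinely used, and where the hypothesis $d\ge 6$ enters to keep every block length at least $2$ and all the shifts distinct; the explicit $d=6$ case in the preceding proposition is the template to imitate.
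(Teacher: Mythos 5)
Your proposal is correct and takes essentially the same route as the paper: the same three bundlings of the $3d^{2}$ tiles into $2d$ new tiles in each bipartition ($l_{k}=\cup_{i}(v_{ki}\cup c_{ki})$, $l_{d+k}=\cup_{i}h_{i(i+k)}$ for $A|BC$, and the analogous ones for $B|CA$ and $C|AB$), verification that these give quasi U-tile structures, and Lemma~\ref{lem:quasi_sucpb} to conclude the SUCPB property. Your additional explicit treatment of the counting and of the three-dimensional U-tile/UPB correspondence only spells out what the paper leaves implicit.
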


\begin{proof} Denoted $\mathcal U_{A|BC}^4$, $\mathcal U_{B|CA}^4$ and $\mathcal U_{C|AB}^4$
the UPB $\mathcal U^4$ in every bipartition. The tiles $x_{ki}^{y}$ correspond to states $|X_{ki}^{y}\rangle$ of $\mathcal U^4$, where $x\in \{v, h, c\}$, $y\in \{m, n, s\}$ and $X\in \{V, H, C\}$ and $k,i\in \mathbb Z_6$. If the tile structures $\mathcal T_{\mathcal U_{A|BC}^{4}}$, $\mathcal T_{\mathcal U_{B|CA}^{4}}$ and $\mathcal T_{\mathcal U_{C|AB}^{4}}$ corresponding to the three sets are quasi U-tile structures, the sets $\mathcal U_{A|BC}^4$, $\mathcal U_{B|CA}^4$ and $\mathcal U_{C|AB}^4$ are SUCPBs by Lemma~\ref{lem:quasi_sucpb}.

Since the tile structure $\mathcal T_{\mathcal U_{A|BC}^{4}}$ is similar to $\mathcal T_{\mathcal V_{A|BC}^{4}}$, there is a new partition consisting of the tiles $l_j$, i.e., the tiles $l_k=\cup_{i=0}^{d-1} \{v_{ki}\cup c_{ki}\}$ and $l_{d+k}=\cup_{i=0}^{d-1} \{h_{i(i+k~mod~d)}\}$. Similarly, for the tile structure $\mathcal T_{\mathcal U_{B|CA}^{4}}$, the new partition is formed by $l_j$, that is, the tiles $l_k=\cup_{i=0}^{d-1}~\{h_{ki}\cup c_{ik}\}$ and $l_{d+k}=\cup_{i=0}^{d-1}~\{v_{i(d-k-i~mod~d)}\}$. For the tile structure $\mathcal T_{\mathcal U_{C|AB}^{4}}$, we get the new partition consisting of the tiles $l_k=\cup_{i=0}^{d-1} \{v_{ik}\cup h_{ik}\}$ and $l_{d+k}=\cup_{i=0}^{d-1} \{c_{i(i+k~mod~d)}\}$, where $k\in \mathbb Z_d$. $j\in \mathbb Z_{2d}$. In every bipartition, the new tiles $\{l_j\}_{j=0}^{2d-1}$ satisfy the conditions (\ref{item1}) and (\ref{item2}). They form a U-tile structure. Then, the tile structures $\mathcal T_{\mathcal U_{A|BC}^{4}}$, $\mathcal T_{\mathcal U_{B|CA}^{4}}$ and $\mathcal T_{\mathcal U_{C|AB}^{4}}$ are quasi U-tile structures and the corresponding sets $\mathcal U_{A|BC}^{4}$, $\mathcal U_{B|CA}^{4}$ and $\mathcal U_{C|AB}^{4}$ are SUCPBs. In short, the UPB $\mathcal U^4$ is an SUCPB in every bipartition.
\end{proof}

\section{Conclusion}
Due to the close connection of UPBs with bound entangled states and quantum nonlocality without entanglement, it is of great significance to discuss the relationship between UPBs and SUCPBs in every bipartition. To show that an orthogonal product set is an SUCPB, we defined a tile structure called a quasi U-tile structure that perfectly corresponds to an SUCPB. We have generalized the TILES UPB given by Bennett {\it et al}. to construct UPBs of sizes $2d^2-4d+4$ and $2d^2-4d+8$ in $\mathbb{C}^{d}\otimes \mathbb{C}^{d}\otimes \mathbb{C}^{2}$, and respectively proved that the UPBs are SUCPBs in A$|$BC and B$|$CA bipartitions when $d$ is odd and even. Two types of UPBs have been obtained in $\mathbb{C}^{d}\otimes \mathbb{C}^{2}\otimes \mathbb{C}^{2}$, which are SUCPBs in at most one bipartition. We have completely clarified the relationship between UPBs and SUCPBs for all possible cases. Moreover, we have put forward a UPB with smaller size $d^3-3d^2+1$ in $\mathbb{C}^{d}\otimes \mathbb{C}^{d}\otimes \mathbb{C}^{d}$ that is an SUCPB in every bipartition.

There are further interesting open questions left, for instance, the lower bound of a UPB that is an SUCPB in every bipartition, the existence of UPBs that are still UPBs in every bipartition, and UPBs that are strongly nonlocal in $N$-partite systems for $N\geq 5$. In the manuscript, we have focused on the fact that the SUCPB is inextricably linked to the quasi U-tile structure. The results may provide some theoretical and methodological reference to further investigations on the related topics such as intrinsic links between the UPB and the hypercube.

\begin{acknowledgments}
This work is supported by the Basic Research Project of Shijiazhuang Municipal Universities in Hebei Province (Grant No. 241790697A), the Natural Science Foundation of Hebei Province (Grant No. A2023205045), 
the NSFC (Grants Nos. 62272208, 12171044, 12075159, 12305030 and 12347104), the specific research fund of the Innovation Platform for Academicians of Hainan Province, the HKU Seed Fund for Basic Research for New Staff (No. 2201100596), Guangdong Natural Science Fund (No. 2023A1515012185), Hong Kong Research Grant Council (RGC) (No. 27300823, N\_HKU718/23, and R6010-23), Guangdong Provincial Quantum Science Strategic Initiative (No. GDZX2200001). 
\end{acknowledgments}

\section*{Data Availability Statement}

Data sharing is not applicable to this article as no new data were created or analyzed in this study.

\bibliography{reference}

\begin{thebibliography}{44}%
\makeatletter
\providecommand \@ifxundefined [1]{%
 \@ifx{#1\undefined}
}%
\providecommand \@ifnum [1]{%
 \ifnum #1\expandafter \@firstoftwo
 \else \expandafter \@secondoftwo
 \fi
}%
\providecommand \@ifx [1]{%
 \ifx #1\expandafter \@firstoftwo
 \else \expandafter \@secondoftwo
 \fi
}%
\providecommand \natexlab [1]{#1}%
\providecommand \enquote  [1]{``#1''}%
\providecommand \bibnamefont  [1]{#1}%
\providecommand \bibfnamefont [1]{#1}%
\providecommand \citenamefont [1]{#1}%
\providecommand \href@noop [0]{\@secondoftwo}%
\providecommand \href [0]{\begingroup \@sanitize@url \@href}%
\providecommand \@href[1]{\@@startlink{#1}\@@href}%
\providecommand \@@href[1]{\endgroup#1\@@endlink}%
\providecommand \@sanitize@url [0]{\catcode `\\12\catcode `\$12\catcode
  `\&12\catcode `\#12\catcode `\^12\catcode `\_12\catcode `\%12\relax}%
\providecommand \@@startlink[1]{}%
\providecommand \@@endlink[0]{}%
\providecommand \url  [0]{\begingroup\@sanitize@url \@url }%
\providecommand \@url [1]{\endgroup\@href {#1}{\urlprefix }}%
\providecommand \urlprefix  [0]{URL }%
\providecommand \Eprint [0]{\href }%
\providecommand \doibase [0]{http://dx.doi.org/}%
\providecommand \selectlanguage [0]{\@gobble}%
\providecommand \bibinfo  [0]{\@secondoftwo}%
\providecommand \bibfield  [0]{\@secondoftwo}%
\providecommand \translation [1]{[#1]}%
\providecommand \BibitemOpen [0]{}%
\providecommand \bibitemStop [0]{}%
\providecommand \bibitemNoStop [0]{.\EOS\space}%
\providecommand \EOS [0]{\spacefactor3000\relax}%
\providecommand \BibitemShut  [1]{\csname bibitem#1\endcsname}%
\let\auto@bib@innerbib\@empty
\bibitem [{\citenamefont {Bennett}\ \emph
  {et~al.}(1999{\natexlab{a}})\citenamefont {Bennett}, \citenamefont
  {DiVincenzo}, \citenamefont {Fuchs}, \citenamefont {Mor}, \citenamefont
  {Rains}, \citenamefont {Shor}, \citenamefont {Smolin},\ and\ \citenamefont
  {Wootters}}]{Bennett1999}%
  \BibitemOpen
  \bibfield  {author} {\bibinfo {author} {\bibfnamefont {C.~H.}\ \bibnamefont
  {Bennett}}, \bibinfo {author} {\bibfnamefont {D.~P.}\ \bibnamefont
  {DiVincenzo}}, \bibinfo {author} {\bibfnamefont {C.~A.}\ \bibnamefont
  {Fuchs}}, \bibinfo {author} {\bibfnamefont {T.}~\bibnamefont {Mor}}, \bibinfo
  {author} {\bibfnamefont {E.}~\bibnamefont {Rains}}, \bibinfo {author}
  {\bibfnamefont {P.~W.}\ \bibnamefont {Shor}}, \bibinfo {author}
  {\bibfnamefont {J.~A.}\ \bibnamefont {Smolin}}, \ and\ \bibinfo {author}
  {\bibfnamefont {W.~K.}\ \bibnamefont {Wootters}},\ }\bibfield  {title}
  {\enquote {\bibinfo {title} {Quantum nonlocality without entanglement},}\
  }\href@noop {} {\bibfield  {journal} {\bibinfo  {journal} {Phys. Rev. A}\
  }\textbf {\bibinfo {volume} {59}},\ \bibinfo {pages} {1070} (\bibinfo {year}
  {1999}{\natexlab{a}})}\BibitemShut {NoStop}%
\bibitem [{\citenamefont {Walgate}\ and\ \citenamefont
  {Hardy}(2002)}]{Walgate2002}%
  \BibitemOpen
  \bibfield  {author} {\bibinfo {author} {\bibfnamefont {J.}~\bibnamefont
  {Walgate}}\ and\ \bibinfo {author} {\bibfnamefont {L.}~\bibnamefont
  {Hardy}},\ }\bibfield  {title} {\enquote {\bibinfo {title} {Nonlocality,
  asymmetry, and distinguishing bipartite states},}\ }\href@noop {} {\bibfield
  {journal} {\bibinfo  {journal} {Phys. Rev. Lett.}\ }\textbf {\bibinfo
  {volume} {89}},\ \bibinfo {pages} {147901} (\bibinfo {year}
  {2002})}\BibitemShut {NoStop}%
\bibitem [{\citenamefont {Wang}\ \emph {et~al.}(2015)\citenamefont {Wang},
  \citenamefont {Li}, \citenamefont {Zheng},\ and\ \citenamefont
  {Fei}}]{Wangyl2015}%
  \BibitemOpen
  \bibfield  {author} {\bibinfo {author} {\bibfnamefont {Y.~L.}\ \bibnamefont
  {Wang}}, \bibinfo {author} {\bibfnamefont {M.~S.}\ \bibnamefont {Li}},
  \bibinfo {author} {\bibfnamefont {Z.~J.}\ \bibnamefont {Zheng}}, \ and\
  \bibinfo {author} {\bibfnamefont {S.~M.}\ \bibnamefont {Fei}},\ }\bibfield
  {title} {\enquote {\bibinfo {title} {Nonlocality of orthogonal product-basis
  quantum states},}\ }\href@noop {} {\bibfield  {journal} {\bibinfo  {journal}
  {Phys. Rev. A}\ }\textbf {\bibinfo {volume} {92}},\ \bibinfo {pages} {032313}
  (\bibinfo {year} {2015})}\BibitemShut {NoStop}%
\bibitem [{\citenamefont {Zhang}\ \emph {et~al.}(2015)\citenamefont {Zhang},
  \citenamefont {Gao}, \citenamefont {Qin}, \citenamefont {Yang},\ and\
  \citenamefont {Wen}}]{Zhangzc2015}%
  \BibitemOpen
  \bibfield  {author} {\bibinfo {author} {\bibfnamefont {Z.~C.}\ \bibnamefont
  {Zhang}}, \bibinfo {author} {\bibfnamefont {F.}~\bibnamefont {Gao}}, \bibinfo
  {author} {\bibfnamefont {S.~J.}\ \bibnamefont {Qin}}, \bibinfo {author}
  {\bibfnamefont {Y.~H.}\ \bibnamefont {Yang}}, \ and\ \bibinfo {author}
  {\bibfnamefont {Q.~Y.}\ \bibnamefont {Wen}},\ }\bibfield  {title} {\enquote
  {\bibinfo {title} {Nonlocality of orthogonal product states},}\ }\href@noop
  {} {\bibfield  {journal} {\bibinfo  {journal} {Phys. Rev. A}\ }\textbf
  {\bibinfo {volume} {92}},\ \bibinfo {pages} {012332} (\bibinfo {year}
  {2015})}\BibitemShut {NoStop}%
\bibitem [{\citenamefont {Xu}\ \emph {et~al.}(2016)\citenamefont {Xu},
  \citenamefont {Wen}, \citenamefont {Qin}, \citenamefont {Yang},\ and\
  \citenamefont {Gao}}]{Xu2016}%
  \BibitemOpen
  \bibfield  {author} {\bibinfo {author} {\bibfnamefont {G.~B.}\ \bibnamefont
  {Xu}}, \bibinfo {author} {\bibfnamefont {Q.~Y.}\ \bibnamefont {Wen}},
  \bibinfo {author} {\bibfnamefont {S.~J.}\ \bibnamefont {Qin}}, \bibinfo
  {author} {\bibfnamefont {Y.~H.}\ \bibnamefont {Yang}}, \ and\ \bibinfo
  {author} {\bibfnamefont {F.}~\bibnamefont {Gao}},\ }\bibfield  {title}
  {\enquote {\bibinfo {title} {Quantum nonlocality of multipartite orthogonal
  product states},}\ }\href@noop {} {\bibfield  {journal} {\bibinfo  {journal}
  {Phys. Rev. A}\ }\textbf {\bibinfo {volume} {93}},\ \bibinfo {pages} {032341}
  (\bibinfo {year} {2016})}\BibitemShut {NoStop}%
\bibitem [{\citenamefont {Zhang}\ \emph {et~al.}(2016)\citenamefont {Zhang},
  \citenamefont {Gao}, \citenamefont {Cao}, \citenamefont {Qin},\ and\
  \citenamefont {Wen}}]{Zhangzc2016}%
  \BibitemOpen
  \bibfield  {author} {\bibinfo {author} {\bibfnamefont {Z.~C.}\ \bibnamefont
  {Zhang}}, \bibinfo {author} {\bibfnamefont {F.}~\bibnamefont {Gao}}, \bibinfo
  {author} {\bibfnamefont {Y.}~\bibnamefont {Cao}}, \bibinfo {author}
  {\bibfnamefont {S.~J.}\ \bibnamefont {Qin}}, \ and\ \bibinfo {author}
  {\bibfnamefont {Q.~Y.}\ \bibnamefont {Wen}},\ }\bibfield  {title} {\enquote
  {\bibinfo {title} {Local indistinguishability of orthogonal product
  states},}\ }\href@noop {} {\bibfield  {journal} {\bibinfo  {journal} {Phys.
  Rev. A}\ }\textbf {\bibinfo {volume} {93}},\ \bibinfo {pages} {012314}
  (\bibinfo {year} {2016})}\BibitemShut {NoStop}%
\bibitem [{\citenamefont {Wang}\ \emph
  {et~al.}(2017{\natexlab{a}})\citenamefont {Wang}, \citenamefont {Li},
  \citenamefont {Zheng},\ and\ \citenamefont {Fei}}]{Wangyl2017}%
  \BibitemOpen
  \bibfield  {author} {\bibinfo {author} {\bibfnamefont {Y.~L.}\ \bibnamefont
  {Wang}}, \bibinfo {author} {\bibfnamefont {M.~S.}\ \bibnamefont {Li}},
  \bibinfo {author} {\bibfnamefont {Z.~J.}\ \bibnamefont {Zheng}}, \ and\
  \bibinfo {author} {\bibfnamefont {S.~M.}\ \bibnamefont {Fei}},\ }\bibfield
  {title} {\enquote {\bibinfo {title} {The local indistinguishability of
  multipartite product states},}\ }\href@noop {} {\bibfield  {journal}
  {\bibinfo  {journal} {Quantum Inf. Process.}\ }\textbf {\bibinfo {volume}
  {16}},\ \bibinfo {pages} {5} (\bibinfo {year}
  {2017}{\natexlab{a}})}\BibitemShut {NoStop}%
\bibitem [{\citenamefont {Zhang}\ \emph {et~al.}(2017)\citenamefont {Zhang},
  \citenamefont {Zhang}, \citenamefont {Gao}, \citenamefont {Wen},\ and\
  \citenamefont {Oh}}]{Zhangzc2017}%
  \BibitemOpen
  \bibfield  {author} {\bibinfo {author} {\bibfnamefont {Z.~C.}\ \bibnamefont
  {Zhang}}, \bibinfo {author} {\bibfnamefont {K.~J.}\ \bibnamefont {Zhang}},
  \bibinfo {author} {\bibfnamefont {F.}~\bibnamefont {Gao}}, \bibinfo {author}
  {\bibfnamefont {Q.~Y.}\ \bibnamefont {Wen}}, \ and\ \bibinfo {author}
  {\bibfnamefont {C.~H.}\ \bibnamefont {Oh}},\ }\bibfield  {title} {\enquote
  {\bibinfo {title} {Construction of nonlocal multipartite quantum states},}\
  }\href@noop {} {\bibfield  {journal} {\bibinfo  {journal} {Phys. Rev. A}\
  }\textbf {\bibinfo {volume} {95}},\ \bibinfo {pages} {052344} (\bibinfo
  {year} {2017})}\BibitemShut {NoStop}%
\bibitem [{\citenamefont {Halder}(2018)}]{Halder2018}%
  \BibitemOpen
  \bibfield  {author} {\bibinfo {author} {\bibfnamefont {S.}~\bibnamefont
  {Halder}},\ }\bibfield  {title} {\enquote {\bibinfo {title} {Several nonlocal
  sets of multipartite pure orthogonal product states},}\ }\href@noop {}
  {\bibfield  {journal} {\bibinfo  {journal} {Phys. Rev. A}\ }\textbf {\bibinfo
  {volume} {98}},\ \bibinfo {pages} {022303} (\bibinfo {year}
  {2018})}\BibitemShut {NoStop}%
\bibitem [{\citenamefont {Jiang}\ and\ \citenamefont {Xu}(2020)}]{Jiang2020}%
  \BibitemOpen
  \bibfield  {author} {\bibinfo {author} {\bibfnamefont {D.~H.}\ \bibnamefont
  {Jiang}}\ and\ \bibinfo {author} {\bibfnamefont {G.~B.}\ \bibnamefont {Xu}},\
  }\bibfield  {title} {\enquote {\bibinfo {title} {Nonlocal sets of orthogonal
  product states in an arbitrary multipartite quantum system},}\ }\href@noop {}
  {\bibfield  {journal} {\bibinfo  {journal} {Phys. Rev. A}\ }\textbf {\bibinfo
  {volume} {102}},\ \bibinfo {pages} {032211} (\bibinfo {year}
  {2020})}\BibitemShut {NoStop}%
\bibitem [{\citenamefont {Zuo}\ \emph {et~al.}(2021)\citenamefont {Zuo},
  \citenamefont {Liu}, \citenamefont {Zhen},\ and\ \citenamefont
  {Fei}}]{Zuo2022}%
  \BibitemOpen
  \bibfield  {author} {\bibinfo {author} {\bibfnamefont {H.~J.}\ \bibnamefont
  {Zuo}}, \bibinfo {author} {\bibfnamefont {J.~H.}\ \bibnamefont {Liu}},
  \bibinfo {author} {\bibfnamefont {X.~F.}\ \bibnamefont {Zhen}}, \ and\
  \bibinfo {author} {\bibfnamefont {S.~M.}\ \bibnamefont {Fei}},\ }\bibfield
  {title} {\enquote {\bibinfo {title} {Nonlocal sets of orthogonal multipartite
  product states with less members},}\ }\href@noop {} {\bibfield  {journal}
  {\bibinfo  {journal} {Quantum Inf. Process.}\ }\textbf {\bibinfo {volume}
  {20}},\ \bibinfo {pages} {382} (\bibinfo {year} {2021})}\BibitemShut
  {NoStop}%
\bibitem [{\citenamefont {Zhu}\ \emph {et~al.}(2022)\citenamefont {Zhu},
  \citenamefont {Jiang}, \citenamefont {Liang}, \citenamefont {Xu},\ and\
  \citenamefont {Yang}}]{Zhu2022}%
  \BibitemOpen
  \bibfield  {author} {\bibinfo {author} {\bibfnamefont {Y.~Y.}\ \bibnamefont
  {Zhu}}, \bibinfo {author} {\bibfnamefont {D.~H.}\ \bibnamefont {Jiang}},
  \bibinfo {author} {\bibfnamefont {X.~Q.}\ \bibnamefont {Liang}}, \bibinfo
  {author} {\bibfnamefont {G.~B.}\ \bibnamefont {Xu}}, \ and\ \bibinfo {author}
  {\bibfnamefont {Y.~G.}\ \bibnamefont {Yang}},\ }\bibfield  {title} {\enquote
  {\bibinfo {title} {Nonlocal sets of orthogonal product states with the less
  amount of elements in tripartite quantum systems},}\ }\href@noop {}
  {\bibfield  {journal} {\bibinfo  {journal} {Quantum Inf. Process.}\ }\textbf
  {\bibinfo {volume} {21}},\ \bibinfo {pages} {252} (\bibinfo {year}
  {2022})}\BibitemShut {NoStop}%
\bibitem [{\citenamefont {Zhen}, \citenamefont {Fei},\ and\ \citenamefont
  {Zuo}(2022)}]{Zhen2022}%
  \BibitemOpen
  \bibfield  {author} {\bibinfo {author} {\bibfnamefont {X.~F.}\ \bibnamefont
  {Zhen}}, \bibinfo {author} {\bibfnamefont {S.~M.}\ \bibnamefont {Fei}}, \
  and\ \bibinfo {author} {\bibfnamefont {H.~J.}\ \bibnamefont {Zuo}},\
  }\bibfield  {title} {\enquote {\bibinfo {title} {Nonlocality without
  entanglement in general multipartite quantum systems},}\ }\href@noop {}
  {\bibfield  {journal} {\bibinfo  {journal} {Phys. Rev. A}\ }\textbf {\bibinfo
  {volume} {106}},\ \bibinfo {pages} {062432} (\bibinfo {year}
  {2022})}\BibitemShut {NoStop}%
\bibitem [{\citenamefont {Cao}, \citenamefont {Li},\ and\ \citenamefont
  {Zuo}(2023)}]{Cao2023}%
  \BibitemOpen
  \bibfield  {author} {\bibinfo {author} {\bibfnamefont {H.~Q.}\ \bibnamefont
  {Cao}}, \bibinfo {author} {\bibfnamefont {M.~S.}\ \bibnamefont {Li}}, \ and\
  \bibinfo {author} {\bibfnamefont {H.~J.}\ \bibnamefont {Zuo}},\ }\bibfield
  {title} {\enquote {\bibinfo {title} {Locally stable sets with minimum
  cardinality},}\ }\href@noop {} {\bibfield  {journal} {\bibinfo  {journal}
  {Phys. Rev. A}\ }\textbf {\bibinfo {volume} {108}},\ \bibinfo {pages}
  {012418} (\bibinfo {year} {2023})}\BibitemShut {NoStop}%
\bibitem [{\citenamefont {Li}\ and\ \citenamefont {Wang}(2023)}]{Li2023}%
  \BibitemOpen
  \bibfield  {author} {\bibinfo {author} {\bibfnamefont {M.~S.}\ \bibnamefont
  {Li}}\ and\ \bibinfo {author} {\bibfnamefont {Y.~L.}\ \bibnamefont {Wang}},\
  }\bibfield  {title} {\enquote {\bibinfo {title} {Bounds on the smallest sets
  of quantum states with special quantum nonlocality},}\ }\href@noop {}
  {\bibfield  {journal} {\bibinfo  {journal} {Quantum}\ }\textbf {\bibinfo
  {volume} {7}},\ \bibinfo {pages} {1101} (\bibinfo {year} {2023})}\BibitemShut
  {NoStop}%
\bibitem [{\citenamefont {Guo}\ \emph {et~al.}(2001)\citenamefont {Guo},
  \citenamefont {Li}, \citenamefont {Shi}, \citenamefont {Li},\ and\
  \citenamefont {Guo}}]{Guo2001}%
  \BibitemOpen
  \bibfield  {author} {\bibinfo {author} {\bibfnamefont {G.~P.}\ \bibnamefont
  {Guo}}, \bibinfo {author} {\bibfnamefont {C.~F.}\ \bibnamefont {Li}},
  \bibinfo {author} {\bibfnamefont {B.~S.}\ \bibnamefont {Shi}}, \bibinfo
  {author} {\bibfnamefont {J.}~\bibnamefont {Li}}, \ and\ \bibinfo {author}
  {\bibfnamefont {G.~C.}\ \bibnamefont {Guo}},\ }\bibfield  {title} {\enquote
  {\bibinfo {title} {Quantum key distribution scheme with orthogonal product
  states},}\ }\href@noop {} {\bibfield  {journal} {\bibinfo  {journal} {Phys.
  Rev. A}\ }\textbf {\bibinfo {volume} {64}},\ \bibinfo {pages} {042301}
  (\bibinfo {year} {2001})}\BibitemShut {NoStop}%
\bibitem [{\citenamefont {Rahaman}\ and\ \citenamefont
  {Parker}(2015)}]{Rahaman2015}%
  \BibitemOpen
  \bibfield  {author} {\bibinfo {author} {\bibfnamefont {R.}~\bibnamefont
  {Rahaman}}\ and\ \bibinfo {author} {\bibfnamefont {M.~G.}\ \bibnamefont
  {Parker}},\ }\bibfield  {title} {\enquote {\bibinfo {title} {Quantum scheme
  for secret sharing based on local distinguishability},}\ }\href@noop {}
  {\bibfield  {journal} {\bibinfo  {journal} {Phys. Rev. A}\ }\textbf {\bibinfo
  {volume} {91}},\ \bibinfo {pages} {022330} (\bibinfo {year}
  {2015})}\BibitemShut {NoStop}%
\bibitem [{\citenamefont {Wang}\ \emph
  {et~al.}(2017{\natexlab{b}})\citenamefont {Wang}, \citenamefont {Li},
  \citenamefont {Peng},\ and\ \citenamefont {Yang}}]{Wang2017}%
  \BibitemOpen
  \bibfield  {author} {\bibinfo {author} {\bibfnamefont {J.~T.}\ \bibnamefont
  {Wang}}, \bibinfo {author} {\bibfnamefont {L.~X.}\ \bibnamefont {Li}},
  \bibinfo {author} {\bibfnamefont {H.~P.}\ \bibnamefont {Peng}}, \ and\
  \bibinfo {author} {\bibfnamefont {Y.~X.}\ \bibnamefont {Yang}},\ }\bibfield
  {title} {\enquote {\bibinfo {title} {Quantum-secret-sharing scheme based on
  local distinguishability of orthogonal multiqudit entangled states},}\
  }\href@noop {} {\bibfield  {journal} {\bibinfo  {journal} {Phys. Rev. A}\
  }\textbf {\bibinfo {volume} {95}},\ \bibinfo {pages} {022320} (\bibinfo
  {year} {2017}{\natexlab{b}})}\BibitemShut {NoStop}%
\bibitem [{\citenamefont {Yang}\ \emph {et~al.}(2015)\citenamefont {Yang},
  \citenamefont {Gao}, \citenamefont {Wu}, \citenamefont {Qin}, \citenamefont
  {Zuo},\ and\ \citenamefont {Wen}}]{Yang2015}%
  \BibitemOpen
  \bibfield  {author} {\bibinfo {author} {\bibfnamefont {Y.~H.}\ \bibnamefont
  {Yang}}, \bibinfo {author} {\bibfnamefont {F.}~\bibnamefont {Gao}}, \bibinfo
  {author} {\bibfnamefont {X.}~\bibnamefont {Wu}}, \bibinfo {author}
  {\bibfnamefont {S.~J.}\ \bibnamefont {Qin}}, \bibinfo {author} {\bibfnamefont
  {H.~J.}\ \bibnamefont {Zuo}}, \ and\ \bibinfo {author} {\bibfnamefont
  {Q.~Y.}\ \bibnamefont {Wen}},\ }\bibfield  {title} {\enquote {\bibinfo
  {title} {Quantum secret sharing via local operations and classical
  communication.}}\ }\href@noop {} {\bibfield  {journal} {\bibinfo  {journal}
  {Sci. Rep}\ }\textbf {\bibinfo {volume} {5}},\ \bibinfo {pages} {16967}
  (\bibinfo {year} {2015})}\BibitemShut {NoStop}%
\bibitem [{\citenamefont {Augusiak}\ \emph {et~al.}(2011)\citenamefont
  {Augusiak}, \citenamefont {Stasinska}, \citenamefont {Hadley}, \citenamefont
  {Korbicz}, \citenamefont {Lewenstein},\ and\ \citenamefont
  {Acin}}]{Augusiak}%
  \BibitemOpen
  \bibfield  {author} {\bibinfo {author} {\bibfnamefont {R.}~\bibnamefont
  {Augusiak}}, \bibinfo {author} {\bibfnamefont {J.}~\bibnamefont {Stasinska}},
  \bibinfo {author} {\bibfnamefont {C.}~\bibnamefont {Hadley}}, \bibinfo
  {author} {\bibfnamefont {J.}~\bibnamefont {Korbicz}}, \bibinfo {author}
  {\bibfnamefont {M.}~\bibnamefont {Lewenstein}}, \ and\ \bibinfo {author}
  {\bibfnamefont {A.}~\bibnamefont {Acin}},\ }\bibfield  {title} {\enquote
  {\bibinfo {title} {Bell inequalities with no quantum violation and
  unextendable product bases},}\ }\href@noop {} {\bibfield  {journal} {\bibinfo
   {journal} {Phys. Rev. Lett.}\ }\textbf {\bibinfo {volume} {107}},\ \bibinfo
  {pages} {070401} (\bibinfo {year} {2011})}\BibitemShut {NoStop}%
\bibitem [{\citenamefont {Chen}\ and\ \citenamefont
  {Dokovic}(2011)}]{Chen2011}%
  \BibitemOpen
  \bibfield  {author} {\bibinfo {author} {\bibfnamefont {L.}~\bibnamefont
  {Chen}}\ and\ \bibinfo {author} {\bibfnamefont {D.~Z.}\ \bibnamefont
  {Dokovic}},\ }\bibfield  {title} {\enquote {\bibinfo {title} {Description of
  rank four entangled states of two qutrits having positive partial
  transpose},}\ }\href@noop {} {\bibfield  {journal} {\bibinfo  {journal} {J.
  Math. Phys.}\ }\textbf {\bibinfo {volume} {52}},\ \bibinfo {pages} {122203}
  (\bibinfo {year} {2011})}\BibitemShut {NoStop}%
\bibitem [{\citenamefont {Johnston}(2013{\natexlab{a}})}]{Johnston2013PRA}%
  \BibitemOpen
  \bibfield  {author} {\bibinfo {author} {\bibfnamefont {N.}~\bibnamefont
  {Johnston}},\ }\bibfield  {title} {\enquote {\bibinfo {title}
  {Non-positive-partial-transpose subspaces can be as large as any entangled
  subspace},}\ }\href@noop {} {\bibfield  {journal} {\bibinfo  {journal} {Phys.
  Rev. A}\ }\textbf {\bibinfo {volume} {87}},\ \bibinfo {pages} {064302}
  (\bibinfo {year} {2013}{\natexlab{a}})}\BibitemShut {NoStop}%
\bibitem [{\citenamefont {Chen}, \citenamefont {Chen},\ and\ \citenamefont
  {Zeng}(2014)}]{Chen}%
  \BibitemOpen
  \bibfield  {author} {\bibinfo {author} {\bibfnamefont {J.~X.}\ \bibnamefont
  {Chen}}, \bibinfo {author} {\bibfnamefont {L.}~\bibnamefont {Chen}}, \ and\
  \bibinfo {author} {\bibfnamefont {B.}~\bibnamefont {Zeng}},\ }\bibfield
  {title} {\enquote {\bibinfo {title} {Unextendible product basis for fermionic
  systems},}\ }\href@noop {} {\bibfield  {journal} {\bibinfo  {journal} {J.
  Math. Phys.}\ }\textbf {\bibinfo {volume} {55}},\ \bibinfo {pages} {082207}
  (\bibinfo {year} {2014})}\BibitemShut {NoStop}%
\bibitem [{\citenamefont {Chen}\ and\ \citenamefont
  {Dokovic}(2016)}]{Chen2016}%
  \BibitemOpen
  \bibfield  {author} {\bibinfo {author} {\bibfnamefont {L.}~\bibnamefont
  {Chen}}\ and\ \bibinfo {author} {\bibfnamefont {D.~Z.}\ \bibnamefont
  {Dokovic}},\ }\bibfield  {title} {\enquote {\bibinfo {title} {Distillability
  of non-positive-partial-transpose bipartite quantum states of rank four},}\
  }\href@noop {} {\bibfield  {journal} {\bibinfo  {journal} {Phys. Rev. A}\
  }\textbf {\bibinfo {volume} {94}},\ \bibinfo {pages} {052318} (\bibinfo
  {year} {2016})}\BibitemShut {NoStop}%
\bibitem [{\citenamefont {Demianowicz}\ and\ \citenamefont
  {Augusiak}(2018)}]{Demianowicz}%
  \BibitemOpen
  \bibfield  {author} {\bibinfo {author} {\bibfnamefont {M.}~\bibnamefont
  {Demianowicz}}\ and\ \bibinfo {author} {\bibfnamefont {R.}~\bibnamefont
  {Augusiak}},\ }\bibfield  {title} {\enquote {\bibinfo {title} {From
  unextendible product bases to genuinely entangled subspaces},}\ }\href@noop
  {} {\bibfield  {journal} {\bibinfo  {journal} {Phys. Rev. A}\ }\textbf
  {\bibinfo {volume} {98}},\ \bibinfo {pages} {012313} (\bibinfo {year}
  {2018})}\BibitemShut {NoStop}%
\bibitem [{\citenamefont {Nawareg}(2020)}]{Nawareg}%
  \BibitemOpen
  \bibfield  {author} {\bibinfo {author} {\bibfnamefont {M.}~\bibnamefont
  {Nawareg}},\ }\bibfield  {title} {\enquote {\bibinfo {title} {Concurrence of
  multiqubit bound entangled states constructed from unextendible product
  bases},}\ }\href@noop {} {\bibfield  {journal} {\bibinfo  {journal} {Phys.
  Rev. A}\ }\textbf {\bibinfo {volume} {101}},\ \bibinfo {pages} {032342}
  (\bibinfo {year} {2020})}\BibitemShut {NoStop}%
\bibitem [{\citenamefont {DiVincenzo}\ \emph {et~al.}(2003)\citenamefont
  {DiVincenzo}, \citenamefont {Mor}, \citenamefont {Shor}, \citenamefont
  {Smolin},\ and\ \citenamefont {Terhal}}]{DiVincenzo2003CPM}%
  \BibitemOpen
  \bibfield  {author} {\bibinfo {author} {\bibfnamefont {D.~P.}\ \bibnamefont
  {DiVincenzo}}, \bibinfo {author} {\bibfnamefont {T.}~\bibnamefont {Mor}},
  \bibinfo {author} {\bibfnamefont {P.~W.}\ \bibnamefont {Shor}}, \bibinfo
  {author} {\bibfnamefont {J.~A.}\ \bibnamefont {Smolin}}, \ and\ \bibinfo
  {author} {\bibfnamefont {B.~M.}\ \bibnamefont {Terhal}},\ }\bibfield  {title}
  {\enquote {\bibinfo {title} {Unextendible product bases, uncompletable
  product bases and bound entanglement},}\ }\href@noop {} {\bibfield  {journal}
  {\bibinfo  {journal} {Commun. Math. Phys.}\ }\textbf {\bibinfo {volume}
  {238}},\ \bibinfo {pages} {379--410} (\bibinfo {year} {2003})}\BibitemShut
  {NoStop}%
\bibitem [{\citenamefont {Shi}, \citenamefont {Zhang},\ and\ \citenamefont
  {Chen}(2020)}]{ShiPRAUPB}%
  \BibitemOpen
  \bibfield  {author} {\bibinfo {author} {\bibfnamefont {F.}~\bibnamefont
  {Shi}}, \bibinfo {author} {\bibfnamefont {X.~D.}\ \bibnamefont {Zhang}}, \
  and\ \bibinfo {author} {\bibfnamefont {L.}~\bibnamefont {Chen}},\ }\bibfield
  {title} {\enquote {\bibinfo {title} {Unextendible product bases from tile
  structures and their local entanglement-assisted distinguishability},}\
  }\href@noop {} {\bibfield  {journal} {\bibinfo  {journal} {Phys. Rev. A}\
  }\textbf {\bibinfo {volume} {101}},\ \bibinfo {pages} {062329} (\bibinfo
  {year} {2020})}\BibitemShut {NoStop}%
\bibitem [{\citenamefont {Shi}\ \emph {et~al.}(2022)\citenamefont {Shi},
  \citenamefont {Li}, \citenamefont {Zhang},\ and\ \citenamefont
  {Zhao}}]{Shi2022NJP}%
  \BibitemOpen
  \bibfield  {author} {\bibinfo {author} {\bibfnamefont {F.}~\bibnamefont
  {Shi}}, \bibinfo {author} {\bibfnamefont {M.~S.}\ \bibnamefont {Li}},
  \bibinfo {author} {\bibfnamefont {X.~D.}\ \bibnamefont {Zhang}}, \ and\
  \bibinfo {author} {\bibfnamefont {Q.}~\bibnamefont {Zhao}},\ }\bibfield
  {title} {\enquote {\bibinfo {title} {Unextendible and uncompletable product
  bases in every bipartition},}\ }\href@noop {} {\bibfield  {journal} {\bibinfo
   {journal} {New J. Phys.}\ }\textbf {\bibinfo {volume} {24}},\ \bibinfo
  {pages} {113025} (\bibinfo {year} {2022})}\BibitemShut {NoStop}%
\bibitem [{\citenamefont {Bennett}\ \emph
  {et~al.}(1999{\natexlab{b}})\citenamefont {Bennett}, \citenamefont
  {DiVincenzo}, \citenamefont {Mor}, \citenamefont {Shor}, \citenamefont
  {Smolin},\ and\ \citenamefont {Terhal}}]{Bennett1999UPB}%
  \BibitemOpen
  \bibfield  {author} {\bibinfo {author} {\bibfnamefont {C.~H.}\ \bibnamefont
  {Bennett}}, \bibinfo {author} {\bibfnamefont {D.~P.}\ \bibnamefont
  {DiVincenzo}}, \bibinfo {author} {\bibfnamefont {T.}~\bibnamefont {Mor}},
  \bibinfo {author} {\bibfnamefont {P.~W.}\ \bibnamefont {Shor}}, \bibinfo
  {author} {\bibfnamefont {J.~A.}\ \bibnamefont {Smolin}}, \ and\ \bibinfo
  {author} {\bibfnamefont {B.~M.}\ \bibnamefont {Terhal}},\ }\bibfield  {title}
  {\enquote {\bibinfo {title} {Unextendible product bases and bound
  entanglement},}\ }\href@noop {} {\bibfield  {journal} {\bibinfo  {journal}
  {Phys. Rev. Lett.}\ }\textbf {\bibinfo {volume} {82}},\ \bibinfo {pages}
  {5385} (\bibinfo {year} {1999}{\natexlab{b}})}\BibitemShut {NoStop}%
\bibitem [{\citenamefont {Alon}\ and\ \citenamefont {Lovasz}(2001)}]{Alon2001}%
  \BibitemOpen
  \bibfield  {author} {\bibinfo {author} {\bibfnamefont {N.}~\bibnamefont
  {Alon}}\ and\ \bibinfo {author} {\bibfnamefont {L.}~\bibnamefont {Lovasz}},\
  }\bibfield  {title} {\enquote {\bibinfo {title} {Unextendible product
  bases},}\ }\href@noop {} {\bibfield  {journal} {\bibinfo  {journal} {J. Comb.
  Theory. Ser. A}\ }\textbf {\bibinfo {volume} {95}},\ \bibinfo {pages}
  {169--179} (\bibinfo {year} {2001})}\BibitemShut {NoStop}%
\bibitem [{\citenamefont {Bravyi}(2004)}]{Bravyi}%
  \BibitemOpen
  \bibfield  {author} {\bibinfo {author} {\bibfnamefont {S.}~\bibnamefont
  {Bravyi}},\ }\bibfield  {title} {\enquote {\bibinfo {title} {Unextendible
  product bases and locally unconvertible bound entangled states},}\
  }\href@noop {} {\bibfield  {journal} {\bibinfo  {journal} {Quantum Inf.
  Process.}\ ,\ \bibinfo {pages} {309--329}} (\bibinfo {year}
  {2004})}\BibitemShut {NoStop}%
\bibitem [{\citenamefont {Feng}(2006)}]{Feng2006}%
  \BibitemOpen
  \bibfield  {author} {\bibinfo {author} {\bibfnamefont {K.~Q.}\ \bibnamefont
  {Feng}},\ }\bibfield  {title} {\enquote {\bibinfo {title} {Unextendible
  product bases and 1-factorization of complete graphs},}\ }\href@noop {}
  {\bibfield  {journal} {\bibinfo  {journal} {Discrete Appl. Math.}\ }\textbf
  {\bibinfo {volume} {154}},\ \bibinfo {pages} {942--949} (\bibinfo {year}
  {2006})}\BibitemShut {NoStop}%
\bibitem [{\citenamefont {Johnston}(2013{\natexlab{b}})}]{Johnston2013}%
  \BibitemOpen
  \bibfield  {author} {\bibinfo {author} {\bibfnamefont {N.}~\bibnamefont
  {Johnston}},\ }\bibfield  {title} {\enquote {\bibinfo {title} {The minimum
  size of qubit unextendible product bases},}\ }in\ \href@noop {} {\emph
  {\bibinfo {booktitle} {Proceedings of the 8th Conference on the Theory of
  Quantum Computation, Communication and Cryptography (TQC 2013)}}}\ (\bibinfo
  {year} {2013})\BibitemShut {NoStop}%
\bibitem [{\citenamefont {Johnston}(2014)}]{Johnston2014}%
  \BibitemOpen
  \bibfield  {author} {\bibinfo {author} {\bibfnamefont {N.}~\bibnamefont
  {Johnston}},\ }\bibfield  {title} {\enquote {\bibinfo {title} {The structure
  of qubit unextendible product bases},}\ }\href@noop {} {\bibfield  {journal}
  {\bibinfo  {journal} {J. Phys. A: Math. Theor.}\ }\textbf {\bibinfo {volume}
  {47}},\ \bibinfo {pages} {424034} (\bibinfo {year} {2014})}\BibitemShut
  {NoStop}%
\bibitem [{\citenamefont {Chen}\ and\ \citenamefont
  {Dokovic}(2013)}]{Chen2013}%
  \BibitemOpen
  \bibfield  {author} {\bibinfo {author} {\bibfnamefont {L.}~\bibnamefont
  {Chen}}\ and\ \bibinfo {author} {\bibfnamefont {D.~Z.}\ \bibnamefont
  {Dokovic}},\ }\bibfield  {title} {\enquote {\bibinfo {title} {Separability
  problem for multipartite states of rank at most 4},}\ }\href@noop {}
  {\bibfield  {journal} {\bibinfo  {journal} {J. Phys. A: Math. Theor.}\
  }\textbf {\bibinfo {volume} {46}},\ \bibinfo {pages} {275304} (\bibinfo
  {year} {2013})}\BibitemShut {NoStop}%
\bibitem [{\citenamefont {Chen}\ and\ \citenamefont
  {Johnston}(2015)}]{Chen2015}%
  \BibitemOpen
  \bibfield  {author} {\bibinfo {author} {\bibfnamefont {J.~X.}\ \bibnamefont
  {Chen}}\ and\ \bibinfo {author} {\bibfnamefont {N.}~\bibnamefont
  {Johnston}},\ }\bibfield  {title} {\enquote {\bibinfo {title} {The minimum
  size of unextendible product bases in the bipartite case (and some
  multipartite cases)},}\ }\href@noop {} {\bibfield  {journal} {\bibinfo
  {journal} {Commun. Math. Phys.}\ }\textbf {\bibinfo {volume} {333}},\
  \bibinfo {pages} {351--365} (\bibinfo {year} {2015})}\BibitemShut {NoStop}%
\bibitem [{\citenamefont {Chen}\ and\ \citenamefont
  {Dokovic}(2018)}]{Chen2018JPA}%
  \BibitemOpen
  \bibfield  {author} {\bibinfo {author} {\bibfnamefont {L.}~\bibnamefont
  {Chen}}\ and\ \bibinfo {author} {\bibfnamefont {D.~Z.}\ \bibnamefont
  {Dokovic}},\ }\bibfield  {title} {\enquote {\bibinfo {title} {Multiqubit upb:
  The method of formally orthogonal matrices},}\ }\href@noop {} {\bibfield
  {journal} {\bibinfo  {journal} {J. Phys. A: Math. Theor.}\ }\textbf {\bibinfo
  {volume} {51}},\ \bibinfo {pages} {265302} (\bibinfo {year}
  {2018})}\BibitemShut {NoStop}%
\bibitem [{\citenamefont {Chen}\ and\ \citenamefont
  {Dokovic}(2019)}]{Chen2019QIP}%
  \BibitemOpen
  \bibfield  {author} {\bibinfo {author} {\bibfnamefont {L.}~\bibnamefont
  {Chen}}\ and\ \bibinfo {author} {\bibfnamefont {D.~Z.}\ \bibnamefont
  {Dokovic}},\ }\bibfield  {title} {\enquote {\bibinfo {title} {The
  unextendible product bases of four qubits: Hasse diagrams},}\ }\href@noop {}
  {\bibfield  {journal} {\bibinfo  {journal} {Quantum Inf. Process.}\ }\textbf
  {\bibinfo {volume} {18}},\ \bibinfo {pages} {143} (\bibinfo {year}
  {2019})}\BibitemShut {NoStop}%
\bibitem [{\citenamefont {Wang}\ and\ \citenamefont
  {Chen}(2020)}]{Wang2020QIP}%
  \BibitemOpen
  \bibfield  {author} {\bibinfo {author} {\bibfnamefont {K.}~\bibnamefont
  {Wang}}\ and\ \bibinfo {author} {\bibfnamefont {L.}~\bibnamefont {Chen}},\
  }\bibfield  {title} {\enquote {\bibinfo {title} {The construction of 7-qubit
  unextendible product bases of size ten},}\ }\href@noop {} {\bibfield
  {journal} {\bibinfo  {journal} {Quantum Inf. Process.}\ }\textbf {\bibinfo
  {volume} {19}},\ \bibinfo {pages} {185} (\bibinfo {year} {2020})}\BibitemShut
  {NoStop}%
\bibitem [{\citenamefont {Chen}\ \emph {et~al.}(2021)\citenamefont {Chen},
  \citenamefont {Wang}, \citenamefont {Shen}, \citenamefont {Sun},\ and\
  \citenamefont {Zhao}}]{Wang2019}%
  \BibitemOpen
  \bibfield  {author} {\bibinfo {author} {\bibfnamefont {L.}~\bibnamefont
  {Chen}}, \bibinfo {author} {\bibfnamefont {K.}~\bibnamefont {Wang}}, \bibinfo
  {author} {\bibfnamefont {Y.}~\bibnamefont {Shen}}, \bibinfo {author}
  {\bibfnamefont {Y.~Z.}\ \bibnamefont {Sun}}, \ and\ \bibinfo {author}
  {\bibfnamefont {L.~J.}\ \bibnamefont {Zhao}},\ }\bibfield  {title} {\enquote
  {\bibinfo {title} {Constructing $2\otimes 2\otimes4$ and $4\otimes4$
  unextendible product bases and positive-partial-transpose entangled
  states},}\ }\href@noop {} {\bibfield  {journal} {\bibinfo  {journal} {Linear
  Multilinear Algebra}\ }\textbf {\bibinfo {volume} {69}},\ \bibinfo {pages}
  {131--146} (\bibinfo {year} {2021})}\BibitemShut {NoStop}%
\bibitem [{\citenamefont {Agrawal}, \citenamefont {Halder},\ and\ \citenamefont
  {Banik}(2019)}]{AgrawalPRA}%
  \BibitemOpen
  \bibfield  {author} {\bibinfo {author} {\bibfnamefont {S.}~\bibnamefont
  {Agrawal}}, \bibinfo {author} {\bibfnamefont {S.}~\bibnamefont {Halder}}, \
  and\ \bibinfo {author} {\bibfnamefont {M.}~\bibnamefont {Banik}},\ }\bibfield
   {title} {\enquote {\bibinfo {title} {Genuinely entangled subspace with
  all-encompassing distillable entanglement across every bipartition},}\
  }\href@noop {} {\bibfield  {journal} {\bibinfo  {journal} {Phys. Rev. A}\
  }\textbf {\bibinfo {volume} {99}},\ \bibinfo {pages} {032335} (\bibinfo
  {year} {2019})}\BibitemShut {NoStop}%
\bibitem [{\citenamefont {Shi}\ \emph {et~al.}(2021)\citenamefont {Shi},
  \citenamefont {Li}, \citenamefont {Chen},\ and\ \citenamefont
  {Zhang}}]{ShiJPA}%
  \BibitemOpen
  \bibfield  {author} {\bibinfo {author} {\bibfnamefont {F.}~\bibnamefont
  {Shi}}, \bibinfo {author} {\bibfnamefont {M.~S.}\ \bibnamefont {Li}},
  \bibinfo {author} {\bibfnamefont {L.}~\bibnamefont {Chen}}, \ and\ \bibinfo
  {author} {\bibfnamefont {X.~D.}\ \bibnamefont {Zhang}},\ }\bibfield  {title}
  {\enquote {\bibinfo {title} {Strong quantum nonlocality for unextendible
  product bases in heterogeneous systems},}\ }\href@noop {} {\bibfield
  {journal} {\bibinfo  {journal} {J. Phys. A: Math. Theor}\ }\textbf {\bibinfo
  {volume} {55}},\ \bibinfo {pages} {015305} (\bibinfo {year}
  {2021})}\BibitemShut {NoStop}%
\bibitem [{\citenamefont {Che}\ \emph {et~al.}(2022)\citenamefont {Che},
  \citenamefont {Dou}, \citenamefont {Chen}, \citenamefont {Yang},
  \citenamefont {Li},\ and\ \citenamefont {Yang}}]{Che2022}%
  \BibitemOpen
  \bibfield  {author} {\bibinfo {author} {\bibfnamefont {B.~C.}\ \bibnamefont
  {Che}}, \bibinfo {author} {\bibfnamefont {Z.}~\bibnamefont {Dou}}, \bibinfo
  {author} {\bibfnamefont {X.~B.}\ \bibnamefont {Chen}}, \bibinfo {author}
  {\bibfnamefont {Y.}~\bibnamefont {Yang}}, \bibinfo {author} {\bibfnamefont
  {J.}~\bibnamefont {Li}}, \ and\ \bibinfo {author} {\bibfnamefont {Y.~X.}\
  \bibnamefont {Yang}},\ }\bibfield  {title} {\enquote {\bibinfo {title}
  {Constructing the three-qudit unextendible product bases with strong
  nonlocality},}\ }\href@noop {} {\bibfield  {journal} {\bibinfo  {journal}
  {Chin. Phys. B}\ }\textbf {\bibinfo {volume} {31}},\ \bibinfo {pages}
  {060302} (\bibinfo {year} {2022})}\BibitemShut {NoStop}%
\end{thebibliography}%

\end{document}